\tikzset{->, >=stealth', initial text=$ $}
\newcommand{\moritz}[1]{}
\newtheorem{theorem}{Theorem}
\newtheorem{lemma}[theorem]{Lemma}
\newtheorem{proposition}[theorem]{Proposition}
\newtheorem{corollary}[theorem]{Corollary}
\theoremstyle{definition}
\newtheorem{definition}[theorem]{Definition}
\newtheorem{example}[theorem]{Example}
\newtheorem{remark}[theorem]{Remark}
\newcommand{\PTIME}{\mathsf{PTIME} }
\newcommand{\W}{\mathsf{W} }
\newcommand{\FPT}{\mathsf{FPT} }
\newcommand{\PSPACE}{\mathsf{PSPACE} }
\newcommand{\LTL}{\mathsf{LTL} }
\newcommand{\MSO}{\mathsf{MSO} }
\newcommand{\FO}{\mathsf{FO} }
\newcommand{\MTL}{\mathsf{MTL} }
\newcommand{\TPTL}{\mathsf{TPTL} }
\newcommand{\TCTL}{\mathsf{TCTL} }
\newcommand{\MITL}{\mathsf{MITL} }
\newcommand{\SWA}{\mathsf{SWA} }
\newcommand{\N}{\mathbb{N}}
\newcommand{\R}{\mathbb{R}}
\newcommand{\A}{\mathbb{A}}
\newcommand{\B}{\mathbb{B}}
\newcommand{\G}{\mathcal{G}}
\newcommand{\Act}{\mathcal{A}}
\newcommand{\TS}{\mathit{TS}}
\renewcommand{\cal}{\mathcal }
\renewcommand{\le}{\leqslant}
\renewcommand{\ge}{\geqslant}
\begin{document}

\title{ Model-checking in the Foundations of  Algorithmic Law\\ and the Case of Regulation 561}

\author[1]{Moritz M\"uller}
\author[2]{Joost J. Joosten}
\affil[1]{University of Passau}
\affil[2]{University of Barcelona}
\maketitle

\begin{abstract}
We discuss model-checking problems as formal models of algorithmic law. Specifically, we ask for an algorithmically tractable general purpose model-checking problem that naturally models the European transport Regulation 561 (\cite{Reg561}), and discuss the reaches and limits of a version of discrete time stopwatch automata.
\end{abstract}

\section{Model-checking and algorithmic law}

%
%

The European transport Regulation 561~\cite{Reg561} concerns 
 activities of truck drivers as recorded by tachographs. A tachograph recording determines for each time unit the activity of the driver which can be \textit{driving}, \textit{resting} or \textit{doing other work}. Regulation 561 is a complex set of articles that limits driving and work time by prescribing various types of rest periods. The regulation prescribes that the time units are minutes, so a tachograph recording of 2 months determines a sequence of activities of length 87840. It is clear that the legality of
such a recording can only be judged with the help of an algorithm.  


By the application of a law to a case we mean the decision whether the case is legal according to that law or not.
By an {\em algorithmic law} we mean a law whose application to a case is executed by an algorithm. Instead of designing one algorithm per law we are interested in {\em general purpose} algorithms: these take as input both a case from a set of cases of interest, and a law from a set of laws of interest, and decide whether the given case is legal according to the given law or not.
 In order to present cases and laws of interest as inputs to an algorithm, both have to be suitably \textit{formalized}.

\subsection{Computational problems in algorithmic law}\label{sec:compprobl}

For Regulation 561,  a case is a sequence of activities and hence straightforwardly formalized as a word over the alphabet $\Sigma:=\{d,r,w\}$: e.g., the word $dddwrr\in\Sigma^6$ is the activity sequence consisting of 3 minutes driving, followed by 1 minute other work, followed by 2 minutes resting.\footnote{Some tachograph readers will work with other formats like \textit{activity-change lists}: lists of timepoints where the driver's activity changes. We do not discuss other formats in this paper. } Generally, we formalize a set of cases by a class of finite structures $\mathcal K$.\footnote{Words are straightforwardly seen as structures, see e.g.~\cite[Example~4.11]{fg}.}

In this setting, a generic formalization of a law is given by translating the law to a sentence $\varphi$ of a formal language, i.e., a logic $L$. That a particular case $K\in\mathcal K$ is legal according to the law $\varphi$ then formally means that $K\models\varphi$, i.e., $K$ satisfies $\varphi$. We arrive at what is the central computational problem of algorithmic law:

\paragraph{Model-checking} The {\em model-checking problem (for $L$ over  $\cal K$)} is a formal model for  a family of algorithmic laws where  laws are formalized by sentences of $L$ and  cases are formalized by structures in $\cal K$. 

\medskip
\fbox{\parbox{13cm}{
\textsc{MC($\cal K,L$)}\\
\begin{tabular}{ll}
{\em Input:}& $K\in\cal K$ and $\varphi\in L$.\\
{\em Problem:}& $K\models \varphi$ ?
\end{tabular}
}}
\medskip

A  {\em model-checker (for $L$ over $\cal K$)} is an algorithm deciding $\textsc{MC($\cal K,L$)}$. This is a
 general purpose algorithm as asked for above. \\

\noindent 
We consider two more computational problems associated to algorithmic law. 

\paragraph{Consistency-checking} 

A  minimal requirement for law design is that it should be possible to comply with the law (cf.\ \cite{Errezil:2019:Homologation} for a problematic case). 
For laws governing activity sequences {\em consistency} means that there should be at least one such sequence that is legal according to the law. 
A related question of interest is whether a certain type of behaviour can be  legal. This is tantamount to ask whether the artificial law augmented by demanding the type of behaviour is consistent. 

This  is formally modeled by the {\em consistency problem (for $L$ over $\cal K$)}:

\medskip
\fbox{\parbox{13cm}{
\textsc{Con($\cal K,L$)}\\
\begin{tabular}{ll}
{\em Input:}& $\varphi\in L$.\\
{\em Problem:}& does there exist some $K\in\cal K$ such that $K\models \varphi$ ?
\end{tabular}
}}

\paragraph{Scheduling} Assume a truck driver has to schedule next week's driving, working and resting and is interested to drive as long as possible. A week has 10080 minutes, so the driver faces the computational optimization problem to compute a length 10080 extension of the word given by the current tachograph recording that is legal according to Regulation~561 and that maximizes driving time. 

Consider laws governing activity sequences, that is, $\cal K$ is the (set of structures corresponding to the) set of finite words $\Sigma^*$ over some alphabet $\Sigma$.  For a word $w=a_0\cdots a_{n-1}\in\Sigma^n$ (the $a_i$ are letters that represent the corresponding activities) and a letter $a\in\Sigma$, let $\#_a(w)$ denote the number of times the letter $a$ appears in $w$, i.e., $$\#_a(w):=|\{i<n\mid a_i=a\}|.$$ 
 The {\em scheduling problem (for $L$ over $\cal K=\Sigma^*$)} is: 
 
\medskip
\fbox{\parbox{13cm}{
\textsc{Scheduling($\cal K,L$)}\\
\begin{tabular}{ll}
{\em Input:}& $\varphi\in L$, $w\in\Sigma^*$, $a\in\Sigma$  and $n\in\N$.\\
{\em Problem:}& if there is no $v\in \Sigma^n$ such that $wv\models\varphi$, then output ``illegal''; \\
&otherwise output some $\overline v\in\Sigma^n$ such that \\[1ex]
&\qquad $\#_a(w\overline v)=\max\big\{\#_a(wv)\mid v\in\Sigma^n, wv\models\varphi\big\}$.
\end{tabular}
}}

\subsection{Model-checking as a formal model}\label{sec:formalmodel}

There is a vast amount of research concerning  model-checking problems $\textsc{MC}(\cal K,L)$. The two main interpretational perspectives stem from {\em database theory} and from {\em system verification}. In database theory~\cite{sss}, $\cal K$ is viewed as a set of databases, and $L$ a set of Boolean queries.  In system verification~\cite{baier}, $\cal K$ is as a set of transition systems or certain automata that formalize concurrent systems or parallel programs, and~$L$ formalizes correctness specifications of the system, that is, properties all executions of the system should have.
We add a third interpretational perspective on model-checking problems as formal models for families of algorithmic laws. We highlight three conflicting requirements on such a formal model.

\paragraph{Tractability requirement} The first and foremost constraint for a model $\textsc{MC}(\cal K,L)$ of a family of algorithmic laws is its computational complexity. For the existence of a practically useful general purpose model-checker the problem $\textsc{MC}(\cal K,L)$ should be {\em tractable}. We argue that the notion of tractability here cannot just mean $\PTIME$, a more fine-grained complexity analysis  of $\textsc{MC}(\cal K,L)$ is required.

Classical computational complexity theory tells us that already extremely simple pairs $(\cal K,L)$ have intractable model-checking problems. An important example from database theory is that $\textsc{MC}(\cal K,L)$ is $\mathsf{NP}$-complete for $L$ the set of conjunctive queries
and $\cal K$ the set of graphs (or the single binary word 01)~\cite{chandramerlin}.  
An important example \cite{pnueli}  from system verification is that $\textsc{MC}(\cal K,L)$ is $\PSPACE$-complete for $L$ equal to linear time temporal logic $\LTL$ and  $\cal K$ the class of finite automata~\cite{SistlaClarke}.

However, this $\PSPACE$-completeness result is largely irrelevant because 
the model-checking problem is {\em fixed-parameter tractable (fpt)}, that is, it is decidable in time $f(k)\cdot n^{O(1)}$ for some computable function $f:\N\to\N$ where $n$ is the total input size and $k:=\|\varphi\|$ the size of (a reasonable binary encoding of) the input $\LTL$ formula~$\varphi$. 
In fact, we have {\em parameter dependence} $f(k)\le 2^{O(k)}$.
Informally speaking, we are mainly interested in inputs with $k\ll n$, so this can be considered tractable. In other words, the computational hardness relies on uninteresting inputs with relatively large~$k$.
In contrast, model-checking conjunctive queries over graphs is likely not fixed-parameter tractable: this is equivalent to $\FPT\neq\W[1]$, the central hardness hypothesis of parameterized complexity theory.\footnote{Grohe \cite{grohecsp} (refined in \cite{chmu1,chmu2}) gives a quite complete understanding of which conjunctive queries are tractable.} 

The focus on inputs with $k\ll n$ is common in model-checking and it is an often repeated point that a reasonable complexity analysis must take this asymmetry of the input  into account; \cite{papayanna} is an early reference addressing both perspectives from database theory and  system verification. The theoretical framework for such a fine-grained complexity analysis is parameterized complexity theory \cite{fg,df,df2} whose central tractability notion is fixed-parameter tractability.\footnote{This paper does not require familiarity with parameterized complexity theory. Only Section~\ref{sec:lower} requires some results of this theory and will recall what is needed.} 

To sum up, judging he tractability of $\textsc{MC}(\cal K,L)$ should be based on a fine-grained complexity analysis 
that measures the computational complexity with respect to various input {\em aspects} $n,k,\ldots$.\footnote{Formally, an {\em aspect} could be defined as a {\em parameterization}, possibly viewed as a {\em size measure} as in \cite[p.418f]{fg}. However, we don't need a definition and use the term informally.} 
The quality of the model $\textsc{MC}(\cal K,L)$ depends on the ``right'' identification of relevant aspects in its complexity analysis.


\paragraph{Expressivity requirement} Recall that we ask for a {\em general purpose} model-checker that solves a model-checking problem $\textsc{MC}(\cal K,L)$ modeling a family  of algorithmic laws instead of single-purpose model-checkers deciding $\textsc{MC}(\cal K,\{\varphi\})$, one per algorithmic law $\varphi$. From a theoretical perspective we expect insight on which laws can possibly be algorithmic. 

From a practical perspective, this avoids the costly production of many algorithms, their updates following law reforms and their validation for legal use. It is thus desirable to find tractable $\textsc{MC}(\cal K,L)$ for as rich as possible classes $\cal K$ and $L$. In particular, for laws governing sequences of activities (i.e., $\cal K=\Sigma^*$) we ask for an as expressive as possible logic $L$. Of course, this is in tension with the tractability requirement.

\paragraph{Naturality requirement} From an algorithmic perspective it is not only the expressivity of  $L$ that matters, but also its {\em succinctness}. 
Typically, model-checking complexity grows fast with the size of the sentence $\varphi$ formalizing the law, so logics allowing for shorter formalizations are preferable. 
E.g., it is well-known that the expressive power of  $\LTL$ is not increased when adding past modalities but their use can lead to exponentially shorter sentences.
Crucially, the complexity of model-checking (over finite automata) is not substantially increased. Moving to a more succinct logic is not necessarily an improvement. E.g. further adding a now-modality again increases succinctness exponentially but apparently also the model-checking complexity~\cite{forgettable}. 

Furthermore, it is one thing to model a law application by a model-checking instance $(K,\varphi)$ any old how and another to do so by somehow typical members of $\cal K$ and $ L$. E.g., in case the formalization of actual laws uses only special artificial members of $\cal K$ ({\em semantic overkill}) or $ L$ ({\em syntactic overkill}), one would want to trade the richness of $\cal K$ and $ L$ for a faster model-checker.

Very long or contrived formalizations of laws are also prohibitive for legal practice which requires the law to be readable and understandable by humans. This is vital also for the validation of their formalization, i.e., their translation from the typically ambiguous natural language into a formal language able to be algorithmically processed. Without attempting a definition of this vague term, we thus informally require that the (formalization given by the) model $\textsc{MC}(\cal K,L)$ must be {\em natural}.

\paragraph{Other requirements} We focus on the above three requirements but, of course, there are more whose discussion is beyond the scope of this paper. 

An important one is trust in the output of model-checkers. This is a threefold issue. First, the formalization process requires trust: laws are 
written in natural language and thereby formally not precise and ambigue; formalization
 typically leads to choices to disambiguate or even repair the written law; this calls for a collaboration of different experts. Second, the implementation process requires trust: this could call for formally verified implementations; we refer to \cite{TimePaper} for an example. Third, one needs  trust that the data given to the algorithm are correct and in the right format (we refer to \cite{Errezil:2019:Homologation} for a discussion); for example, Regulation 561 prescribes working in UTC and it is known that no tachograph actually records in UTC; theoretically,  the change from non-UTC to UTC data can have drastic effects \cite{drive}.

Furthermore, algorithmic outputs should be transparent and explainable to be used in legal practice and it is unclear what this exactly means.
  Further requirements on the model might come from ethical or political considerations - e.g., the required transparency can be in conflict with intellectual property rights and there can be more general issues concerning the involvement of the private sector in law execution. 


 \subsection{Contribution and outline}\label{sec:outline}

We focus on laws governing temporal sequences of activities, that is, laws concerning cases that can readily be formalized by words over some finite alphabet $\Sigma$, i.e., $\cal K=\Sigma^*$. This paper is about the quest for a logic $L$ such that 
$\textsc{MC}(\cal K,L)$ is a good model for such laws. To judge expressivity and naturality we use European Regulation 561 \cite{Reg561} as a test case, that is, we want $L$ to naturally formalize Regulation 561. Given the complexity of this regulation, this is an ambitious goal and we expect success to result in a model that encompasses a broad family of laws concerning sequences of activities. 
 
The imperative constraint is the tractability of $\textsc{MC}(\cal K,L)$. The next section 
surveys the relevant literature on model-checking and discusses shortcomings of known model-checkers. Thereby we build up some intuition about what the right input aspects are, i.e., those relevant to calibrate the computational complexity of $\textsc{MC}(\cal K,L)$ and to judge its tractability.

We suggest (a version of) discrete time {\em stopwatch automata} $\SWA$ as an answer to our central question, that is, we propose $\textsc{MC}(\Sigma^*,\SWA)$ as a model for algorithmic laws concerning sequences of activities.

Stopwatch automata are defined in Section~\ref{sec:stopwatch}. Our main technical contribution is the construction of a stopwatch automaton expressing Regulation 561 in Section~\ref{sec:automaton}. Sections~\ref{sec:thy} and~\ref{sec:lower} gauge the expressivity of stopwatch automata and the computational complexity of the problems mentioned in Section~\ref{sec:formalmodel}: model-checking problem, consistency-checking and scheduling. It turns out that
while stopwatch automata have  high expressive power, their model-checking complexity is relatively tame, and scales well with the aspects identified in Section~\ref{sec:survey}: we summarize our technical results in Section~\ref{sec:techsum}.

\section{Regulation 561 and various logics}\label{sec:survey}

Model-checking complexity has been investigated mainly from two interpretational perspectives: database theory and system verification. We give a brief survey guided by our central question to model Regulation 561.

\subsection{Regulation 561 and B\"uchi's theorem}\label{sec:buechi}

We recall B\"uchi's theorem and, to fix some notation, the definitions of regular languages and finite automata. 

An {\em alphabet} $\Sigma$ is a non-empty finite set of {\em letters}, $\Sigma^*=\bigcup_{n\in\N}\Sigma^n$ denotes the set of (finite) {\em words}. A word $w=a_0\cdots a_{n-1}\in\Sigma^n$ (the $a_i$ are letters) has {\em length} $|w|:=n$. A {\em (non-deterministic) finite automaton} $\B$ is given by a finite set of {\em states} $Q$, an  alphabet~$\Sigma$, sets of {\em initial} and {\em final} states $I,F\subseteq Q$, and a set $\Delta\subseteq Q\times \Sigma\times Q$ of {\em transitions}. A {\em computation of $\B$ on}  $w=a_0\cdots a_{n-1}\in\Sigma^n$ is a sequence $q_0\cdots q_n$ of states such that $(q_i,a_i,q_{i+1})\in\Delta$ for every $i<n$. The computation is {\em initial} if $q_0\in I$ and {\em accepting} if $q_n\in F$. The {\em language $L(\B)$ of $\B$} is the set of words $w\in\Sigma^*$ such that $\B$ {\em accepts} $w$, i.e., there exists an initial accepting computation of $\B$ on~$w$.
A language (i.e., subset of words over $\Sigma$) is {\em regular} if it equals $L(\B)$ for some finite automaton~$\B$. 

We refer to \cite{thomas} for a definition of $\MSO$-definable languages and a proof of:

\begin{theorem}[B\"uchi]\label{thm:buechi} A language is $\MSO$-definable if and only if it is regular.
\end{theorem}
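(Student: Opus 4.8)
The plan is to prove the two directions separately, establishing that every regular language is $\MSO$-definable and conversely that every $\MSO$-definable language is regular. For the first direction, I would fix a finite automaton $\B$ with state set $Q = \{q_1,\dots,q_m\}$ and write an $\MSO$ sentence $\varphi_\B$ that ``guesses'' an accepting run. The standard encoding: a word $w = a_0\cdots a_{n-1}$ is viewed as a structure on universe $\{0,\dots,n-1\}$ with the linear order $<$ and unary predicates $P_a$ for each $a\in\Sigma$. The sentence $\varphi_\B$ existentially quantifies over second-order variables $X_1,\dots,X_m$, intended to be the set of positions at which the run occupies state $q_j$, and then asserts in first-order fashion: the $X_j$ partition the positions; the transition from the initial state into position $0$ is legal; consecutive positions respect $\Delta$ given the letter read; and the state after the last position is final. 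Each of these is routine to write down once one has first-order access to ``the minimum position'', ``the maximum position'', and ``successor'', all of which are $\FO$-definable from $<$. The only mild care needed is the boundary bookkeeping for the empty word and for the final transition, which I would handle by also quantifying a variable for the final state or by encoding the run as states \emph{between} positions.

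For the converse — every $\MSO$-definable language is regular — I would proceed by structural induction on $\MSO$ formulas, proving the stronger statement that for every formula $\psi(X_1,\dots,X_k)$ (with all free variables second-order, first-order variables being treated as singleton sets together with a ``singleton'' guard) the set of pairs consisting of a word together with an assignment to the free variables, encoded as a word over the expanded alphabet $\Sigma \times \{0,1\}^k$, is regular. The atomic cases ($x < y$, $x \in X$, $P_a(x)$, after translating first-order variables to guarded singletons) give simple automata. The Boolean connectives use closure of regular languages under union, intersection, and complement; here complement is where one genuinely needs the automaton model (or the determinization/subset construction) rather than, say, a grammar formalism. The quantifier case is the heart: existential second-order quantification $\exists X_{k+1}\,\psi$ corresponds to projection — erasing the last bit-track from the alphabet — and regular languages are closed under such letter-to-letter projections (nondeterministically guess the erased bit). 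First-order quantifiers reduce to second-order ones over singletons.

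The step I expect to be the main obstacle is the closure of regular languages under complementation, used in the induction for negation. Unlike union, intersection, and projection, complementation is not transparent for nondeterministic automata and requires either the subset construction to determinize first (with its exponential blow-up, harmless here since we only need existence) or an independent argument. I would therefore either cite/establish the subset construction as a preliminary, or restructure the induction to push negations to the atoms so that only positive Boolean combinations, projections, and complements of already-determinized atomic automata occur. A secondary, more clerical obstacle is the careful treatment of free first-order variables and of the two degenerate cases (empty word; assignments that must select exactly one position); these are routine but easy to get subtly wrong, so I would set up the guarded-singleton convention explicitly at the outset. I would ultimately defer the full details to the reference \cite{thomas} as the excerpt already signals, presenting the above as the proof architecture.
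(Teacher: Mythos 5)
Your outline is the standard proof of B\"uchi's theorem --- encoding an accepting run by existentially quantified position sets for one direction, and structural induction using closure of regular languages under Boolean operations and projection (with determinization handling negation) for the other --- which is precisely the argument in the reference \cite{thomas} that the paper defers to, as it gives no proof of its own. The sketch is correct and identifies the right delicate points (complementation, free-variable encoding, boundary cases), so there is nothing to object to beyond the fact that it remains a plan rather than a worked-out proof.
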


This can be extended to infinite words and trees using various types of automata -- we refer to 
 \cite{games} for a monograph on the subject. 

The proof of B\"uchi's theorem is effective in that there is a computable function that computes for every $\MSO$-sentence $\varphi$ and automaton $\B_\varphi$  that accepts a word $w$ if and only if $w\models\varphi$. It follows that
$\textsc{MC}(\Sigma^*,\MSO)$ is fixed-parameter tractable: given an input $(w,\varphi)$, compute $\B_\varphi$ and check $\B_\varphi$ accepts $w$. This takes time\footnote{This is not true for the empty word $w$. We trust the readers common sense to interpret this and similar statements  reasonably.}
 $f(\|\varphi\|)\cdot |w|$ for some computable function $f:\N\to\N$. It also follows that $\textsc{Con}(\Sigma^*,\MSO)$ is decidable because finite automata have {\em decidable emptiness}: there is an (even linear time) algorithm that, given a finite automaton $\A$, decides whether $L(\A)=\emptyset$.

$\MSO$ is a very expressive logic. In \cite{drive} it is argued that Regulation 561 can be formalized in $\MSO$, and naturally so. Thus, in a sense $\textsc{MC}(\Sigma^*,\MSO)$ is tractable, expressive and natural, so a good answer to our central question. The starting point of this work was the question for a better model, namely improving its tractability. The problem with the runtime  $f(\|\varphi\|)\cdot |w|$ of B\"uchi's model-checker is that the parameter dependence $f(k)$ grows extremely fast: it is non-elementary in the sense that it cannot be bounded by $2^{2^{\udots^{2^{k}}}}$ for any fixed height tower of 2's. This is due to the fact that in general  the size $\|\B_\varphi\|$ of (a reasonable binary encoding of) $\B_\varphi$ is non-elementary in~$\|\varphi\|$. Under suitable hardness hypotheses this non-elementary parameter dependence cannot be avoided, not even when restricting to first-order logic  $\FO$ \cite{frickgrohe}.

This motivates the quest for fragments of $\MSO$ or less succinct variants thereof that allow a tamer parameter dependence. In system verification, $\LTL$ has been proposed: an $\LTL$ formula of size~$k$ can be translated to a size $2^{O(k)}$  B\"uchi automaton \cite{vw} or a size $O(k)$ alternating automaton \cite{vardi}. The model-checking problem
asks given a system modeled by a finite automaton $\A$ whether all (finite or infinite) words accepted by the automaton satisfy the given $\LTL$-sentence $\varphi$. The model-checker decides emptiness of a suitable product automaton accepting $L(\A)\cap L(\B_{\neg \varphi})$ and takes time $2^{O(\|\varphi\|)}\cdot \|\A\|$. This is the dominant approach to model-checking in system verification. 

\cite{drive} formalizes part of Regulation 561 in  $\LTL$. In part, these formalizations rely on Kamp's theorem (cf.~\cite{rabinovich}) stating that $\LTL$ and $\FO$ have the same expressive power over $\Sigma^*$. But the translation of an $\FO$-sentence to an $\LTL$-sentence can involve a non-elementary blow-up in size. Indeed, \cite{drive} proves lower bounds on the length of $\LTL$-sentences expressing parts of Regulation 561.  Very large sentences are not natural and lead to prohibitive model-checking times.

\begin{example} To illustrate the point, consider the following law in Regulation 561:
\begin{quote}\sf
Article 6.2: The weekly driving time shall not exceed 56 hours\ldots
\end{quote}
Restrict attention to words representing one week of activities, i.e., words of length $7\cdot 24\cdot 60$ over the alphabet $\Sigma=\{d,w,r\}$. A straightforward formalization of Article 6.2 in $\LTL$ is (using $d\in\Sigma$ as a propositional variable) the huge disjunction of 
$$ \bigwedge_{j\le D}\Big(\bigwedge_{r_{j}\le i<\ell_{j+1}}\Circle^i\neg d\wedge \bigwedge_{\ell_j\le i< r_{j}}\Circle^id\Big)$$
for all $D\le 7\cdot 24\cdot 60$ and all $r_0:=0\le \ell_1< r_1<\cdots <\ell_D<r_D<\ell_{D+1}:=7\cdot 24\cdot 60$ with $\sum_{1\le j\le D}(r_j-\ell_j)\le 56\cdot 60$. These are 
$> {7\cdot 24\cdot 60 \choose 56\cdot 60} > 10^{2784}$ many disjuncts.
\end{example}

To conclude, $\MSO$ gives the wrong model because it does not allow sufficiently fast model-checkers, and $\LTL$ is the wrong model because it is not sufficiently (expressive nor) succinct, hence not natural. It can be expected that, like Regulation 561, many algorithmic laws concerning sequences of activities state lower and upper bounds on the duration of certain activities or types of activities. The constants used to state these bounds are not necessarily small, and an important aspect to take into account when analyzing the model-checking complexity.

\subsection{Regulation 561 and timed modal logics}\label{sec:timedmodal}

The above motivates to look at models with built-in timing constraints: ``In practice one would want to use ‘sugared’ versions of $\LTL$, such as metric temporal logic ($\MTL$; \cite{ow}) which allow for expressions such as $\Circle^{n+1}$ to be represented succinctly''\cite{drive}. $\MTL$ has modalities like $\Diamond_{[5,8]}\varphi$ expressing that $\varphi$ holds within 5 and 8 time units from now.

For Regulation 561,
cases are tachograph recordings which, formally, are {\em timed words} $ (a_0, t_0)\ (a_1,t_1)\cdots $ where the $a_i$ are letters and the $t_i$ an increasing sequence of time-points; intuitively, activity $a_0$ is observed until time point $t_0$, then $a_1$ until $t_1$, and so on. Alur and Dill \cite{alurdill} extended finite automata to {\em timed automata} that accept sets of timed words  -- see \cite{bouyerTA} for a survey. 
Roughly speaking, computations of such automata happen  in time and are governed by finitely many clocks: transitions from one state to another are enabled or blocked depending on the clock values, and transitions can reset  some clocks (to value 0). Alur and Dill
\cite{alurdill} proved that timed automata have decidable emptiness, thus enabling the dominant model-checking paradigm. 

Consequently, a wealth of timed temporal logics have been investigated -- \cite{Hsurvey,bouyer} are  surveys. 
 The following are some of the most important choices when defining such a logic:
 \begin{center}
\begin{tabular}{ll | ll | l}
{\em semantics} &   &  {\em time}                         &  & {\em clocks}\\\hline
finite words & signal-based & continuous $\mathbb{R}_{\ge 0}$ &  branching   &  internal    \\
infinite words & event-based & discrete $\mathbb{N}$  &  linear         &   external 
\end{tabular}
\end{center}
A subtle choice is between signal- or event-based semantics. It means, roughly and respectively,  that the modalities quantify  over all time-points or only over the $t_i$ appearing in the timed word; $\MTL$ is known to be less expressive in the latter semantics over finite timed words \cite{mtlpwcont}.
 A crucial choice is between time $\N$ or $\R_{\ge0}$. Internal clocks appear only on the side of the automata, external clocks appear in sentences which reason about their values. We briefly survey the most important results. 

An early success  \cite{acd} concerns the infinite word signal-based branching continuous time logic $\TCTL$ (timed computation tree logic): over (systems modeled by) timed automata it admits a model-checker  with runtime $t^{O(c)}\cdot k\cdot n$ where $n$ is the automaton size, $k$ the size of the input sentence,~$c$ the number of clocks, and $t$ is the largest time constant appearing in the input. \cite{hnsy} extends this allowing external clocks. However, continuous branching time is semantical and syntactical overkill for Regulation 561. For linear continuous time we find $\MTL$ and $\TPTL$ (timed propositional temporal logic), a more expressive~\cite{bouyerexpressive} extension with external clocks. Since model-checking is undecidable for these logics  \cite{ah94,afh}, fragments have been investigated. Surprisingly \cite{ow} found an fpt model-checker for $\MTL$ over event-based finite words via a translation to  alternating automata with one clock, albeit with intolerable parameter dependence (non-primitive recursive).  $\MITL$ (metric interval temporal logic) \cite{afh} is the fragment of $\MTL$ disallowing singular time constraints as, e.g., in $\Diamond_{[1,1]}\varphi$. \cite{mitltota,mitl2ta} gives an elegant translation of $\MITL$ to  timed automata and thereby a model-checker with runtime\footnote{In fact, $t$ can be replaced by a typically smaller number, called the {\em resolution} of the formula -- see \cite{mitltota}.} $2^{O(t\cdot k)}\cdot n$. Over discrete time,~\cite{ah94} adapts  the mentioned translation of $\LTL$ to B\"uchi automata and gives a model-checker for $\TPTL$ with runtime $2^{O(t^c\cdot k)}\cdot n$.

As said, from the perspective of algorithmic law, $t$ is not typically small and runtimes exponential in $t=56h=3360\textit{min}$ are prohibitive. Tamer runtimes with~$t$ moved out of the exponent 
have been found for a certain natural $\MITL$-fragment $\MITL_{0,\infty}$ both over discrete and continuous time -- see \cite{Hsurvey,afh}. 

However, ``standard real-time temporal logics [\ldots] do not allow us to constrain the accumulated satisfaction time of state predicates'' \cite[p.414]{achduration}. It seems that this is just what is  required to formalize the mentioned Article 6 (2), and we expect similar difficulties to be encountered with other laws concerning activity sequences.

There are various attempts to empower the logics with some reasoning about durations. {\em Stopwatch automata} \cite{dima} are timed automata that can not only reset clocks but also stop and activate them. However, emptiness is undecidable already for a single stopwatch~\cite{hkpv}. Positive results are obtained in  \cite{achduration} for {\em observer stopwatches}, i.e.,  roughly, stopwatches not used to govern the automaton's transitions.  On the logical side,  \cite{boua} and \cite{observer} study fragments and restrictions for $\TCTL$ with (observer) stopwatches.   On another strand, \cite{durcalc} puts forward the {\em calculus of durations}, but already tiny fragments turn out undecidable \cite{chao}. For discrete time, \cite{hansen} gives an fpt model-checker via a translation to finite automata. For continuous time, \cite{fraenzle} obtains fpt results under certain reasonable restrictions of the semantics. A drawback is that these fpt results have non-elementary parameter dependence.

To conclude, the extensive research on ```sugared' versions'' of $\LTL$ in system verification does not reveil a good answer to our central question for a model-checking problem modeling algorithmic laws concerning activity sequences. In particular, many known model-checkers are too slow in that they do not scale well with time constants mentioned in the law.

\subsection{The perspective from algorithmic law}
 
The new perspective on model-checking from algorithmic law seems orthogonal to the  dominant perspectives from database theory and system verification in the sense that it seems to guide incomparable research directions.
 
 In database theory there is special interest in model-checking problems for a rich class~$\cal K$, formalizing a large class of databases, and possibly weak logics $L$ formalizing simple basic queries. In algorithmic law (concerning activity sequences) it is the other way around, focussing on $\cal K=\Sigma^*$. 

System verification gives special interest to infinite words and continuous time (cf.\ e.g.\ \cite{acd})  while algorithmic law focusses on finite words and discrete time.
Most importantly, system verification focusses on structures specifying {\em sets} of words: its model-checking problem corresponds to (a generalization of) the consistency problem in algorithmic law. 
In algorithmic law the consistency problem is secondary, the main interest is in evaluating sentences over single words. 

Finally, the canonical parameterization of a model-checking problem takes the size~$\|\varphi\|$ of the input sentence $\varphi$  as the parameter. Intuitively, then parameterized complexity analysis focusses attention  on inputs of the problem where $\|\varphi\|$ is relatively small. Due to large constants on time constraints appearing in the law to be formalized this parameterization does not seem to result in a faithful model of algorithmic law. We shall come back to this point in Section \ref{sec:pswa}. 

 Compared to system verification this shift of attention in algorithmic law
opens the possibility to use more expressive logics while retaining tractability of the resulting model. 
In particular, complexity can significantly drop via the shift from continuous time, infinite words and consistency-checking, to discrete time, finite words and model-checking. 
While discrete time is well investigated in system verification, it has been noted that both finite words and model-checking have been neglected -- see \cite{fionda} and~\cite{schnoebelenpath}, respectively.
 To make the point: over finite words consistency-checking $\LTL$ is $\PSPACE$-complete but
 model-checking is $\PTIME$,  even for the more succinct extensions of $\LTL$ with past- and now-modalities \cite{schnoebelenpath}, or even finite variable $\FO$~\cite{vardifok}.\footnote{3-variable (2-variable) $\FO$ has the same expressive power as (unary) $\LTL$ over finite words but is much more succinct \cite{evw}. \cite{savitch} gives a fine calibration of the parameterized complexity of finite variable $\FO$.}
 
 \subsection{Model-checking stopwatch automata: summary}\label{sec:techsum}

We take advantage of this possibility to use more expressive logics and suggest (a version of) discrete time {\em stopwatch automata} $\SWA$ as an answer to our central question, that is, we propose $\textsc{MC}(\Sigma^*,\SWA)$ as a model for algorithmic laws concerning sequences of activities.\footnote{
In the notation of Section~\ref{sec:compprobl} we define $w\models\A$ for a finite word $w$ and a stopwatch automaton $\A$ to mean that $\A$ accepts $w$.} 
Our stopwatches are bounded, and their bounds correspond to time constants mentioned in laws. 
Mimicking notation from Section~\ref{sec:timedmodal}, we  let $c_\A$ denote the number of stopwatches and~$t_\A$ the largest stopwatch bound of a stopwatch automaton~$\A$.
We give the following upper bound on the complexity of $\textsc{MC}(\Sigma^*,\SWA)$:

\begin{theorem}\label{thm:intromodcheck} There is an algorithm that given 
a stopwatch automaton $\A$ and a word~$w$ decides whether $\A$ accepts $w$ in time 
$$
O\big(\|\A\|^2\cdot t_\A^{c_\A}\cdot|w|\big).
$$ 
\end{theorem}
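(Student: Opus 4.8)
The plan is to show that a stopwatch automaton $\A$ can be simulated by an ordinary finite automaton $\B$ whose states record, in addition to the control state of $\A$, the current value of each stopwatch. Since stopwatches are bounded by $t_\A$ (values are clamped at their bound, so there are at most $t_\A+1$ possible values per stopwatch), and there are $c_\A$ stopwatches, the number of configurations is at most $\|\A\|\cdot (t_\A+1)^{c_\A}$, which is $O(\|\A\|\cdot t_\A^{c_\A})$. A transition of $\A$ on a letter $a$ becomes a transition of $\B$ that (i) checks the guard against the stored stopwatch values, (ii) updates the stopwatch component by incrementing exactly the stopwatches that are active in the source configuration (or as dictated by $\A$'s semantics) while clamping at the bound, and (iii) applies any resets. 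Initial and final configurations of $\B$ are those whose control component is initial/final in $\A$ and whose stopwatch component is the initial valuation (all zero) / unconstrained, respectively. Then $\A$ accepts $w$ iff $\B$ accepts $w$, and we run the linear-time emptiness/reachability-style acceptance check for $\B$ on input $w$.

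The key steps, in order, are: first, fix the precise operational semantics of $\SWA$ from Section~\ref{sec:stopwatch} (which stopwatches advance on a step, how guards are read, when resets happen) so that the product construction is faithful; second, define the configuration graph $\B$ as above and verify formally, by induction on $|w|$, that initial accepting computations of $\B$ on $w$ correspond exactly to accepting runs of $\A$ on $w$; third, bound $\|\B\|$: the state set has size $O(\|\A\|\cdot t_\A^{c_\A})$ and, crucially, the transition relation has size $O(\|\A\|^2\cdot t_\A^{c_\A})$ — here the extra factor of $\|\A\|$ (rather than $\|\Sigma\|$) comes from charging each transition of $\A$ against the configurations it applies to, being careful that guard-evaluation and the clamp-and-reset update are computable in time linear in the size of the transition's description; fourth, observe that checking whether a given finite automaton accepts a given word $w$ — i.e.\ reachability of a final state in the run DAG layered by the positions of $w$ — takes time $O(\|\B\|\cdot|w|)$ (process $w$ letter by letter, maintaining the set of reachable configurations, touching each transition at most once per position). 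Multiplying through gives the claimed $O(\|\A\|^2\cdot t_\A^{c_\A}\cdot|w|)$.

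I expect the main obstacle to be bookkeeping rather than a deep idea: getting the size bound on $\B$ to come out as $\|\A\|^2\cdot t_\A^{c_\A}$ exactly, and not, say, with an extra $t_\A$ or an $\|\A\|^3$, requires care about how guards are encoded (interval endpoints are numbers up to $t_\A$, so a guard's description is already counted in $\|\A\|$, and comparing a stored value to it costs $O(\log t_\A)$ or $O(1)$ under the standard word-RAM conventions) and about not materializing $\B$ explicitly when it is cheaper to iterate over (configuration, $\A$-transition) pairs on the fly. A secondary subtlety is the nondeterminism of $\A$ together with the stopwatch-advance convention: if the set of advancing stopwatches on a step can depend on nondeterministic choices, the configuration component must be updated accordingly, but this does not change the asymptotic count. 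Once the semantics are pinned down, each step above is routine, and none of them needs more than an elementary induction or a direct counting argument.
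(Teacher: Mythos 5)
Your proposal is correct and follows essentially the same route as the paper: the configuration graph you describe is exactly the transition system $\TS(\A)$, and the paper likewise layers it by the positions of $w$ and runs a linear-time reachability check, obtaining the bound $O(\|\A\|^2\cdot B_\A\cdot|w|)$ with $B_\A\le (t_\A+1)^{c_\A}$. The only point to pin down (which your ``fix the semantics'' step would catch) is that a run of $\A$ may take several instantaneous ($t=0$) transitions between two consecutive letters, so rather than precomposing these into single-letter NFA transitions---which would cost a transitive closure---one keeps the zero-time edges inside each layer and lets the reachability pass absorb them, exactly as the paper does.
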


We prove a slightly stronger result in Theorem~\ref{thm:modcheck}. Notably, the aspect $t_\A$ does not appear in the exponent, so this overcomes a bottle-neck of various model-checkers designed in system verification (see Section~\ref{sec:timedmodal}). We obtain similar algorithms for consistency-checking  and scheduling (Corollary~\ref{cor:conscheck} and Theorem~\ref{thm:scheduling}). This is despite the fact that stopwatch automata are highly expressive, namely have the same expressive power as {\sf MSO} over finite words (Theorems~\ref{thm:regular} and \ref{thm:buechi}).

The final Section~\ref{sec:concl} discusses our model $\textsc{MC}(\Sigma^*,\SWA)$ following the criteria of Section~\ref{sec:formalmodel}, and gives a critical examination of the factor $t_\A^{c_\A}$ in the runtime of our model-checker. Intuitively, typical inputs have small $c_\A$ and large $t_\A$, and it would be desirable to replace this factor by, e.g., $2^{O(c_\A)}\cdot t_\A^{O(1)}$. We show this is unlikely to be possible. Theorem~\ref{thm:pswa} implies:

\begin{theorem}\label{thm:introlower} Assume $\FPT$ does not contain the W-hierarchy. Let $f:\N\to\N$ be a computable function. Then there does not exist an algorithm that given 
a stopwatch automaton $\A$ and a word $w$ decides whether $\A$ accepts $w$ in time 
$$
\big(\|\A\|\cdot f(c_\A)\cdot  t_\A\cdot |w|\big)^{O(1)}.
$$
\end{theorem}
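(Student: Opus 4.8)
The plan is to prove this lower bound by reduction from a $\W[t]$-hard problem, using the parameterized intractability assumption ``$\FPT$ does not contain the W-hierarchy'' — equivalently, $\FPT \neq \W[t]$ for some $t \geq 1$. Concretely, I would reduce from the parameterized weighted satisfiability problem for a level of the weft hierarchy, or more conveniently from a canonical $\W[1]$-complete problem such as $p$-\textsc{Clique} (independent of $t$ if one only needs $\W[1]$-hardness; but since the hypothesis is about the whole hierarchy we may pick whichever level is most convenient and still get a contradiction). The point of the runtime bound $(\|\A\|\cdot f(c_\A)\cdot t_\A\cdot|w|)^{O(1)}$ is that it is polynomial in $\|\A\|$, in $t_\A$, in $|w|$, and in $f(c_\A)$ — so the only place ``parameter blow-up'' is permitted is through the number of stopwatches $c_\A$. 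Hence the strategy must be: encode a hard instance so that solving it requires roughly $c_\A$ stopwatches acting as an unbounded number of independent counters, while keeping $\|\A\|$, $t_\A$ and $|w|$ all polynomially bounded in the instance size. If such an encoding exists, an algorithm meeting the stated time bound would be an fpt algorithm (with parameter $c_\A$, which corresponds to the parameter of the hard problem) for that hard problem, contradicting the hypothesis.

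The main technical work — and this is exactly what the referenced ``Theorem~\ref{thm:pswa}'' is presumably set up to deliver — is the gadget construction. First I would fix the $\W[t]$-hard source problem and its instances $(x,k)$. Then I would build, in polynomial time, a stopwatch automaton $\A_{x,k}$ and a word $w_{x,k}$ such that: (i) $\A_{x,k}$ accepts $w_{x,k}$ iff $(x,k)$ is a yes-instance; (ii) $c_{\A_{x,k}}$ is bounded by a function of $k$ only (ideally $O(k)$ or $O(k\log k)$); (iii) $\|\A_{x,k}\|$, $t_{\A_{x,k}}$, and $|w_{x,k}|$ are all bounded polynomially in $|x|+k$. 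The stopwatches play the role of the ``weight-$k$'' choices: each stopwatch, by being activated or stopped during designated segments of the input word, records one coordinate of a candidate solution (e.g.\ which of the $n$ vertices is the $i$-th clique element, encoded as an accumulated duration between $0$ and $n$, so $t_\A = n$ is polynomial). The automaton's control states, together with reading the word, then verify the combinatorial constraints (adjacency, distinctness, the weft-$t$ Boolean structure for the general $\W[t]$ case) by comparing stopwatch values against thresholds. The careful part is that verifying all $\binom{k}{2}$ pairwise constraints, or the Boolean circuit of depth $t$, must be done with only $O(k)$ stopwatches and a polynomially large word: one passes over the input repeatedly (the word encodes the input structure, say the adjacency relation, laid out in blocks) and uses the control automaton — whose size is allowed to be polynomial — to sequence through the constraints one at a time, reusing stopwatches.

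The key steps in order: (1) state precisely which parameterized problem and hardness assumption we invoke, noting that ``$\FPT$ does not contain the W-hierarchy'' gives us a level $\W[t]$ with no fpt algorithm; (2) recall the discrete-time stopwatch automaton model from Section~\ref{sec:stopwatch} and the meaning of $c_\A$, $t_\A$; (3) describe the encoding of a source instance into $(\A_{x,k}, w_{x,k})$, with the word carrying the ``data'' (the graph / circuit) and the stopwatches carrying the ``guess''; (4) prove correctness of the reduction (yes-instance $\iff$ acceptance), which is a direct but somewhat tedious bookkeeping argument tracking stopwatch values along accepting runs; (5) verify the size bounds $\|\A\|, t_\A, |w| = (|x|+k)^{O(1)}$ and $c_\A \le g(k)$; (6) conclude: a hypothetical algorithm running in time $(\|\A\|\cdot f(c_\A)\cdot t_\A\cdot|w|)^{O(1)}$ would, composed with the reduction, decide the source problem in time $f'(k)\cdot(|x|+k)^{O(1)}$, i.e.\ in fpt time, contradicting the assumption.

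The main obstacle I expect is step (3)–(5) done simultaneously: making the reduction ``tight'' in all the aspects at once. It is easy to encode \textsc{Clique} into some stopwatch automaton; the difficulty is doing so with the number of stopwatches controlled by $k$ alone (not by $n$) while simultaneously keeping $t_\A$ polynomial rather than exponential and keeping $|w|$ polynomial. In particular, one must resist the temptation to use one stopwatch per vertex (which would make $c_\A$ depend on $n$, killing the argument) or to encode a vertex index in binary via a single small-bounded stopwatch in a way that forces exponentially many control states. The resolution is the standard parameterized-reduction trick of spending $O(k)$ stopwatches as ``registers'' each holding a value in $\{1,\dots,n\}$ (so $t_\A=n$), laying the graph out in the word $w$ so that adjacency queries become pattern-matching against $w$ synchronized with register contents, and letting the polynomially-large control automaton orchestrate the $\binom{k}{2}$ checks sequentially; verifying that this really fits within a polynomial-size control automaton and a polynomial-length word is the crux. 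For the full W-hierarchy statement one additionally has to handle weft-$t$ circuits, which means the control automaton must simulate the evaluation of a bounded-depth unbounded-fan-in circuit against the guessed weight-$k$ assignment — again doable with the control states alone since the circuit is part of the (polynomial-size) input and only the $k$ chosen true inputs need be remembered, in $O(k)$ stopwatches.
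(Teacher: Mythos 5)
Your overall logic (parameterized reduction, then compose with the hypothetical algorithm to get an fpt algorithm for the source problem) matches the paper's, but you take a genuinely different route on the source problem, and there is one logical point you must repair. The paper derives this theorem from Theorem~\ref{thm:pswa}, whose proof is a single reduction from $p$-\textsc{LCS} parameterized by the number of strings plus the alphabet size; by the result of Bodlaender, Downey, Fellows, Hallett and Wareham \cite{lcs2}, that one problem is already hard for \emph{every} level of the W-hierarchy, so one reduction contradicts the stated (weak) hypothesis in one stroke. The gadget is also tailor-made for the automaton model: each of the $k$ registers stores a guessed position in one of the $k$ words, the position is captured by letting a stopwatch run while the automaton reads a prefix of that word inside $w^m$, and letter-equality at the guessed positions is enforced for free by the $\lambda$-labelling of the ``guess'' states. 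This sidesteps exactly the synchronization difficulties you worry about for adjacency lookups in a \textsc{Clique} encoding. Your alternative --- weighted satisfiability for weft-$t$ circuits, with the $k$ chosen inputs held in $O(k)$ registers of bound $n$ and the circuit evaluated by the transition guards --- is also viable in this model (the guards are arbitrary Boolean circuits over the register contents, so a single polynomial-size guard can evaluate the instance circuit on the guessed weight-$k$ assignment), but it costs you a separate reduction for each level $t$.

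The point you must fix is the parenthetical claim that ``we may pick whichever level is most convenient and still get a contradiction.'' The hypothesis is that \emph{some} level $\W[t]$ is not contained in $\FPT$, and you do not know which one; a reduction from a single $\W[1]$-complete problem such as $p$-\textsc{Clique} only yields $\W[1]\subseteq\FPT$, which is consistent with the hypothesis (the paper explicitly notes its assumption is weaker than $\FPT\neq\W[1]$). To contradict the stated hypothesis you must conclude $\W[t]\subseteq\FPT$ for \emph{all} $t$, which requires either your weft-$t$ family of reductions carried out uniformly for every $t$, or the paper's choice of a single source problem hard for the whole hierarchy. Your closing paragraph does supply the former, so the plan is reparable as written; just delete the claim that one convenient level suffices.
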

The complexity-theoretic assumption here is weaker than $\mathsf{FPT}\neq\mathsf{W}[1]$ considered earlier.

\section{Stopwatch automata}\label{sec:stopwatch}

Before giving our definition we informally describe the working of a {\em stopwatch automaton}. A stopwatch automaton is an extension of a finite automaton whose computations happen in discrete time: the automaton can stay for some amount of time in some state and then take an instantaneous transition to another state. 

There are constraints on which transitions can be taken at a given point of time as follows. Time is recorded by a set of {\em stopwatches} $X$, every stopwatch $x\in X$ has a {\em bound}~$\beta(x)$, a maximal time it can record. Every  stopwatch is {\em active} or not in a given state. 
During a run that stays in a given state for a certain amount of time, the value of the active stopwatches increases by this amount of time (up to their bounds) while the 
inactive
stopwatches do not change their value. Transitions between states are labeled with a {\em guard} and an {\em action}. The guard is a condition on the values of the stopwatches that has to be satisfied for the transition to be taken, usually requiring upper or lower bounds on certain stopwatch values. The action modifies stopwatch values, for example, resets some of the stopwatches to value~0.

Instead of transitions, states are labeled by letters of the alphabet.
A stopwatch automaton {\em accepts} a given word if there exists a computation leading from a special state {\em start} to a special state {\em accept} and that 
 {\em reads} the word: staying in a state for 5 time units means reading 5 copies of the letter labelling the state.

\subsection{Abstract stopwatch automata}

We now give the definitions that have been anticipated by the informal description above.

\begin{definition}
An {\em abstract stopwatch automaton} is a tuple $\A=(Q,\Sigma, X, \lambda,\beta, \zeta,\Delta)$ where
\begin{enumerate}\itemsep=0pt
\item[--] $Q$ is a finite set of {\em states} containing the states {\em start} and {\em accept};
\item[--] $\Sigma$ is a finite {\em alphabet};
\item[--] $X$ is a finite set of {\em stopwatches};
\item[--] $\lambda:Q\to\Sigma$;
\item[--] $\beta:X\to \N$ maps every stopwatch $x\in X$ to its {\em bound} $\beta(x)\in\N$;
\item[--] $\zeta\subseteq X\times Q$ contains pairs $(x,q)$ such that
the stopwatch $x$ is  {\em active in} state $q$;
\item[--] $\Delta\subseteq 
Q\times\G\times \Act\times Q$ is a set of {\em transitions}.
\end{enumerate}
Here, $\G$ is the set of {\em abstract guards}  (for $\A$), namely 
sets of assignments, and $\Act$ is the set of {\em abstract actions}  (for $\A$), namely functions from assignments to assignments. An {\em assignment}   (for $\A$) is a function $\xi:X\to \N$ such that $\xi(x)\le\beta(x)$ for all $x\in X$. To be precise, we should speak of a $\beta$-assignment but the $\beta$ will always be clear from the context.
We define the {\em bound of $\A$} to be 
$$
B_\A:=\prod_{x\in X}(\beta(x) +1)
$$ 
understanding that the empty product is 1 so that $\prod_{x\in \emptyset}(\beta(x) +1) :=1$. 
This  is the cardinality of the set of assignments (for $\A$).
We say that a transition $(q,g,\alpha,q')\in\Delta$ is {\em from} $q$, and {\em to} $q'$, and {\em has} abstract guard $g$ and abstract action $\alpha$.

\end{definition}

Computations of stopwatch automata are defined in terms of their corresponding {\em transition systems}.


\medskip

\begin{definition}\label{defi:TSA}
Let  $\A = (Q,\Sigma, X, \lambda,\beta, \zeta,\Delta)$ be an abstract stopwatch automaton. The {\em transition system $\TS(\A)$} of  $\A$ is  given by a set of nodes and labeled edges: a {\em node (of $\TS(\A)$)} is a pair $(q,\xi)$ of a state $q\in Q$ and an assignment $\xi$;  a {\em labeled edge (of $\TS(\A)$)} is a triple $\large((q,\xi),t,(q',\xi')\large)$ for nodes $(q,\xi),(q',\xi')$ and $t\in \N$  such that {\bf either}:
\begin{enumerate}\itemsep=0pt
\item[--] $t=0$ and $(q,g,\alpha,q')\in \Delta$  for an abstract guard $g$ and an abstract action $\alpha$ such that $\xi\in g$ and $\alpha(\xi)=\xi'$, \\\\
{\bf or},\\
\item[--] $t>0$ and $q=q'$ and $\xi'$ is the assignment given by
$$
\xi'(x)=\left\{
\begin{array}{ll}
\min\big\{\xi(x)+t,\beta(x)\big\} &\text{if } (x,q)\in \zeta,\\ 
\xi(x)&\text{else.}
\end{array}
\right.
$$
\end{enumerate}
\end{definition}

For $t\in\N$ we let $\stackrel{t}{\to}$ be the binary relation that contains those pairs $\large((q,\xi),(q',\xi')\large)$ of nodes such that $\large((q,\xi),t,(q',\xi')\large)$ is a labeled edge.

\begin{definition}
A (finite) {\em computation of $\A$} is a finite walk in $\TS(\A)$, i.e., for some $\ell\in\N$ a sequence 
$$
\Big(\big((q_i,\xi_i),t_i,(q_{i+1},\xi_{i+1})\big)\Big)_{i<\ell}
$$ of directed edges of $\TS(\A)$ such that $q_i\neq\mathit{accept}$ for all $i<\ell$; we write this as
$$
(q_0,\xi_0)\stackrel{t_0}{\to}(q_1,\xi_1)\stackrel{t_1}{\to}(q_2,\xi_2)\stackrel{t_2}{\to}\cdots
\stackrel{t_{\ell-1}}{\to}(q_\ell,\xi_\ell).
$$
In this case, we say that the computation is  {\em from} $(q_0,\xi_0)$ and {\em to} $(q_\ell,\xi_\ell)$; 
it is {\em initial} if $\xi_0$ is constantly~0 and $q_0=\mathit{start}$;
it is {\em accepting} if $q_\ell=\textit{accept}$. 
The computation {\em reads} the word $$\lambda(q_0)^{t_0} \lambda(q_1)^{t_1}\cdots \lambda(q_{\ell-1})^{t_{\ell-1}}.$$
\end{definition}

We understand that $\sigma^0$ denotes the empty string for every letter $\sigma$ in the alphabet $\Sigma$ and juxtaposition of strings corresponds to concatenation. Through computations, we define strings and languages accepted by a Stopwatch automaton.

\begin{definition}
The automaton
$\A$ {\em accepts} $w\in\Sigma^*$ if there is an initial accepting computation of $\A$ that reads~$w$. The set of these words is the {\em language $L(\A)$ of $\A$}.
\end{definition}

\begin{remark} The requirement  $q_i\neq\mathit{accept}$ for all $i<\ell$ in the definition of computations means that we interpret $\mathit{accept} $ as a halting state;  it implies that the label $\lambda(\mathit{accept})$ as well as transitions from {\em accept} are irrelevant. Without this condition, $w\in L(\A)$ implies $wa^n\in L(\A)$ for $a:=\lambda(\mathit{accept})$ and all $w\in\Sigma^*$ and $n\in\N$.\end{remark}

\begin{remark} 
Stopwatch automata are straightforwardly explained for continuous time $\R_{\ge 0}$ where they read timed words, and bounds $\beta(x)=\infty$. Stopwatch automata according to
\cite{dima,hkpv}  are such automata where
guards are  Boolean combinations of $x\ge c$ (for $x\in X$ and $c\in\N$), and actions are resets (to 0 of some stopwatches). The emptiness problem for those automata is undecidable \cite{hkpv}. 
So-called \emph{timed automata}  additionally require stopwatches to be active in all states, and have decidable emptiness \cite{alurdill}. The model allowing $x\ge  c+y$ (for $x,y\in X$ and $c\in\N$) in guards still has decidable emptiness and is exponentially more succinct than guards with just Boolean combinations of $x\ge c$ (\cite{diagonalsuccinct}). Allowing more actions is subtle, e.g., emptiness becomes undecidable when 
$x:=x\dotminus1$ or when $x:=2x$
is allowed;  see~\cite{update} for a detailed study. 
\end{remark}

\subsection{Specific stopwatch automata}

To consider an abstract stopwatch automata as an input to an algorithm, we must agree on how to specify the guards and actions, i.e., properties and functions on assignments. This is a somewhat annoying issue because on the one hand our upper bounds on the model-checking complexity turn out to be robust with respect to the choice of this specification in the sense that they scale well with the complexity of computing guards and  actions, so a very general definition is affordable. On the other hand, for natural stopwatch automata including the one we are going to present for the European Traffic Regulation 561, we expect guards and actions to be simple properties and functions. 

As mentioned, typically guards mainly compare certain stopwatch values with constants or other values, and actions do simple re-assignments of values like setting some values to~0. 
Hence our choice on how to specify guards and actions is somewhat arbitrary. To stress the robustness part, we use a general model of computation: Boolean circuits. In natural automata, we expect these circuits to be small. 


An assignment determines for each stopwatch $x\in X$ its bounded value and as such can be specified by 
\[
b_\A:=\sum_{x\in X}\lceil\log(\beta(x)+1)\rceil
\] 
many bits. 
We think of the collection of~$b_\A$ bits as being composed of blocks, with a block of $\lceil\log(\beta(x)+1)\rceil$ bits corresponding to the binary representation of the value of stopwatch $x\in X$ under the assignment. 
A {\em specific guard} is a Boolean circuit with one output gate and~$b_\A$ many input gates. 
A specific guard determines an abstract one in the obvious way.

A {\em specific action} is a Boolean circuit with $b_\A$ many output gates  and  $b_\A$ many input gates. 
On input an assignment, for each clock $x\in X$, it computes the binary representation of a value $v_x\in \N$ in the block  of $\lceil\log(\beta(x)+1)\rceil$ output gates corresponding to $x$. 
Furthermore, we agree that the assignment computed by the circuit maps $x$ to $\min\{v_x,\beta(x)\}$ thereby mapping assignments to assignments. 
A specific action determines an abstract one in the obvious way.

A {\em specific stopwatch automaton} is defined like an abstract one but with specific guards and actions replacing abstract ones. A specific stopwatch automaton determines an abstract one taking the abstract guards and actions as those determined by the specific ones. Computations  of specific stopwatch automata and the language they accept are defined as those of the corresponding abstract one. The {\em size} $\|\A\|$of specific stopwatch $\A$ automaton is the length of a reasonable binary encoding of it.  

We shall only be concerned with specific stopwatch automata and shall mostly omit the qualification `specific'.

\subsection{A definitorial variation}

To showcase the robustness of our definition and for later use, 
 we mention a natural variation of our definition and show it is inessential. 
 
 Define a {\em $P(\Sigma)$-labeled (specific) stopwatch automaton} $\A=(Q,\Sigma,X,\lambda,\beta,\zeta,\Delta)$ like a (specific) stopwatch automaton but with $\lambda:Q\to P(\Sigma)\setminus\{\emptyset\}$. A computation
\begin{eqnarray}\label{eq:compA}
&&(q_0,\xi_0)\stackrel{t_0}{\to}(q_1,\xi_1)\stackrel{t_1}{\to}(q_2,\xi_2)\stackrel{t_2}{\to}\cdots
\stackrel{t_{\ell-1}}{\to}(q_\ell,\xi_\ell).
\end{eqnarray}
is said to {\em read} any word $a_0^{t_0}\cdots a_{\ell-1}^{t_{\ell-1}}$ with $a_i\in\lambda(q_i)$ for every $i<\ell$. The language $L(\A)$ of $\A$ is defined as before.

A stopwatch automaton  can be seen as a
 $P(\Sigma)$-labeled stopwatch automaton whose state labels are singletons.
  Conversely, given a $P(\Sigma)$-labeled stopwatch automaton $\A=(Q,\Sigma,X,\lambda,\beta,\zeta,\Delta)$ we define a stopwatch automaton $\A'=(Q',\Sigma,X,\lambda',\beta,\zeta',\Delta')$ as follows: 
its states $Q'$ are pairs $(q,a)\in Q\times \Sigma$ such that $a\in\lambda(q)$; for the start and accept states choose any $(q,a)$ for $q$ the start, resp., accept state of $\A$. The  $\lambda'$-label of $(q,a)\in Q'$ is $a$ and
 stopwatch $x\in X$ is active (according $\zeta'$) in $(q,a)$ if and only if it is active in $q$ (according~$\zeta$). 
  We let $\Delta'$ contain a transition $((q,a),g,\alpha,(q',  a'))$ if $(q,a),(q', a')\in Q'$ and 
$(q,g,\alpha, q')\in\Delta$. Further, we add transitions with trivial guards and actions from $(q,a)\in Q'$ to $(q,a')\in Q'$.

Given a computation of $\A$ as above, choose any $a_i\in\lambda(q_i)$ for every $i<\ell$. Then 
\begin{eqnarray}\label{eq:compA'}
&&
((q_0,a_0),\xi_0)\stackrel{t_0}{\to}((q_1,a_1),\xi_1)\stackrel{t_1}{\to}((q_2,a_2),\xi_2)\stackrel{t_2}{\to}\cdots
\stackrel{t_{\ell-1}}{\to}((q_\ell,a_\ell),\xi_\ell).
\end{eqnarray}
is a computation of $\A'$. The choice of the $a_i$ can be made so that this computation reads the same word as the computation \eqref{eq:compA}. If \eqref{eq:compA} is initial (accepting), make \eqref{eq:compA'} initial (accepting) by adding a $\stackrel{0}{\to}$-transition from (to) the start (accept) state of $\A'$. 

Conversely, if \eqref{eq:compA'} is a computation of $\A'$, then \eqref{eq:compA} is a computation of $\A$ that reads the same word. 
To sum up: 

\begin{proposition} \label{prop:plambda} There is a polynomial time computable function that maps  every $P(\Sigma)$-labeled 
 stopwatch automaton  $\A$ to a stopwatch automaton 
  $\A'$ with $B_{\A'}=B_{\A}$ and $L(\A)=L(\A')$.
\end{proposition}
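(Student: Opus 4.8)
The construction of $\A'$ from $\A$ is already spelled out in full in the preceding paragraphs, so the plan is essentially to verify that this construction has the three advertised properties and runs in polynomial time. First I would check the size/bound claims. The state set $Q'\subseteq Q\times\Sigma$ has at most $|Q|\cdot|\Sigma|$ elements, the stopwatch set $X$ is literally unchanged, and the bound map $\beta$ is copied verbatim, so $B_{\A'}=\prod_{x\in X}(\beta(x)+1)=B_{\A}$ is immediate from the definition of $B_{(\cdot)}$. The transitions of $\A'$ are of two kinds: copies of transitions of $\A$ lifted to all compatible pairs (again at most $|\Delta|\cdot|\Sigma|^2$ of them), and the trivially-guarded ``letter-switching'' transitions from $(q,a)$ to $(q,a')$ (at most $|Q|\cdot|\Sigma|^2$ of them); the specific guards and actions involved are either copied from $\A$ or are a fixed trivial circuit, so each component of the encoding of $\A'$ is polynomially bounded and computable in polynomial time. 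This settles the complexity claim.

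The substantive part is $L(\A)=L(\A')$, and both inclusions are already described in the text; I would just organize them cleanly. For $L(\A)\subseteq L(\A')$: given an initial accepting computation \eqref{eq:compA} of $\A$ reading $w=a_0^{t_0}\cdots a_{\ell-1}^{t_{\ell-1}}$ with $a_i\in\lambda(q_i)$, lift it to \eqref{eq:compA'} by replacing each node $(q_i,\xi_i)$ with $((q_i,a_i),\xi_i)$. The key points to verify are: each $(q_i,a_i)$ is a genuine state of $\A'$ since $a_i\in\lambda(q_i)$; each $0$-step of \eqref{eq:compA} is matched because $\Delta'$ contains the lifted transition $((q_i,a_i),g,\alpha,(q_{i+1},a_{i+1}))$ whenever $(q_i,g,\alpha,q_{i+1})\in\Delta$ and both endpoints lie in $Q'$; each $t>0$ step is matched because $\zeta'$ declares $x$ active in $(q_i,a_i)$ iff $\zeta$ declares $x$ active in $q_i$, so the $\min$-update in Definition~\ref{defi:TSA} is identical. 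Finally one prepends a $0$-step from $\A'$'s start state $(\mathit{start},a)$ to $((q_0,a_0),\xi_0)$ using a letter-switching transition (noting $\xi_0$ is constantly $0$, matching an initial computation) and appends a $0$-step to $\A'$'s accept state; these extra $0$-steps do not change the word read since $\lambda'(\mathit{start})$-power with exponent $0$ is the empty string, and $\mathit{accept}$ in $\A'$ being a halting state is unaffected. For $L(\A')\subseteq L(\A)$: project any computation \eqref{eq:compA'} of $\A'$ down to \eqref{eq:compA} by forgetting the letter component, i.e.\ map $((q_i,a_i),\xi_i)\mapsto(q_i,\xi_i)$. A step of $\A'$ is either a lifted transition, which projects to a transition of $\A$, or a letter-switching step, which projects to a $0$-step with $q_i=q_{i+1}$ and $\xi_i=\xi_{i+1}$ (trivial guard and action) — this is a legal stay-in-place edge of $\TS(\A)$ with $t=0$, so it remains a valid computation of $\A$; and since $a_i\in\lambda(q_i)$ throughout, the word read is unchanged. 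Initiality and acceptance transfer because the start/accept states of $\A'$ project to those of $\A$ and the assignment component is untouched.

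The only mildly delicate bookkeeping — and the part I would be most careful about — is the handling of the boundary $0$-transitions and the $\mathit{accept}$-as-halting-state convention: one must make sure that inserting a $0$-step to reach $(\mathit{start},a)$ or leave towards $\mathit{accept}$ does not violate the requirement $q_i\neq\mathit{accept}$ for $i<\ell$, and that a $P(\Sigma)$-labeled computation that is ``initial'' in the original sense (starting at $\mathit{start}$ with the zero assignment) corresponds correctly after the single prepended $0$-step. This is routine but is the one place where an off-by-one in the computation could creep in. Everything else is a direct unwinding of Definitions~\ref{defi:TSA} and the surrounding ones, so no genuine obstacle is expected.
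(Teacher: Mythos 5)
Your proposal is correct and follows essentially the same construction and computation-correspondence argument as the paper (lifting states to pairs $(q,a)$, adding trivially-guarded letter-switching transitions, and translating computations in both directions with boundary $0$-steps for initiality and acceptance). The extra bookkeeping you flag — the bound $B_{\A'}=B_{\A}$, the polynomial size of the encoding, and the accept-as-halting convention — is exactly what the paper leaves implicit, so no divergence to report.
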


\section{A stopwatch automaton for Regulation 561}\label{sec:automaton}

Aside expressivity and tractability, we stressed naturality as a criterion of models for algorithmic law.
In this section and the next section we make the point for stopwatch automata by implementing Regulation 561. 
As already mentioned, Regulation 561 is
 a complex set of articles concerning sequences of activities of truck drivers. Possible activities are {\em driving}, {\em resting} or {\em other work}.  
 The activities over time are recorded by tachographs and formally understood as words over the alphabet $\Sigma:=\{d,r,w\}$. In the real world time units are minutes. Regulation 561 
limits driving and work times by demanding breaks, daily rest periods and weekly rest periods, both of which can be regular or reduced under various conditions.


We construct a stopwatch automaton that accepts precisely the words over~$\Sigma$ that represent activity sequences that are legal according to Regulation 561. The states~$Q$ of the automaton are:
\begin{quote}\it
drive,\ \ break, \ \ other work,\ \  

reduced daily, \ \ regular daily, \ \ reduced weekly,\ \  regular weekly, 

compensate1, \ \ compensate2,  \ \ week, \ \ start, \ \  accept. 
\end{quote}
The states in the first row have the obvious meaning. The states in the second row represent different kinds of rest periods. The function $\lambda$ labels {\em other work} by $w$, {\em drive} by~$d$ and all other states by~$r$. The  states {\em compensate1} and {\em compensate2}  are used for the most complicated part of Regulation 561 that demands certain compensating rest periods whenever a weekly rest period is reduced. The state  {\em week} is auxiliary, and 
accepting computations spend 0 time in it. The same is true for {\em start}. So, the $\lambda$-labels of {\em start} and {\em week} do not matter.



We construct the automaton stepwise implementing one article after the next, introducing stopwatches along the way. For each stopwatch $x$ we state its bound $\beta(x)$ and the  states $q$ in which  it is active, i.e., specifying the pairs $(x,q)\in \zeta$. We shall refer to stopwatches that are nowhere active as {\em counters} or {\em registers}, depending on their informal usage; a {\em bit} is a counter with bound 1.

We describe a transition $(q,g,\alpha,q')$ saying that there is a transition from $q$ to $q'$ with guard $g$ and action $\alpha$. We specify guards by a list of expressions of the form $z\le r$ or $z+z'>r$ or the like for $r\in \N$; this is shorthand for a circuit that checks the conjunction of these conditions. We specify actions by  lists of expressions of the form $z:=r$ or $z:= z'+r$ or the like for $z, z' \in X$ and $r\in\N$; this is shorthand for the action that carries out the stated re-assignments of values in the order given by the list. 
These lists are also described stepwise treating one article after the next. As a mode of speech, when treating a particular law, we shall say that a given transition has this or that action or guard: what we mean is that the  actions or guards of the transition of the final automaton is given by the lists of these statements in order of appearance (mostly the order won't matter).

\medskip

We illustrate this mode of speech by describing the automaton around {\em start}:
let~$x_{\mathit{start}}$ be a stopwatch with bound 1 and active at {\em start}; there are no transitions to {\em start} and transitions from {\em start} to all other states except {\em week}; these transitions have guard
 $x_{\mathit{start}}=0$. 
 Later these transitions shall get more guards and also some actions. These stipulations  mean more precisely the following:
the bound $\beta$ satisfies  $\beta(x_{\mathit{start}})=1$;
 the  set  $\Delta$ contains for any state $q\notin\{\textit{week},\textit{start}\}$ the transition $(\textit{start}, g,\alpha,q)$ where the guard $g$ checks the conjunction of $x_{\mathit{start}}=0$ and the other guards introduced later, and the action $\alpha$ carries out the assignments and re-assignments as specified later; further, $(x_{\mathit{start}},q)\in\zeta$
  if and only if $q=\textit{start}$. 


\medskip

We loosely divide Regulation 561 into daily and weekly demands. We first describe how to implement the daily demands using the first 5 states and {\em daily driving} and {\em accept}. The other states will be used to implement the weekly demands. 

During the construction we shall explicitly collect the constants appearing in the articles and denote them by $t_0,\ldots, t_{16}$. Our construction is such that these constants determine all guards, actions and bounds in an obvious way. Knowing this will be useful for the discussion in later sections.

\subsection{Daily demands}

We use the first 3 states to implement the the law about {\em continuous driving}:
\begin{quote}\sf
Article 7 (1st part): After a driving period of four and a half hours a driver shall take an uninterrupted break of not less than 45 minutes, unless he takes a rest period.
\end{quote}

We use a stopwatch $x_{cd}$ with bound $4.5h+1=271$ that is active in {\em drive}. Further, we use a stopwatch $x_{\textit{break}}$ with bound $9h$ that is active in {\em break}. 
For the law under consideration we could use the bound of $4.5h+1$, the reason we use $9h$ will become clear later when implementing Article 8.7.

There are transitions back and forth between any two of the states {\em break}, {\em drive} and {\em other work}. We give the transitions to {\em break} action $x_{\textit{break}}:=0$, and the transitions from {\em drive} the guard  $x_{cd}\le 4.5h$. This ensures that a computation  staying in {\em drive} for more than 4.5h will not be able to leave this state, so cannot be accepting. We add two transitions from {\em break}  to both {\em drive} and {\em other work} with guard $x_{\textit{break}}\ge 45$ and action $x_{\textit{break}}:=0;\  x_{\textit{cd}}:=0$.


Transitions to {\em regular daily} and {\em reduced daily} have action $x_{cd}:=0$: this ensures
the ``unless\ldots'' statement in Article 7 (transitions to weekly rest periods described below will also have this action). The first part of this Article 7 uses constants $t_0:=4.5h=270;\  t_1:=45$ (the constant $9h$ is denominated later by $t_{16}$). 

\begin{figure}[ht]
\centering
\begin{tikzpicture}[accepting/.style={double distance=3pt}]
\node[state] (s0) {$\begin{array}{cc}\mathit{drive}\\ x_{\textit{cd}}\end{array}$};
\node[state, right of=s0,xshift=11cm] (s1) {$\begin{array}{cc}\mathit{other}\\ \mathit{work}
\end{array}$};
\node[state, right of=s0,xshift=5cm,yshift=4cm] (s2) {$\begin{array}{cc}\mathit{break}\\ x_{\textit{break}}
\end{array}$};

\draw   
(s0) edge[bend left, left] node{$\begin{array}{c}x_{\textit{cd}}\le 270\\x_{\textit{break}}:=0\end{array}$} (s2)
(s2) edge[above] node{} (s0)
(s2) edge[bend left,right] node{\hspace{-2ex}$\begin{array}{c}x_{\textit{break}}\ge 45\\x_{cd}:=0\end{array}$} (s0)
(s2) edge[above] node{} (s1)
(s2) edge[bend right,left ] node{$\begin{array}{c}x_{\textit{break}}\ge 45\\x_{cd}:=0\end{array}$} (s1)
(s1) edge[bend right, right] node{$x_{\textit{break}}:=0$} (s2)
(s2) edge[above] node{} (s0)
(s0) edge[bend right, above] node{$x_{\textit{cd}}\le 270$} (s1)
(s1) edge[ above] node{} (s0)
;
\end{tikzpicture}
 \caption{Illustration of  Article 7 (first part);  stopwatches $x_{\textit{cd}},x_{\textit{break}}$ are shown at the states where they are active.}
\end{figure}
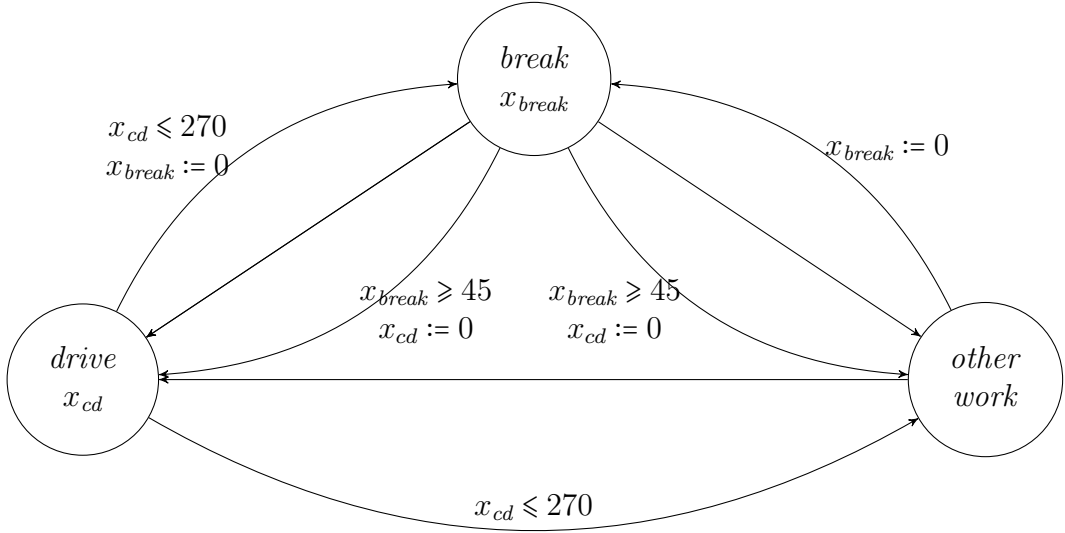

\noindent Article 7 allows to divide the demanded break into two shorter ones:
\begin{quote}\sf
Article 7 (2nd part): This break may be replaced by a break of at least 15 minutes followed by a break of at least 30 minutes each distributed over the period in such a way as to comply with the provisions of the first paragraph.
\end{quote}

To implement this possibility, we use a bit $b_{rb}$ that, intuitively, indicates a reduced break. We add  transitions from {\em break} to {\em other work} and {\em drive} with guard $15\le x_{\textit{break}}<45$ and action $b_{rb}:=1;\ x_{\textit{break}}:=0$. We note that these transitions do not have action $x_{cd}:=0$. 
We add transitions
 from {\em break} to {\em other work} and {\em drive} with guards $b_{rb}=1$ and $30\le x_{\textit{break}}$ and action $b_{rb}:=0;\ x_{cd}:=0; \ x_{\textit{break}}:=0$. 
Transitions to states representing daily or weekly rests introduced below all get action $b_{rb}:=0$.
The second part of Article 7 uses the constant $t_2:=15$; we do not introduce a name for 30 but view this constant as equal to $t_1-t_2=45-15$.\medskip


\noindent Article 4.(k) defines `daily driving time' as the accumulated driving time between two daily rest periods. 
According to Article 4.(g)  daily rest periods can be regular or reduced, the former meaning at least $11h$ of rest, the latter means less than $11h$ but at least $9h$ of rest. These are represented by the states {\em regular daily} and {\em reduced daily}.
\begin{quote}\sf
Article 8.1: A driver shall take daily and weekly rest periods.

Article 8.2: Within each period of 24 hours after the end of the previous daily rest period or weekly rest period a driver shall have taken a new daily rest period. If the portion of the daily rest period which falls within that 24 hour period is at least nine hours but less than 11 hours, then the daily rest period in question shall be regarded as a reduced daily rest period.

\end{quote}

Weekly rest periods are treated in the next subsection. We use a stopwatch $x_{\textit{day}}$ with bound $24h+1$ which is active in all states except {\em accept} and {\em start},
 and a stopwatch $x_{dr}$ with bound $11h$ active in {\em reduced daily} and {\em regular daily}. We have transitions back and forth between the states {\em break}, {\em drive}, {\em other work} and the states {\em regular daily}, {\em reduced daily}. The transitions to {\em regular daily} are guarded by $ x_{\textit{day}}\le 24h-11h=780;\ b_{rb}=0$; transitions to  {\em reduced daily} are guarded by $x_{\textit{day}}\le 24h-9h=900;\ b_{rb}=0$. 
The transitions from 
{\em regular daily} are guarded by $x_{dr}\ge 11h$, and the transitions from {\em reduced daily} are guarded by $11h>x_{dr}\ge 9h$ -- later we shall refer to these guards as  {\em definitorial} for the states {\em regular daily} and {\em reduced daily}. Transitions from {\em regular daily}, {\em reduced daily} have action $x_{dr}:=0, x_{\textit{day}}:=0$. 

All transitions to {\em accept} get guard $x_{\textit{day}}\le 24h$. Note that an accepting computation cannot involve an assignment satisfying $x_{\textit{day}}>24h$, so eventually has to visit and leave {\em regular daily} or {\em reduced daily} (or their weekly counterparts, see below). This ensures  Article 8.1  for daily rest periods. These laws use constants
$t_3:=24h=1440, t_4:=11h=660, t_5:=9h=540$.

\medskip

\noindent Actually, the definition of regular daily rest periods in  Article 4.(g)  is more complicated:
\begin{quote}\sf
`regular daily rest period' means any period of rest of at least 11 hours. Alter\-natively, this regular daily rest period may be taken in two periods, the first of which must be an uninterrupted period of at least 3 hours and the second an un\-in\-ter\-rupted period of at least nine hours,
\end{quote}

To implement this we use a bit $b_{dr}$ indicating that a $3h$ part of a regular daily rest period has been taken. We duplicate the transitions from {\em regular daily} but replace the guard $x_{dr}\ge 11h$ by $x_{dr}\ge 9h,b_{dr}=1$.  To add the possibility of taking a partial regular daily rest period of at least $3h$ we add transitions
from {\em regular daily} to {\em drive} and {\em other work} with 
guards $b_{dr}=0,3h\le x_{dr}<11h$ and action $b_{dr}:=1$; note these transitions do not have action
$x_{\textit{day}}:=0$. All transitions with action $x_{\textit{day}}:=0$ also get action $b_{dr}:=0$, including those modeling weekly rest periods described below. This uses the constants $t_6=3h=180,t_7:=9h=540$.

\medskip

\noindent The final daily demand constrains daily driving times:

\begin{quote}\sf
Article 6.1: The daily driving time shall not exceed nine hours. However, the daily driving time may be extended to at most 10 hours not more than twice during the week.
\end{quote}

To implement Article 6.1 we use a stopwatch $x_{dd}$ active at {\em drive} with bound $10h+1$ to measure the daily driving time. 
Additionally,  we use a counter $c_{dd}$ with bound $3$. As described later, this counter will be reset to 0 when the week changes. Duplicate the transitions to {\em regular daily} and {\em reduced daily}: one gets guard $x_{dd}\le 9h$, the other guard $9h<x_{dd}\le 10h$ and action $c_{dd}:=c_{dd}+1$. Transitions from   {\em regular daily} and {\em reduced daily} get guard $c_{dd}\le 2$. This used constants $t_8:=10h=600$ and $t_9:=9h=540$.

\subsection{Weekly demands}

Article 4(i) defines a week as a calendar week, i.e., as the time between Monday 00:00 and Sunday 24:00.
Our formalization of real tachograph recordings by timed words replaces the  time-points of tachograph recordings by numbers starting from 0. Hence time is shifted and the information of the beginning of weeks is lost. A possibility to remedy this is to use 
timed  words where the beginnings of weeks are marked, or at least the first of them. For simplicity, we restrict attention to tachograph recordings starting at the beginning of a week, that is, we pretend that time-point 0 starts a week. We then leave it to the automaton to determine the time-points when weeks change. 

To this end, we use the auxiliary state {\em week} and a stopwatch $x_{\textit{week}}$ with bound $7\cdot 24h+1=168h+1$ that is active at all states except {\em accept} and {\em start}. 
All transitions to {\em accept} are guarded by $x_{\textit{week}}\le 168h$. The state {\em week} has incoming transitions from all  states except {\em accept} and transitions to all states except {\em start}. 
All these transitions are guarded by $x_{\textit{week}}= 168h$ and the outgoing transitions have actions $x_{\textit{week}}:=0$ and $c_{dd}:=0$ (see the implementation of  Article 6.1 above). 
This ensures that every accepting computation of $\A$ enters {\em week} for 0 time units exactly every week, i.e., every $168h$.

Additionally, we want the automaton to switch from {\em week} back  to the state it came from. To this end we introduce a bit $b_q$ for each state $q\neq\textit{accept}$. We give the transition from $q$ to {\em week} the action $b_q:=1$, and the transition from {\em week} to $q$ the guard $b_q=1$ and the  action $b_q:=0$.
The transition from {\em week} to {\em accept} has no guard involving the bits $b_q$. This uses the constant
$t_{10}:=168h=10080$.

Much of the following implementation work is done by adding guards and actions to the transitions from and to {\em week}. For example,
we can readily implement
\begin{quote}\sf
Article 6.2: The weekly driving time shall not exceed 56 hours and shall not result in the maximum weekly working time laid down in Directive 2002/15/EC being exceeded.

Article 6.3: The total accumulated driving time during any two consecutive weeks shall not exceed 90 hours.
\end{quote}

The time laid down by Directive 2002/15/EC is $60h$.
Use a stopwatch $x_{\textit{ww}}$ with bound  $60h+1$ that is active at {\em drive} and {\em other work}. Use a stopwatch  $x_{dw}$ with bound $56h+1$ active at {\em drive}.
To implement Article~6.2, the transitions   to {\em week} and {\em accept} have guard
$x_{dw}\le56h, x_{\textit{ww}}\le 60h$, and the transitions from {\em week} have action $x_{dw}:=0,x_{\textit{ww}}:=0$. Note that accepting computations contain only nodes with assignments satisfying $x_{dw}\le56h$ and $x_{\textit{ww}}\le 60h$. This implements Article 6.2.

To implement Article 6.3
we have to remember the value $x_{dw}$ of the previous week. We use a register~$x'_{dw}$ with the same bound as $x_{dw}$  and give the transitions from {\em week} the action $x'_{dw}:=x_{dw}$. Note $x'_{dw}$ functions like a register in that it just stores a value. We then guard all transitions to {\em accept} by $x'_{dw}+x_{dw}\le 90h$.
These articles use constants $ t_{11}:=56h=3360, t_{12}:=60h=3600$ and $ t_{13}:=90h=5400$.

\medskip

We now treat the articles concerning weekly rest periods. 
According to Article 4.(h), weekly rest periods can be regular or reduced, the former meaning at least $45h$ of rest, the latter means less than $45h$ but at least $24h$ of rest. These rest periods  are represented by the states {\em regular weekly} and {\em reduced weekly}. 

To implement their definition we use a stopwatch $x_{wr}$ with bound $45h$ active in these two states. For the two states we add transitions from and  to {\em drive} and {\em other work} and transitions to {\em accept}: those from {\em regular weekly} have guard $x_{wr}\ge 45h$ 
and action $x_{wr}:=0$, and those from {\em reduced weekly} have guards $45h>x_{wr}\ge 24h$ and action $x_{wr}:=0$. Later we shall refer to these guards as {\em definitorial guards} for {\em regular weekly} and {\em reduced weekly}, respectively. This uses the constants  $t_{14}:=45h=2700, t_{15}:=24h=1440$.

\medskip

\noindent We start with some easy implementations:
\begin{quote}\sf
Article 8.6 (3rd part): A weekly rest period shall start no later than at the end of six 24-hour periods from the end of the previous weekly rest period.

Article 8.3: A daily rest period may be extended to make a regular weekly rest period or a reduced weekly rest period.

Article 8.4: A driver may have at most three reduced daily rest periods between any two weekly rest periods.
\end{quote}

Article 8.6 (3rd part) is implemented with the help of a stopwatch $x_{\textit{pw}}$ that measures the time since the previous weekly rest period. It has bound $6\cdot 24h+1$ and
is active in all states excepot {\em start} and {\em accept}. We give the transitions to {\em regular weekly} and {\em reduced weekly} the guard $x_{\textit{pw}}\le 6\cdot 24h$, and the transitions from these two states the action~$x_{\textit{pw}}:=0$. This law uses constant $t_{16}:=6\cdot24h=8640$.

For Article 8.3 we simply copy the guards and actions of the transitions from {\em drive} and {\em other work} to {\em regular daily} to the corresponding transitions to both {\em regular weekly} and {\em reduced weekly}. Below we shall add more guards and actions. For Article 8.4 we use a counter $c_{rd}$ with bound $4$. 
We add guard $c_{rd}\le 2$
 and action $c_{rd}:=c_{rd}+1$ to the transitions to  {\em reduced daily} and the action $c_{rd}:=0$ to the transitions leaving {\em reduced weekly} and {\em regular weekly}.

\medskip

\noindent 
We still have to implement Article 8.1 for weekly rest periods, and additionally
\begin{quote}\sf
Article 8.9: A weekly rest period that falls in two weeks may be counted in either week, but not in both.
\end{quote}

We use two bits $b_{wr},b_{\textit{used}}$ meant to indicate whether a weekly rest period has been taken in the current week, and whether the current weekly rest period is used for this. 
The transitions from  {\em drive} or {\em other work} to
{\em reduced weekly} or {\em regular weekly} are duplicated: one gets guard $b_{wr}=0$ and action $b_{\textit{used}}:=1;\ b_{wr}:=1$, the other gets no further guards and actions. Transitions from {\em reduced weekly} or {\em regular weekly} get action $b_{\textit{used}}:=0$. The transitions to {\em week} get guard $b_{wr}=1$.

Each transition from {\em week} to {\em reduced weekly} or {\em regular weekly} is triplicated: the first gets additional guard $b_{\textit{used}}=1$ and action $b_{\textit{used}}:=0;\ b_{wr}:=0$, the second gets guard $b_{\textit{used}}=0$ and action $b_{wr}:=0$, and the third gets guard $b_{\textit{used}}=0$ and action $b_{\textit{used}}:=1;\ b_{wr}:=1$. 
This means that when the week changes during a weekly rest period and this rest period is not used, it can be used for the next week.

\medskip

\noindent The most complicated part of Regulation 561 are the rules governing reductions of weekly rest periods.  
The regulation starts as follows:
\begin{quote}\sf
Article 8.6 (1st part): In any two consecutive weeks a driver shall take at least
two regular weekly rest periods, or one regular weekly rest period and one reduced weekly rest period of at least 24 hours. 
\end{quote}

We use a bit $b_{rw}$ indicating whether the previous weekly rest period was reduced: transitions to {\em reduced weekly} have guard $b_{rw}=0$ and action $b_{rw}:=1$. Transitions to {\em regular weekly} have action $b_{rw}:=0$.
The regulation continues as follows:
\begin{quote}\sf
Article 8.6 (2nd part): However, the reduction shall be compensated by an equivalent period of rest taken en bloc before the end of the third week following the week in question.

Article 8.7:  Any rest taken as compensation for a reduced weekly rest period shall be attached to another rest period of at least nine hours.
\end{quote}

We introduce two registers $x_{c1},x_{c2}$ with bounds $45h-24h$.
We shall use the following informal mode of speech for the discussion: a reduced weekly rest period creates a `compensation obligation', namely an additional resting time $x_{c1}>0$ or $x_{c2}>1$. The obligations are `fulfilled' by setting these registers back to~0. Note that
compensation obligations are created by reduced weekly rest periods and, by Article 8.6 (1st part), this can happen at most every other week. As obligations have to be fulfilled  within 3 weeks, at any given time  a legal driver can have at most two obligations. 

We now give the implementation. Obligations are produced by transitions from {\em reduced weekly} (recall $x_{wr}$ records the resting time in {\em reduced weekly}): duplicate each such transition, give one guard $x_{c1}=0$ and action $x_{c1}:=45h-x_{wr}$, and the other guard $x_{c1}>0;\ x_{c2}=0$ and action $x_{c2}:=45h-x_{wr}$.
The 3 week deadline to fulfill the obligations is implemented by two counters $c_{c1},c_{c2}$ with bound 4. These counters are increased by transitions from {\em week} but only if some obligation is actually recorded: 
transitions from {\em week} get action $c_{c1}:= c_{c1}+\mathrm{sgn}(x_{c1});\ c_{c2}:= c_{c2}+\mathrm{sgn}(x_{c2})$. To ensure the deadline, transitions to {\em week} get guard $c_{c1}\le3;\ c_{c2}\le3$.

We now implement a way to fullfill obligations, i.e., to set $x_{c1}$ and $x_{c2}$ back to 0. This is done with the states {\em compensate1} and {\em compensate2} whose $\lambda$-label is $r$. We use a stopwatch~$x_{cr}$ with bound $45h-24h$ active at these states. We describe the transitions involving {\em compensate1}.  It receives transitions from the states with $\lambda$-label $r$, that is, {\em regular daily},
{\em reduced daily}, {\em regular weekly}, {\em reduced weekly} and {\em break}. The transition from
{\em break} has guard $x_{\textit{break}}\ge 9h$, the others have their respective definitorial guards (e.g., the one from {\em regular weekly} has guard $x_{wr}\ge 45h$). Transitions from {\em compensate1} go to 
{\em drive}, {\em other work} and {\em accept}. These have guard $x_{cr}\ge x_{c1}$ and action $x_{c1}:=0:\ c_{c1}:=0$. 
Additionally, we already introduced transitions from and to {\em week}: the transition to {\em week} is duplicated, one gets guard $x_{cr}<x_{c1}$, the other gets guard $x_{cr}\ge x_{c1}$ and action $x_{c1}:=0$. Thus, when the week changes during compensation and at a time-point when the obligation  is fulfilled, the counter is not increased.

The transitions from and to {\em compensate2} are analogous with $x_{c2},\ c_{c2}$ replacing $x_{c1},\ c_{c1}$.
These laws use the constant $t_{16}:=9h=540$; the bound $45h-24h=1560$ equals $t_{14}-t_{15}$.

\medskip

This finishes the definition of our automaton. We close this section with some remarks on the formalization:

\begin{remark}\
\begin{enumerate}\itemsep=0pt
\item[(a)] Regulation 561 contains a few laws concerning multi-manning that gives rise to an additional activity {\em available} and a distinction between breaks and rests. This is omitted in our treatment.

\item[(b)]  Article 7.2 is formally unclean: the second paragraph allows an exception to the first that  obviously cannot ``comply with the provisions of the first paragraph". A reasonable formalization requires an interpretational change to the law as written. The following two points or \cite{drive,Errezil:2019:Homologation} give  more such examples.

\item[(c)]  The definition in Article 4.(k) forgets the boundary case of a new driver:
without any (daily) rest period there cannot be any daily driving time. A similar problem appears with Article 8.6 (3rd part) when there is no previous weekly rest period.

\item[(d)] Concerning Article 6.1, recall that daily driving times are periods delimited by daily rest periods and a week is defined as calendar week starting at Monday 00:00. Consider a $10h$ extended daily driving time starting on a Sunday and ending on a Monday. To which one of the two weeks should it be counted? The law seems underspecified here. Our formalization 
assigns it to the week that starts on Monday. Various tachograph readers make different choices. For example, the software \textit{Police Controller} has an option to fix the choices or to choose the distribution as to minimize the fine~\cite{guillermo}.

\item[(e)] The nomenclature in Regulation 561 is confusing. A {\em day} is determined by daily rest periods, a {\em week} by the calendar, while  {\em weekly} (e.g., in Article 8.9) does not refer to calendar weeks. Additionally, the regulation does not state what should be done when a leap second is added on a Sunday so that the time 24:00:01 exists.

\item[(f)] For example, $(dr)^{270}$ is legal according to Article 7 but likely not  in line with the spirit of the law. Another regulation ((EU) 2016/799) stipulates that any minute of rest between two minutes of driving will be considered as driving -- outruling the above example. Then 
$(ddrr)^{135}$ is still legal. We expect that it is generally easy to construct artificial counterintuitive cases. 
\end{enumerate}
\end{remark}

\section{Theory of stopwatch automata}\label{sec:thy}

In this section we 
observe that stopwatch automata have the same expressive power as {\sf MSO} over finite words but a relatively tame model-checking complexity. We also give efficient algorithms for consistency-checking and scheduling (see Section~\ref{sec:formalmodel}). Finally, we mention a version of stopwatch automata going beyond {\sf MSO}.

\subsection{Expressivity}

\begin{lemma}\label{lem:regular} Every regular language is the language of some stopwatch automaton.
\end{lemma}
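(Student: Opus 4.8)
The plan is to take a non-deterministic finite automaton $\B=(Q_\B,\Sigma,I,F,\Delta_\B)$ with $L(\B)=L$ and massage it into a stopwatch automaton $\A$ with no stopwatches at all (so $X=\emptyset$, which is allowed: recall $B_\A=1$ in that case, and the set of assignments is the single empty assignment). The only genuine mismatches between finite automata and our stopwatch automata are cosmetic: (i) a finite automaton carries letters on \emph{transitions} whereas a stopwatch automaton carries them on \emph{states} via $\lambda$; (ii) a stopwatch automaton has distinguished singleton start/accept states rather than sets $I,F$; and (iii) a computation of a stopwatch automaton reads one copy of $\lambda(q_i)$ for each time step $t_i$, so I want to use time steps $t_i=1$ to read one letter at a time, and $t_i=0$ for the "bookkeeping" transitions that do not consume input.

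First I would pass to the $P(\Sigma)$-labeled variant (Proposition~\ref{prop:plambda}) to make the construction cleaner, then convert back at the end; actually it is simplest to avoid $P(\Sigma)$-labels entirely and argue directly as follows. Introduce the state set $Q_\A := (Q_\B\times\Sigma)\cup\{\mathit{start},\mathit{accept}\}$, set $\lambda(p,a):=a$ for $(p,a)\in Q_\B\times\Sigma$ and give $\lambda$ arbitrary values on $\mathit{start},\mathit{accept}$. The idea is that the automaton is in state $(p,a)$ precisely when it is about to read the letter $a$ "out of" state $p$ of $\B$, having just stayed there for one time unit. Concretely: for every transition $(p,a,p')\in\Delta_\B$ and every $b\in\Sigma$, put a transition from $(p,a)$ to $(p',b)$ with trivial guard and action (using the single empty assignment, so $g$ is the full set of assignments and $\alpha$ the identity), and take each such transition with time step $t=1$ in the intended computation. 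From $\mathit{start}$, for each $p_0\in I$ and each $b\in\Sigma$ add a transition to $(p_0,b)$ with $t=0$. For acceptance, note $\mathit{accept}$ must be a halting state: for each transition $(p,a,p')\in\Delta_\B$ with $p'\in F$, and after reading the last letter, we need to route into $\mathit{accept}$ — so add for every $(p,a)$ with $p\in F$... no, more carefully: add a transition with $t=0$ from $(p,a)$ to $\mathit{accept}$ whenever there is a computation of $\B$ that, starting in an initial state, arrives in state $p$ having read all of $w$ with $a$ being... I would instead track this by adding, for each $(p,a,p')\in\Delta_\B$ with $p'\in F$, a state that records "about to accept"; the cleanest fix is: add an extra letter or simply route $\mathit{accept}$-entry from states $(p,a)$ with $p\in F$ via a $t=0$ edge, and check by induction that a computation reaching $(p,a)$ corresponds to $\B$ having a run on the already-read word ending in $p$ with the next move dictated by some $(p,a,\cdot)\in\Delta_\B$.

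The core of the write-up is then a two-directional induction matching initial accepting computations of $\A$ reading $w=a_0\cdots a_{n-1}$ with initial accepting runs $q_0\cdots q_n$ of $\B$ on $w$: from a $\B$-run, build the $\A$-computation $\mathit{start}\xrightarrow{0}(q_0,a_0)\xrightarrow{1}(q_1,a_1)\xrightarrow{1}\cdots\xrightarrow{1}(q_{n-1},a_{n-1})\xrightarrow{1}(q_n,?)\xrightarrow{0}\mathit{accept}$ (handling the final index and the $n=0$ empty-word corner case, noting the footnote that our statements are read "reasonably" for the empty word), and conversely observe that the only productive edges have $t\in\{0,1\}$ (there are no stopwatches so $t>1$ edges just idle and can be collapsed to unit steps, or are simply not needed), so every accepting computation has this shape and projects to a $\B$-run. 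I expect the only real fiddliness — the "main obstacle," though it is minor — to be bookkeeping around the endpoints: correctly wiring $\mathit{start}$ and $\mathit{accept}$ as the forced singleton states given that $\B$ has arbitrary $I$ and $F$, and getting the empty word and single-letter words right; the interior of the computation is a routine simulation. Everything is clearly polynomial-time (indeed linear) in $\|\B\|$, though the lemma as stated asks only for existence.
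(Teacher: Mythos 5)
There is a genuine gap, and it sits exactly where your proposal diverges from the paper: your claim that you can take $X=\emptyset$ and that ``the only productive edges have $t\in\{0,1\}$'' because ``$t>1$ edges just idle'' is false. In the semantics of Definition~\ref{defi:TSA}, a time-elapse edge $(q,\xi)\stackrel{t}{\to}(q,\xi')$ with $t>0$ exists for \emph{every} $t>0$ regardless of whether any stopwatch is active, and the computation then \emph{reads} $\lambda(q)^t$, i.e.\ $t$ copies of the letter --- reading is tied to elapsed time, not to stopwatch activity. Dually, two consecutive $0$-labelled transition edges pass through a state without reading its letter at all. So with no stopwatches and trivial guards, nothing forces a computation to spend exactly one time unit in each state $(p,a)$: your automaton accepts every word obtained from a word of $L(\B)$ by repeating each letter an arbitrary number of times, including zero. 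Concretely, for $L(\B)=\{ab\}$ your construction also accepts $aaa$ (stay $3$ units in $(q_0,a)$, then take the remaining transitions in $0$ time), and also the empty word. The ``collapse to unit steps'' remark does not rescue this, since decomposing a $t$-step into $t$ unit steps still reads $t$ letters.

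The paper's proof uses the same skeleton you describe (states $S\times\Sigma$ with $\lambda((s,a))=a$, plus \emph{start} and \emph{accept}, and transitions from $(s,a)$ to $(s',a')$ exactly when $(s,a,s')\in\Gamma$), but the essential ingredient you omit is a single stopwatch $x$ with bound $2$, active in every state: every internal transition gets guard $x=1$ and action $x:=0$, and the transitions out of \emph{start} and into \emph{accept} get guard $x=0$. This is precisely what forces each state $(s,a)$ on an accepting computation to be occupied for exactly one time unit, so that exactly one letter is read per simulated $\B$-transition. Your endpoint bookkeeping (wiring $I$ and $F$ to the singleton \emph{start}/\emph{accept} states, the empty-word case) is the easy part and is handled the same way in the paper; the clock is the part of the construction that actually carries the proof, and it cannot be dispensed with.
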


\begin{proof}
Given a non-deterministic finite automaton $\B=(S,\Sigma,I,F,\Gamma)$ as described in Subsection \ref{sec:buechi}, we define a stopwatch automaton~$\A = (Q, \Sigma, X, \lambda, \beta, \zeta, \Delta)$ such that $L(\B)=L(\A)$.

The states $Q$ of $\A$ are {\em start} and {\em accept} together with the states $S\times\Sigma$ labeled $\lambda \big(  (s,a) \big) :=a$ (the labels of  {\em start} and {\em accept} are irrelevant). 
We use a stopwatch $x$ intended to force the automaton to spend 1 time unit in every state $(s,a)$: it has
bound $\beta(x):=2$ and is active everywhere, i.e., $\zeta:=\{x\}\times Q$.

Each transition from some $(s,a)$ to $(s',a')$ has condition $x=1$ and action $x:=0$. We only allow those transitions from $(s,a)$ to $(s',a')$ when $(s,a,s') \in \Gamma$. This defines $\Delta$ when the states are not {\em start} or {\em accept}. Transitions from  {\em start}
lead to $I\times\Sigma$ and have guard $x=0$. Transitions to {\em accept} come from
$F\times \Sigma$ and have guard $x=0$.
\end{proof}

The converse of this lemma is based on the following definition.

\begin{definition}\label{defi:FAofSWA}
Given a
stopwatch automaton $\A=(Q,\Sigma, X, \lambda,\beta, \zeta,\Delta)$, we define the following finite  automaton $\B(\A)=(S,\Sigma,I,F,\Gamma)$. For $S$ we take the set of nodes of $\TS(\A)$; we let $I:=\{(\mathit{start},\xi_0)\}$ where  $\xi_0$ is constantly~0, and~$F$ contain the nodes $(q,\xi)$ of  $\TS(\A)$ such that $(q,\xi)\stackrel{0^*}{\to}(\textit{accept},\xi')$ for some assignment~$\xi'$. Here, $\stackrel{0^*}{\to}$ denotes the transitive and reflexive closure of $\stackrel{0}{\to}$. We let~$\Gamma$ contain $\big((q,\xi),a,(q',\xi')\big)$ if 
$\lambda(q')=a$ and there is $\xi''$ such that $(q,\xi)\stackrel{0^*}{\to}(q',\xi'')\stackrel{1}{\to}(q',\xi')$.
\end{definition}

\begin{theorem}\label{thm:regular} A language is regular if and only if it is the language of some stopwatch automaton.
\end{theorem}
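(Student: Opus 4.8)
The plan is to prove both directions of the equivalence. The ``only if'' direction — every regular language is the language of some stopwatch automaton — is already established as Lemma~\ref{lem:regular}, so the remaining work is the ``if'' direction: the language of an arbitrary stopwatch automaton $\A$ is regular. Using B\"uchi's theorem (Theorem~\ref{thm:buechi}) it would suffice to exhibit an $\MSO$-sentence, but the cleaner route is to show directly that $L(\A)=L(\B(\A))$ for the finite automaton $\B(\A)$ of Definition~\ref{defi:FAofSWA}; since $\B(\A)$ is a genuine (if large) non-deterministic finite automaton, its language is regular by definition.

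\medskip\noindent
\textbf{Key steps.} First I would unwind Definition~\ref{defi:FAofSWA}: the state set $S$ of $\B(\A)$ is the node set of $\TS(\A)$ (finite, because there are finitely many states and finitely many assignments — the set of assignments has cardinality $B_\A$), the single initial state is $(\mathit{start},\xi_0)$ with $\xi_0\equiv 0$, the final states are those nodes that can reach an $\mathit{accept}$-node by a chain of $\stackrel{0}{\to}$-edges, and a letter-$a$ transition $\big((q,\xi),a,(q',\xi')\big)$ is present exactly when one can go $(q,\xi)\stackrel{0^*}{\to}(q',\xi'')\stackrel{1}{\to}(q',\xi')$ with $\lambda(q')=a$. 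The heart of the argument is a correspondence between computations of $\A$ and computations of $\B(\A)$. Given an initial accepting computation of $\A$ reading $w=\lambda(q_0)^{t_0}\cdots\lambda(q_{\ell-1})^{t_{\ell-1}}$, I would rewrite it by interleaving: each positive time step $\stackrel{t_i}{\to}$ with $q_i=q_{i+1}$ can be decomposed into $t_i$ unit steps $\stackrel{1}{\to}$, because of the $\min\{\xi(x)+t,\beta(x)\}$ clause — crucially, a single jump of length $t$ and $t$ successive unit jumps produce exactly the same final assignment, since the truncation at $\beta(x)$ is idempotent and monotone. Every zero-length $\stackrel{0}{\to}$ step is a discrete transition of $\Delta$. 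So the computation becomes an alternating sequence of $\stackrel{0}{\to}$-blocks and single $\stackrel{1}{\to}$-steps; grouping each maximal $\stackrel{0^*}{\to}$-block together with the unit step that follows it yields precisely a sequence of $\Gamma$-transitions of $\B(\A)$ reading the same word $w$, starting at $(\mathit{start},\xi_0)$ and ending at a node from which $\mathit{accept}$ is $0^*$-reachable, i.e.\ a final state. This gives $L(\A)\subseteq L(\B(\A))$. For the reverse inclusion, I would take an accepting run of $\B(\A)$ on $w=a_0\cdots a_{n-1}$, expand each $\Gamma$-transition back into its witnessing $\stackrel{0^*}{\to}\stackrel{1}{\to}$ segment in $\TS(\A)$, prepend the $0^*$-path from the initial node (trivial, it is the start), append the $0^*$-path witnessing that the last state is final (reaching $\mathit{accept}$), and collapse consecutive unit $\stackrel{1}{\to}$-steps in the same state back into single longer $\stackrel{t}{\to}$-steps; one has to check that along the way no node with state $\mathit{accept}$ is visited prematurely (it can be arranged because $\mathit{accept}$ appears only at the very end), so the result is a legitimate initial accepting computation of $\A$ reading $w$. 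Hence $L(\B(\A))\subseteq L(\A)$, and equality follows.

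\medskip\noindent
\textbf{Main obstacle.} The routine-looking but genuinely delicate point is the exact bookkeeping of the interleaving/collapsing between length-$t$ edges and chains of unit edges, together with the boundary handling of $\mathit{start}$ and $\mathit{accept}$ — in particular verifying that the ``$\mathit{accept}$ is a halting state'' convention ($q_i\neq\mathit{accept}$ for $i<\ell$) is respected in the reconstructed $\A$-computation, and that the empty-word corner case behaves as intended. The key algebraic fact making the decomposition of a $t$-step into $t$ unit steps valid — idempotence of $n\mapsto\min\{n,\beta(x)\}$ under addition — is worth stating explicitly, as it is the only place where the stopwatch (rather than plain clock) semantics matters. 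Everything else is a straightforward, if slightly tedious, translation between the two computation models.
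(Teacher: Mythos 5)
Your proposal is correct and follows essentially the same route as the paper: one direction by Lemma~\ref{lem:regular}, the other by showing $L(\A)=L(\B(\A))$ for the finite automaton of Definition~\ref{defi:FAofSWA}, with the key observation that a duration-$t$ step decomposes into $t$ unit steps. The paper states this very tersely; your write-up merely supplies the bookkeeping details (the idempotence of truncation at $\beta(x)$ and the handling of \emph{start}/\emph{accept}) that the paper leaves implicit.
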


\begin{proof} One direction follows from Lemma~\ref{lem:regular}. Conversely, given a language that is recognised by some SWA $\A$ we easily see that  $L(\A)=L(\B(\A))$ where $\B(\A)$ is as in Definition \ref{defi:FAofSWA}. That $L(\B(\A)) \subseteq L(\A)$ is immediate and for $L(\A)\subseteq L(\B(\A))$ we observe that a step in $\TS(\A)$ of duration $n$ can be obtained by $n$ consecutive steps of duration 1.
\end{proof}


The proof of  Lemma~\ref{lem:regular} gives a polynomial time computable function mapping every finite automaton to an equivalent stopwatch automaton. There is no such function for the converse translation, in fact, stopwatch automata are exponentially more succinct than finite automata.

\begin{proposition}\label{prop:succinct} For every $k$ there is a stopwatch automaton $\A_k$ of size $O(\log k)$ such that every finite automaton accepting $L(\A_k)$ has size at least $k$.
 \end{proposition}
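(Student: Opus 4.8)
The plan is to exhibit, for each $k$, a small stopwatch automaton whose language requires large finite automata, by encoding a counter that runs up to roughly $k$. First I would pick a language that is well-known to need many states in any finite automaton but that has a tiny description in terms of a numeric bound — the natural candidate is something like $L_k := \{d^m \mid m \equiv 0 \pmod{k}\}$ or, even simpler, $L_k := \{d^k\}$ (the single word consisting of $k$ copies of the letter $d$). A Myhill–Nerode / pigeonhole argument shows that any finite automaton recognizing $L_k$ must have at least $k$ (resp. $k+1$) pairwise inequivalent states, since the prefixes $d^0, d^1, \dots, d^{k-1}$ (resp. $d^0,\dots,d^k$) are pairwise distinguishable. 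So any finite automaton accepting $L_k$ has size at least $k$.

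Next I would construct the stopwatch automaton $\A_k$ of size $O(\log k)$ accepting $L_k$. Take a single stopwatch $x$ with bound $\beta(x) := k$ (or $k+1$), active at a single working state $q$ labeled $d$. From \emph{start} there is a $\stackrel{0}{\to}$-transition to $q$ (with the standard guard $x_{\mathit{start}}=0$ from the informal convention, or simply a trivial guard), and from $q$ there is a $\stackrel{0}{\to}$-transition to \emph{accept} guarded by $x = k$. Then the only accepting computations spend exactly $k$ time units in $q$, reading $d^k$, so $L(\A_k) = \{d^k\} = L_k$. The key point is the size bound: the bound $\beta(x) = k$ is written in binary using $\lceil \log(k+1)\rceil$ bits, so $b_{\A_k} = O(\log k)$; the guard $x = k$ is a comparison circuit of size $O(\log k)$ in $b_{\A_k}$ input bits; there are constantly many states and transitions; hence a reasonable binary encoding has size $O(\log k)$. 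If one prefers the modular language (to rule out the degenerate objection that $\{d^k\}$ is a single word), one replaces the transition to \emph{accept} by one guarded by $x = k$ with action $x := 0$ looping back to $q$, plus an $x=0$-guarded transition to \emph{accept}; the analysis is the same and the Myhill–Nerode lower bound is unchanged.

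The main obstacle — really the only thing needing care — is making sure the size accounting is honest: one must confirm that encoding the numeric bound $k$ and the single comparison guard genuinely costs only $O(\log k)$ bits under the paper's notion of ``reasonable binary encoding,'' and in particular that the circuit model for specific guards does not force a blow-up (a comparison $x = k$ or $x \equiv 0$ is computable by a circuit linear in the number of input bits, which is $O(\log k)$). Everything else — the correctness $L(\A_k) = L_k$ and the $\geq k$ lower bound for finite automata — is routine. I would therefore structure the proof as: (1) define $\A_k$ and verify $L(\A_k) = L_k$; (2) bound $\|\A_k\| = O(\log k)$ by inspecting each component of the tuple and the guard circuit; (3) invoke Myhill–Nerode on the prefixes $d^0,\dots,d^{k-1}$ to conclude any finite automaton for $L_k$ has $\geq k$ states, hence size $\geq k$.
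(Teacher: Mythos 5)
Your proposal is correct in substance but takes a different route from the paper. The paper does not build a fresh gadget: it reuses the automaton from the immediately preceding proposition (the one showing that any unbounded, non-decreasing $\beta$ yields non-regular languages), instantiated with the constant function $f\equiv k$, so its witness language is essentially $\{a^sb^sc^*\mid s<k\}$, and the state lower bound comes from the Pumping Lemma. You instead use the singleton $\{d^k\}$ (or the multiples of $k$) with a one-stopwatch counter and a Myhill--Nerode argument. Both proofs hinge on exactly the same point, which you correctly identify as the crux: the bound $k$ is written in binary, so the automaton's encoding costs only $O(\log k)$ bits while any finite automaton needs $k$ states. Your version is arguably more self-contained and elementary; the paper's buys economy by recycling an existing construction.

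One detail needs care and is the only real slip: under the paper's saturating semantics, a stopwatch with bound $\beta(x)=k$ evolves as $\xi'(x)=\min\{\xi(x)+t,\beta(x)\}$, so after $m\ge k$ time units in $q$ the guard $x=k$ is still satisfied. With $\beta(x)=k$ your automaton therefore accepts $\{d^m\mid m\ge k\}$, not $\{d^k\}$, so the verification step ``$L(\A_k)=\{d^k\}$'' fails as stated. Your parenthetical ``(or $k+1$)'' is in fact the fix --- with $\beta(x)=k+1$ and guard $x=k$, any stay of length $>k$ saturates at $k+1$ and is rejected, so the language is exactly $\{d^k\}$ --- but you should say why that choice matters. (Even with the unfixed bound $k$ the proposition would survive, since $\{d^m\mid m\ge k\}$ also has $k+1$ Myhill--Nerode classes, but then your step (1) as written is wrong.) The same saturation issue affects your modular variant, where a loop segment longer than $k$ would still pass the guard unless the bound exceeds $k$. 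Finally, make sure the transition out of \emph{start} is guarded so that zero time is spent there (e.g.\ by making $x$ active in \emph{start} and guarding with $x=0$, as the paper does); a trivial guard would let the computation read extra letters in \emph{start}.
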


We defer the proof to the end of  Section~\ref{sec:beyond}.

\subsection{Consistency-checking}

By Theorem \ref{thm:regular} we know that the languages accepted by stopwatch automata are exactly the regular languages. In particular, these languages are closed under intersections. We give an explicit construction of such an automaton computing an intersection because we shall need explicit bounds.

\begin{lemma} \label{lem:prod}
Given stopwatch automata $\A,\A'$ with bounds $B_\A,B_{\A'}$ one can compute in  time $O(\|\A\|\cdot\|\A'\|)$ a stopwatch automaton $\A\otimes\A'$ with bound $B_{\A\otimes\A'}=B_\A\cdot B_{\A'}$ such that $L(\A\otimes\A')=L(\A)\cap L(\A')$. 
\end{lemma}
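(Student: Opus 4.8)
The plan is to build $\A\otimes\A'$ as the usual synchronous product of the two stopwatch automata, taking care that the two automata proceed in lock-step with respect to time. Concretely, I would take the state set to be $Q\times Q'$ (identifying $\mathit{start}$ with $(\mathit{start},\mathit{start})$ and $\mathit{accept}$ with $(\mathit{accept},\mathit{accept})$), keep the stopwatch set $X\sqcup X'$ disjoint, let $\beta\otimes\beta'$ be the union of the two bound functions, and declare $(x,(q,q'))$ active in $\zeta\otimes\zeta'$ iff $(x,q)\in\zeta$ (for $x\in X$) or $(x,q')\in\zeta'$ (for $x\in X'$). The label of $(q,q')$ would be $\lambda(q)$ — but for this to make sense we must restrict to the product states with $\lambda(q)=\lambda'(q')$, and drop (make unreachable) the rest; alternatively, and more cleanly, first replace $\A$ and $\A'$ by their $P(\Sigma)$-labeled variants (each a singleton label), take the product with label $\lambda(q)\cap\lambda'(q')$ — deleting states where this intersection is empty — and then apply Proposition~\ref{prop:plambda} to convert back to an ordinary stopwatch automaton. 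The key point is that $B_{\A\otimes\A'}$ is the number of assignments over $X\sqcup X'$, which is $\prod_{x\in X}(\beta(x)+1)\cdot\prod_{x\in X'}(\beta'(x)+1)=B_\A\cdot B_{\A'}$, and Proposition~\ref{prop:plambda} preserves this bound.

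For the transitions, a $0$-transition of the product should be a $0$-transition in one coordinate with the other coordinate frozen (so that we can chain instantaneous transitions in either component independently), realized by guards that read only the relevant block of stopwatches and actions that act as the identity on the other block: for $(q_1,g_1,\alpha_1,q_1')\in\Delta$ put $((q_1,q_2),g_1\wedge\mathrm{true}_{X'},\alpha_1\sqcup\mathrm{id}_{X'},(q_1',q_2))$ into $\Delta\otimes\Delta'$, and symmetrically for $\A'$. Time steps are automatically synchronized: in the transition system $\TS(\A\otimes\A')$ a step of duration $t>0$ from $((q,q'),\xi\sqcup\xi')$ stays in the same product state and advances exactly the active stopwatches of $X$ by $t$ (up to their bounds) and the active stopwatches of $X'$ by $t$ (up to their bounds), which is precisely a duration-$t$ step of $\A$ in the first coordinate and of $\A'$ in the second. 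The size bound $O(\|\A\|\cdot\|\A'\|)$ is clear: there are $|Q|\cdot|Q'|$ states, at most $|\Delta|\cdot|Q'|+|Q|\cdot|\Delta'|$ transitions, and each product guard/action circuit is obtained from a single $\A$- or $\A'$-circuit padded with identity/constant gates on the other block, hence of size linear in the original.

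The correctness claim $L(\A\otimes\A')=L(\A)\cap L(\A')$ I would prove by relating computations. Given an accepting computation of $\A$ reading $w$ and one of $\A'$ reading the same $w$, I first normalize both (using ``a step of duration $n$ can be replaced by $n$ steps of duration $1$'', as in the proof of Theorem~\ref{thm:regular}) so that both have the same sequence of durations, namely all $1$'s interleaved with $0$-steps; then I interleave the $0$-steps of the two computations arbitrarily between consecutive $1$-steps and pair up the unique $1$-steps, obtaining a computation of $\A\otimes\A'$ reading $w$ that is initial and accepting. Conversely, projecting a computation of $\A\otimes\A'$ onto its first (resp.\ second) coordinate and deleting the $0$-steps that came from the other component yields a computation of $\A$ (resp.\ $\A'$) reading the same word, because the first-coordinate $0$-steps are exactly the $\A$-transitions and duration steps act as claimed on each block. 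I expect the only mildly delicate point — the ``main obstacle'', such as it is — to be getting the label bookkeeping right so that the product really reads a single word rather than two independent ones; routing this through the $P(\Sigma)$-labeled model and Proposition~\ref{prop:plambda} sidesteps it cleanly, at the cost of one more invocation of an already-proved result, and also explains why the stated bound $B_{\A\otimes\A'}=B_\A\cdot B_{\A'}$ comes out exactly rather than merely up to constants.
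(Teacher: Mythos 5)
Your overall architecture matches the paper's: product states $Q\times Q'$, disjoint stopwatches with activity inherited componentwise, the multiplicative bound $B_\A\cdot B_{\A'}$, and a correctness argument that normalizes both runs into unit time-steps interleaved with $0$-steps and then merges them. Your choice of interleaved transitions (one component moves, the other frozen) is a legitimate variant of the paper's synchronized pairs of transitions padded with trivial self-loops; either works.

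The genuine gap is in your handling of labels: deleting (or making unreachable) the product states with $\lambda(q)\neq\lambda'(q')$ loses words of the intersection. The point is that a computation of $\A$ may have to pass through a state for $0$ time units --- such a visit contributes nothing to the word read but may be unavoidable (to traverse the transition graph or to fire a needed action) --- and during that instant $\A'$ may be sitting in a state with a different label. Concretely, let $\A$ have states $q_0,q_1,q_2$ labeled $a,b,a$ with transitions only $q_0\to q_1\to q_2$ (all guards trivial), and let $\A'$ have a single state labeled $a$; then $aa\in L(\A)\cap L(\A')$ via the run that spends $0$ time in $q_1$, but every path of your product from $(q_0,q_0')$ to $(q_2,q_0')$ must pass through the deleted state $(q_1,q_0')$, so your $\A\otimes\A'$ rejects $aa$. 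Routing through $P(\Sigma)$-labels does not help, since the intersection of two distinct singletons is empty and the state is again deleted. The paper avoids this by \emph{keeping} the mismatched states: it first builds the product over the alphabet $\Sigma\times\Sigma$ and then restricts to the diagonal by adding one bound-$1$ stopwatch $y$, active everywhere and reset by every transition, and guarding the transitions leaving a mismatched state by $y=0$; thus mismatched states may be traversed, but only instantaneously, so they never emit a letter. You would also need the paper's harmless normalizations that neither automaton has transitions out of its accept state (so that projecting a product run yields a legal computation, which by definition may visit \emph{accept} only at its final node); with the label issue repaired as above, the rest of your argument goes through.
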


\begin{proof} Let $\A=(Q,\Sigma, X, \lambda,\beta, \zeta,\Delta)$ and $\A'=(Q',\Sigma, X', \lambda',\beta', \zeta',\Delta')$ be stopwatch automata. Without loss of generality, we can assume that
\begin{enumerate}\itemsep=0pt
\item[(a)] $X$ and $ X'$ are disjoint;
\item[(b)] neither $\A$ nor $\A'$ contains transitions from its accept state;
\item[(c)] both $\Delta$ and $\Delta'$ contain for every state except the accept state a transition from the state to itself with trivial guard and action;
\end{enumerate}

We first define an automaton $
\A \times\A'$ with alphabet $\Sigma\times \Sigma$. Its states are $Q,\times Q'$ with start and accept state the pair of corresponding states of $\A$,~$\A'$. The stopwatches are $X\cup X'$ with the same bounds as in $\A,\A'$. A stopwatch $x\in X\cup X'$ is active in $(q,q')$ if either $(x,q)\in \zeta$ or $(x,q')\in \zeta'$.
A state $(q,q')$ is labeled $(\lambda(q),\lambda'(q'))$. The transitions are $((q_0,q'_0),g^*,\alpha^*,(q_1,q_1'))$
such that there are  $(q_0,g,\alpha,q_1)\in\Delta$ and $(q'_0,g',\alpha',q'_1)\in\Delta'$ such that $g^*$ computes the conjunction of $g$ and $g'$  and $\alpha^*$ executes $\alpha$ and $\alpha'$ in parallel. 

This is well-defined by (a). Also by (a) we can write assignments for $\A\times \A'$ as $\xi\cup \xi'$ where $\xi,\xi'$ are assignments for $\A,\A'$. We claim that $\A\times \A'$ accepts a word $(a_0,a'_0)\cdots (a_{n-1},a'_{n-1})\in(\Sigma\times\Sigma)^n$ if and only if $a_0\cdots a_{n-1}\in L(\A)$ and  $a'_0\cdots a'_{n-1}\in L(\A')$. 

 Indeed, if $((q_0,q_0'),\xi_0\cup \xi_0')\stackrel{t_0}{\to}\cdots((q_{\ell-1},q'_{\ell-1}),\xi_{\ell-1}\cup \xi_{\ell-1}')$ is an initial accepting run of $\A\times \A'$, then, by (b), $q_i$ is the accept state of $\A$ exactly for $i=\ell-1$. Then $(q_0,\xi_0)\stackrel{t_0}{\to}\cdots (q_{\ell-1},\xi_{\ell-1})$ is an initial accepting run of $\A$ that reads $a_0\cdots a_{n-1}$. Analogously, $a'_0\cdots a'_{n-1}\in L(\A')$. 
 
Conversely, given $a_0\cdots a_{n-1}\in L(\A)$ and  $a'_0\cdots a'_{n-1}\in L(\A')$ we can choose
initial accepting runs of $\A$ and $\A'$ reading these words, respectively, and have  the form:
 \begin{eqnarray*}
 & (q_{0},\xi_{0})\stackrel{0^*}{\to}(r_{0},\eta_0)\stackrel{1}{\to}  (q_{1},\xi_{1})\stackrel{0^*}{\to}(r_{1},\eta_1)\stackrel{1}{\to}\cdots (q_{\ell-1},\xi_{\ell-1})\stackrel{0^*}{\to}(r_{\ell},\eta_\ell),\\
 & (q'_{0},\xi'_{0})\stackrel{0^*}{\to}(r'_{0},\eta'_0)\stackrel{1}{\to}  (q'_{1},\xi'_{1})\stackrel{0^*}{\to}(r'_{1},\eta'_1)\stackrel{1}{\to}\cdots (q'_{\ell'-1},\xi'_{\ell'-1})\stackrel{0^*}{\to}(r'_{\ell'},\eta'_{\ell'}).
\end{eqnarray*}
Here, $\stackrel{0^*}{\to}$ denotes the transitive closure of $\stackrel{0}{\to}$ in $\TS(\A)$ and $\TS(\A')$. Then $\ell=\ell'=n$. By~(c), we can assume that the $\stackrel{0^*}{\to}$-paths have the same length. Then the runs have the same length. Then the runs can be combined in the obvious way to an initial accepting run of $\A\times\A'$ reading $(a_0,a'_0)\cdots(a_{n-1},a'_{n-1})$. This proves the claim.

The automaton $\A\otimes\A'$ is easily obtained from a modification of $\A\times\A'$ whose initial accepting runs are precisely those initial accepting runs of $\A\times\A'$  that read words
over $\{(a,a)\mid a\in\Sigma\}$. Such a modification is easy to obtain: add a new stopwatch $y$ with bound $1$ to $\A\times\A'$ that is active in all states; every transition gets action $y:=0$ and every transition from a state $(q,q')$ with $\lambda(q)\neq\lambda(q')$ gets guard $y=0$.

The claims about the bound of $\A\otimes\A'$ and the time needed to compute it are clear.
\end{proof}

The following algorithm can be used to check if the intersection of two languages is empty or not. Informally, we can perceive this as an algorithm that checks whether a certain type of behaviour  is illegal according to a law when both the type of behaviour and the law are specified by  stopwatch automata. 

\begin{theorem}\label{thm:conscheck}
There is an algorithm that given stopwatch automata $\A,\A'$ with bounds $B_\A,B_{\A'}$, respectively,  decides whether  $L(\A)\cap L(\A')\neq\emptyset$ in time $$O\big((\|\A\|\cdot\|\A'\|\cdot B_\A\cdot B_{\A'})^3\big).$$
\end{theorem}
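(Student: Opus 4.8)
The plan is to reduce the emptiness question for $L(\A)\cap L(\A')$ to the emptiness question for a single finite automaton, using the two constructions already in hand. First I would apply Lemma~\ref{lem:prod} to compute, in time $O(\|\A\|\cdot\|\A'\|)$, the product stopwatch automaton $\A\otimes\A'$ with $L(\A\otimes\A')=L(\A)\cap L(\A')$ and bound $B_{\A\otimes\A'}=B_\A\cdot B_{\A'}$. Then I would pass to the finite automaton $\B(\A\otimes\A')$ of Definition~\ref{defi:FAofSWA}, which by (the proof of) Theorem~\ref{thm:regular} satisfies $L(\B(\A\otimes\A'))=L(\A\otimes\A')$. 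Since $L(\A)\cap L(\A')\neq\emptyset$ iff $L(\B(\A\otimes\A'))\neq\emptyset$, it then suffices to run the standard linear-time reachability test for emptiness of a finite automaton (a final state reachable from an initial state), as recalled after Theorem~\ref{thm:buechi}.

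The key steps, in order, are: (1) build $\A\otimes\A'$; (2) enumerate the state set of $\B(\A\otimes\A')$, namely the nodes $(q,\xi)$ of $\TS(\A\otimes\A')$ — there are $|Q\times Q'|\cdot B_\A\cdot B_{\A'}$ of them, which is at most $\|\A\|\cdot\|\A'\|\cdot B_\A\cdot B_{\A'}$ up to constants; (3) compute the edge relation $\Gamma$ of $\B(\A\otimes\A')$ and the initial/final sets, which requires evaluating each specific guard and action (Boolean circuits of size bounded by $\|\A\otimes\A'\|$) on each assignment, and taking the reflexive–transitive closure $\stackrel{0^*}{\to}$ of the zero-duration edges to identify $F$ and to populate $\Gamma$; (4) test reachability of $F$ from $I$ in $\B(\A\otimes\A')$ by breadth-first search.

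The main obstacle, and the only place a careful count is needed, is bounding the time for step (3): the transition graph of $\TS(\A\otimes\A')$ can have on the order of $N^2$ edges where $N:=|Q\times Q'|\cdot B_\A\cdot B_{\A'}$ is the number of nodes (for each pair of nodes one must check whether some transition of $\Delta\otimes\Delta'$ realizes a $\stackrel{0}{\to}$-edge, and each such check evaluates a circuit of size $O(\|\A\otimes\A'\|)=O(\|\A\|\cdot\|\A'\|)$), and computing $\stackrel{0^*}{\to}$ is a transitive-closure computation on this graph. A crude bound of $O(N^2)$ edges, together with $N\le O(\|\A\|\cdot\|\A'\|\cdot B_\A\cdot B_{\A'})$ and the circuit-evaluation cost, gives the claimed $O\big((\|\A\|\cdot\|\A'\|\cdot B_\A\cdot B_{\A'})^3\big)$ total running time (the cube absorbing the $N^2$ edges, the per-edge circuit evaluations, and the transitive-closure/BFS overhead). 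I expect the write-up to just collect these factors rather than optimize them, since the theorem statement is deliberately generous; the substantive content is entirely in Lemma~\ref{lem:prod}, Definition~\ref{defi:FAofSWA}, and Theorem~\ref{thm:regular}, all already available.
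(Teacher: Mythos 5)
Your proposal is correct and follows essentially the same route as the paper: build $\A\otimes\A'$ via Lemma~\ref{lem:prod}, construct the finite automaton $\B(\A\otimes\A')$ of Definition~\ref{defi:FAofSWA} by evaluating the guard/action circuits to get the $\stackrel{0}{\to}$-edges and then taking the transitive closure to obtain $\Gamma$ and $F$, and finish with a linear-time reachability test. The paper's accounting is marginally tighter (it cycles through nodes and transitions, giving $O(|S|\cdot|\Delta|\cdot\|\A\|\cdot\|\A'\|)$ for the edge computation plus $O(|S|^3)$ for the closure, rather than checking all node pairs), but both counts land on the same cubic bound.
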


\begin{proof} 
\moritz{Here and in the next theorem, I understand the algorithm, but can you please comment on the exponents? WHy three and two. I would like to see some hints here.}
The algorithm first computes the product automaton $\A\otimes\A'$ from the previous lemma. Next, the algorithm
 computes the finite automaton $\B(\A\otimes\A')=(S,\Sigma,I,F,\Gamma)$ as given in Definition \ref{defi:FAofSWA}. Note $|S|\le O(\|\A\| \cdot\| \A\|\cdot B_{\A\otimes\A'})$. 
 
 To compute $\Gamma$ we first compute  the graph on $S$ with edges~$\stackrel{0}{\to}$: cycle through all $(q,\xi)\in S$ and transitions $\Delta$ of $\A\otimes\A'$ and evaluate its guard and action on~$\xi$.  Each evaluation can be done in time linear in the size of the circuits, so in time $O(\|\A\|\cdot\|\A'\|)$. Thus, the graph can be computed in time  $O(|S|\cdot|\Delta|\cdot\|\A\|\cdot\|\A'\|)$. Its transitive closure can be computed in cubic time $O(|S|^3)$. 
 Each of the at most $|S|^2$ edges in $\stackrel{0^*}{\to}$ determines a transition in $\Gamma$.
Thus $\B(\A\otimes\A')$ can be computed in time cubic in $\|\A\|\cdot\|\A'\|\cdot B_\A\cdot B_{\A'}$.

Observe $L(\B(\A\otimes\A'))\neq \emptyset$ if and only if some final state is reachable from the initial state. Checking this takes linear time in the size of the automaton. 
\end{proof}

The algorithm solves the consistency problem for stopwatch automata by fixing input $\A'$ to some stopwatch automaton  with $L(\A')=\Sigma^*$.

\begin{corollary}\label{cor:conscheck}
There is an algorithm that given a  stopwatch automaton $\A$ with bound~$B_\A$, decides whether  $L(\A)\neq\emptyset$ in time $$O\big(\|\A\|^3\cdot B^3_{\A}\big).$$
\end{corollary}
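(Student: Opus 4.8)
The plan is to reduce to Theorem~\ref{thm:conscheck} exactly as already indicated: instantiate its second input $\A'$ by a fixed stopwatch automaton with $L(\A')=\Sigma^*$. Since $L(\A)\subseteq\Sigma^*$ we then have $L(\A)\cap L(\A')=L(\A)$, so $L(\A)\neq\emptyset$ iff $L(\A)\cap L(\A')\neq\emptyset$, and the latter is decided by the algorithm of Theorem~\ref{thm:conscheck} in time $O\big((\|\A\|\cdot\|\A'\|\cdot B_\A\cdot B_{\A'})^3\big)$. Hence everything reduces to exhibiting such an $\A'$ with $\|\A'\|=O(1)$ and $B_{\A'}=O(1)$; substituting these constants gives the claimed bound $O(\|\A\|^3\cdot B_\A^3)$.

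So the only real work is to build $\A'$ and check its parameters. I would take it to have \emph{no} stopwatches, $X=\emptyset$; then $B_{\A'}=\prod_{x\in\emptyset}(\beta(x)+1)=1$, there is a single (empty) assignment, and every guard and action is a circuit on $0$ input bits, hence trivial and of constant size. The states are $\mathit{start}$, $\mathit{accept}$, and one state $q_a$ per letter $a\in\Sigma$ with $\lambda'(q_a):=a$; and $\Delta'$ contains a trivially-guarded, trivially-acting transition from $\mathit{start}$ to every $q_a$ and to $\mathit{accept}$, from every $q_a$ to every $q_b$, and from every $q_a$ to $\mathit{accept}$. Then the computation $(\mathit{start})\stackrel{0}{\to}(q_{a_0})\stackrel{t_0}{\to}(q_{a_0})\stackrel{0}{\to}(q_{a_1})\stackrel{t_1}{\to}\cdots\stackrel{0}{\to}(\mathit{accept})$ reads the arbitrary word $a_0^{t_0}a_1^{t_1}\cdots$, and $\mathit{start}\stackrel{0}{\to}\mathit{accept}$ reads the empty word, so $L(\A')=\Sigma^*$. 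With the alphabet regarded as fixed (as throughout the paper; for Regulation~561 it is $\{d,r,w\}$), $\|\A'\|$ is dominated by its $O(|\Sigma|^2)$ transitions and hence constant. Alternatively one may simply invoke Lemma~\ref{lem:regular} together with the remark following it, applied to the obvious one-state finite automaton for $\Sigma^*$.

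There is no serious obstacle here; the point that deserves care is precisely the verification just sketched, namely that $\A'$ can be chosen to require no stopwatches, so that $B_{\A'}=1$ contributes only a constant and not a factor inside the exponent. If one prefers to avoid the detour through $\A'$ altogether, one can instead replay the proof of Theorem~\ref{thm:conscheck} with $\A$ in place of $\A\otimes\A'$: the finite automaton $\B(\A)$ of Definition~\ref{defi:FAofSWA} has $O(\|\A\|\cdot B_\A)$ nodes, its $\stackrel{0}{\to}$-edge relation is obtained by evaluating each transition's circuits (time $O(\|\A\|)$ apiece) over all node-transition pairs, its reflexive-transitive closure is computed in cubic time $O\big((\|\A\|\cdot B_\A)^3\big)$ -- this is where the exponent $3$ comes from -- and the concluding emptiness test on $\B(\A)$ is linear; adding up gives $O(\|\A\|^3\cdot B_\A^3)$ directly.
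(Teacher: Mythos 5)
Your proposal is correct and follows exactly the route the paper takes: the paper's entire ``proof'' is the one sentence preceding the corollary, which fixes the second input of Theorem~\ref{thm:conscheck} to a stopwatch automaton $\A'$ with $L(\A')=\Sigma^*$, and you supply precisely that automaton (stopwatch-free, so $B_{\A'}=1$) together with the verification the paper omits. Your closing observation --- that one can instead replay the proof of Theorem~\ref{thm:conscheck} on $\A$ alone, sidestepping any dependence of $\|\A'\|$ on $|\Sigma|$ --- is in fact the cleaner way to justify the stated exponent, and is more careful on this point than the paper itself.
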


\subsection{Model-checking}


The algorithm of Theorem~\ref{thm:conscheck} can be used to solve the model-checking problem: 
note $w\in L(\A)$ if and only if $L(\B_w)\cap L(\A)\neq \emptyset$ for a suitable size $O(|w|)$ automaton $\B_w$ with $L(\B_w)=\{w\}$.
A more direct model-checking algorithm achieves a somewhat better time complexity, in particular,  linear in $B_\A$:

\begin{theorem}\label{thm:modcheck}
There is an algorithm that given a word $w$  and a  stopwatch automaton~$\A$ with bound $B_\A$  decides whether  $w\in L(\A)$
 in time $$
 O\big(\|\A\|^2 \cdot B_\A \cdot  |w|\big).$$
\end{theorem}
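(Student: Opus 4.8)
The plan is to do a direct dynamic-programming / reachability computation over the configuration graph of $\A$, but to avoid materializing the full transition system $\TS(\A)$ all at once. Recall that a node of $\TS(\A)$ is a pair $(q,\xi)$ with $q\in Q$ a state and $\xi$ an assignment, so there are $|Q|\cdot B_\A$ many nodes; a naive reachability over all of $\TS(\A)$ together with the $\stackrel{0^*}{\to}$-collapsing (as in the proof of Theorem~\ref{thm:conscheck}) would already cost more than the claimed bound. The key observation is that reading the word $w=a_0\cdots a_{n-1}$ is a synchronized process: each letter $a_i$ is produced by spending one time unit in some state $q$ with $\lambda(q)=a_i$, i.e.\ by a $\stackrel{1}{\to}$-edge, and between consecutive such unit-steps the automaton may only take a finite chain of $\stackrel{0}{\to}$-edges. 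So the computation of $\A$ reading $w$ factors as $n$ blocks, block $i$ consisting of zero-time transitions followed by exactly one unit-time step in a state labelled $a_i$.

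Concretely, I would maintain, for each position $i=0,1,\dots,n$, the set $R_i\subseteq Q\times(\text{assignments})$ of configurations reachable from the initial configuration $(\mathit{start},\mathbf 0)$ by an initial computation reading the prefix $a_0\cdots a_{i-1}$. We have $R_0$ obtained by taking the $\stackrel{0^*}{\to}$-closure of $\{(\mathit{start},\mathbf 0)\}$. Given $R_i$, compute $R_{i+1}$ in two sub-steps: (1) from every $(q,\xi)\in R_i$ with $\lambda(q)=a_i$, apply the unit-time edge $(q,\xi)\stackrel{1}{\to}(q,\xi')$ where $\xi'$ is obtained by incrementing (capped at the bound) exactly the stopwatches active in $q$; (2) take the $\stackrel{0^*}{\to}$-closure of the resulting set, i.e.\ saturate under all transitions $(q,g,\alpha,q')\in\Delta$ whose guard circuit $g$ accepts $\xi'$, adding $(q',\alpha(\xi'))$. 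Finally, accept iff some $(q,\xi)\in R_n$ has $q=\mathit{accept}$, or more precisely (matching the definition of accepting computations) iff $R_n$ contains a configuration from which $\mathit{accept}$ is reachable by $\stackrel{0^*}{\to}$; since the closure is already applied this is just ``$\mathit{accept}$ appears''. The sets $R_i$ are subsets of a fixed universe of size $|Q|\cdot B_\A$, so represent them as bit-arrays or as arrays indexed by $(q,\xi)$.

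For the running time: the closure sub-step at each position is a graph reachability on the $\stackrel{0}{\to}$-graph restricted to the currently touched configurations. Here is the crucial point for getting $B_\A$ rather than $B_\A^2$: we never need the full transitive closure of an arbitrary relation on $|Q|\cdot B_\A$ nodes; instead, at step $i+1$ we do a BFS/worklist starting from the at most $|Q|\cdot B_\A$ configurations obtained in sub-step (1), and each configuration popped off the worklist triggers an inspection of the at most $|\Delta|$ transitions out of its state, each costing an evaluation of a guard circuit and an action circuit. A circuit evaluation costs $O(\|\A\|)$ since the circuit is part of the encoding of $\A$. So the work per position is $O(|Q|\cdot B_\A\cdot|\Delta|\cdot\|\A\|)$, which is $O(\|\A\|^2\cdot B_\A)$ after absorbing $|Q|,|\Delta|\le\|\A\|$; summing over $n=|w|$ positions gives $O(\|\A\|^2\cdot B_\A\cdot|w|)$ as claimed. (One must also note that a configuration already in $R_{i+1}$ is not re-expanded, so within one position each of the $|Q|\cdot B_\A$ configurations is processed at most once — this is exactly why the closure is linear, not cubic, in the number of configurations at that step.)

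The main obstacle, and the place where care is needed, is the correctness argument that this block decomposition is faithful: one must show that every initial accepting computation of $\A$ reading $w$ can be rearranged so that, within each block, all zero-time steps precede the single unit-time step, and conversely that the DP only produces genuine computations. The forward direction uses that a $\stackrel{t}{\to}$-edge with $t>0$ can be split into $t$ unit edges (as already used in the proof of Theorem~\ref{thm:regular}), and that zero-time and positive-time edges commute appropriately at the boundary because a positive-time step only changes $\xi$ on active stopwatches and keeps $q$ fixed; the subtlety is handling the boundary where, in the original computation, a zero-time transition occurs after some but not all of a long wait — splitting into unit steps and reassociating resolves this. The reverse direction is routine. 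A minor additional point is correctly handling the requirement that $\mathit{accept}$ behaves as a halting state (no configuration with state $\mathit{accept}$ may appear strictly inside a computation), which just means we stop expanding once $\mathit{accept}$ is reached; and the degenerate case $w$ empty, handled by checking $R_0$ directly.
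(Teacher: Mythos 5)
Your proposal is correct and is essentially the paper's own proof: the paper builds exactly this layered graph (one copy of the nodes of $\TS(\A)$ per position of $w$, with the $\stackrel{0}{\to}$-edges inside each layer and the letter-constrained $\stackrel{1}{\to}$-edges between consecutive layers) and runs a single linear-time reachability check, which is just your layer-by-layer BFS unrolled; your normalization of computations into blocks of zero-time steps followed by one unit step is the same splitting of $\stackrel{t}{\to}$ into unit edges that the paper invokes. One small accounting nit: charge each popped configuration $(q,\xi)$ only for the transitions leaving $q$, so a layer costs $B_\A\cdot|\Delta|$ circuit evaluations, i.e.\ $O(\|\A\|^2\cdot B_\A)$ — the product $|Q|\cdot B_\A\cdot|\Delta|\cdot\|\A\|$ as you wrote it would only simplify to $O(\|\A\|^3\cdot B_\A)$.
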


\begin{proof}  Let $ \A=(Q,\Sigma, X, \lambda,\beta, \zeta,\Delta)$ have bound $B_\A$.
Let $G=(V,E)$ be the directed graph whose vertices $V$ are the nodes of $\TS(\A)$ and whose directed edges $E$ are given by $\stackrel{0}{\to}$. Note $|V|=|Q|\cdot B_\A$ and $|E|\le B_\A\cdot |\Delta|$. Let $w=w_1\cdots w_t\in\Sigma^t$ for some $t\in\N$.

We define a  directed graph with vertices $\{0,\ldots,t \}\times V$ and the following edges. Edges within each copy $\{i\}\times V$ are copies of $E$. So these account for at most $(t+1)\cdot B_\A\cdot |\Delta|$ many edges, each  determined by evaluating guards and actions in time $O(\|\A\|)$.
Further  edges lead from vertices in the $i$-th copy $\{i\}\times V$ to vertices in the $(i+1)$th copy $\{i+1\}\times V$, namely from $(i,(q,\xi))$ to $(i+1,(q,\xi'))$ if 
\begin{eqnarray}\label{eq:modcheck}
&&\text{$(q,\xi)\stackrel{1}{\to}(q,\xi')$ and $q\neq\mathit{accept}$ and
$\lambda(q)=w_i$.}
\end{eqnarray}
There are at most $t\cdot|Q|\cdot B_\A$ such edges between copies. This graph has size
$O(t\cdot\|\A\|\cdot B_\A)$ and can be computed in time $O(t\cdot\|\A\|^2\cdot B_\A)$.

It is clear that $w\in L(\A)$ if and only if
$(t,(\textit{accept},\xi'))$ for some assignment $\xi'$ is {\em reachable} in the sense that
there is a  path from $(0,(\mathit{start},\xi_0))$ with $\xi_0$ constantly 0  to it. 
Checking this takes time linear in the size of the graph.
\end{proof}

\subsection{Scheduling}

We  strengthen the model-checker of Theorem~\ref{thm:modcheck} to solve the scheduling problem: the model-checker treats the special case for inputs with $n=0$.

\begin{theorem}\label{thm:scheduling}
There is an algorithm that given a stopwatch automaton $\A$ with bound $B_\A$ and alphabet~$\Sigma$, a  word $w\in\Sigma^*$, a letter $a\in\Sigma$ and $n\in\N$, rejects if there does not exist a word $v$ over $\Sigma$ of length~$n$ such that  $wv\in L(\A)$ and otherwise computes such a word $v$ with maximal $\#_a(v)$. It runs in time $$ O\big( \|\A\|^2 \cdot B_\A \cdot(|w|+n)\big).$$
\end{theorem}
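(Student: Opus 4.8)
The plan is to adapt the layered-graph reachability construction from the proof of Theorem~\ref{thm:modcheck}, turning the pure reachability check into a shortest/longest-path computation that simultaneously handles the unknown suffix of length $n$ and maximizes the count of the letter $a$.

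First I would build essentially the same graph as in Theorem~\ref{thm:modcheck}, but over $\{0,\ldots,|w|+n\}$ copies of $V$, where $V$ is the node set of $\TS(\A)$ (so $|V|=|Q|\cdot B_\A$). Within each copy $\{i\}\times V$ we put the $\stackrel{0}{\to}$-edges, computable by evaluating guards and actions in time $O(\|\A\|)$ each. Between consecutive copies $\{i\}\times V$ and $\{i+1\}\times V$ we add the ``advance by one time unit'' edges $(i,(q,\xi))\to(i+1,(q,\xi'))$ whenever $(q,\xi)\stackrel{1}{\to}(q,\xi')$ and $q\neq\mathit{accept}$; here the difference from Theorem~\ref{thm:modcheck} is in the letter constraint. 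For the first $|w|$ layers (reading the fixed prefix) we keep the constraint $\lambda(q)=w_{i+1}$ exactly as before. For the remaining $n$ layers (the free suffix $v$) we impose no letter constraint at all, but we give such an edge a \emph{weight} of $1$ if $\lambda(q)=a$ and $0$ otherwise; all intra-layer $\stackrel{0}{\to}$-edges and all prefix-layer edges get weight $0$. The graph has size $O((|w|+n)\cdot\|\A\|\cdot B_\A)$ and is computed in time $O((|w|+n)\cdot\|\A\|^2\cdot B_\A)$, just as in Theorem~\ref{thm:modcheck}.

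Next I would observe that a word $v\in\Sigma^n$ with $wv\in L(\A)$ corresponds exactly to a path in this graph from $(0,(\mathit{start},\xi_0))$ (with $\xi_0$ constantly $0$) to some vertex $(|w|+n,(\mathit{accept},\xi'))$, and that along such a path the total weight equals $\#_a(v)$: each of the $n$ suffix steps picks the letter $\lambda(q)$ read while sitting in state $q$, contributing $1$ to $\#_a(v)$ precisely when that letter is $a$. Since the graph is a DAG layered by the first coordinate — edges never decrease the layer index and intra-layer $\stackrel{0}{\to}$-edges, though possibly cyclic, carry weight $0$ — we can contract each layer's $\stackrel{0}{\to}$-reachability (or simply process vertices in a topological order that respects layers, treating zero-weight strongly connected components as single nodes) and compute, for every vertex, the maximum weight of a path reaching it from the source. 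This longest-path-in-a-DAG computation takes time linear in the graph size, hence $O((|w|+n)\cdot\|\A\|\cdot B_\A)$. If no accepting vertex in the final layer is reachable we reject; otherwise we pick one maximizing the stored weight and trace back a witnessing path, reading off $v$ from the $\lambda$-labels of the states used on the $n$ advance-edges. The claim $\#_a(w\overline v)=\#_a(w)+\max\{\#_a(v)\}$ then gives the required maximality, since $\#_a(w)$ is a fixed additive constant.

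The main obstacle, and the one deserving care, is handling the zero-weight $\stackrel{0}{\to}$-edges correctly within a layer: these can form cycles, so the ``graph'' is not literally acyclic and a naive longest-path routine could loop or overcount. The fix is routine but must be stated: because every such cycle has total weight $0$ and stays within one layer (a time-0 move cannot change the layer index), collapsing each $\stackrel{0}{\to}$-strongly-connected-component to a single node yields a genuine DAG without changing reachability or path weights, after which the standard linear-time dynamic program over a topological order applies. A secondary point is arithmetic: path weights are bounded by $n$, so they fit in $O(\log n)$ bits and all additions and comparisons are within the claimed time bound. Everything else — computing $\TS(\A)$-edges from the circuits, the layered structure, the final trace-back — is exactly as in Theorem~\ref{thm:modcheck}, which by the remark preceding the statement already subsumes the $n=0$ special case (plain model-checking of $w$).
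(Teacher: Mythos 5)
Your proposal is correct and follows essentially the same route as the paper: extend the layered graph of Theorem~\ref{thm:modcheck} to $|w|+n$ layers, drop the letter constraint on the last $n$ inter-layer edges, run a linear-time dynamic program that maximizes the number of $a$-labeled advance steps, and recover $v$ by backpointers. Your explicit treatment of the zero-weight $\stackrel{0}{\to}$-cycles within a layer (contracting them before the longest-path computation) is a welcome clarification of a point the paper's labeling argument passes over silently, but it does not change the approach.
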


\begin{proof} Consider the graph constructed in the poof of Theorem~\ref{thm:modcheck} but with $t+n$ instead of $t$ and the following modification: in \eqref{eq:modcheck}
for $t\le i< t+n$ drop the condition $\lambda(q)=w_i$ for edges between the $i$-th and the $(i+1)$-th copy.
In the resulting graph there is a reachable vertex $(t+n,(\textit{accept},\xi'))$ for some assignment $\xi'$ if and only if there exists a length $n$ word $v$ such that $wv\in L(\A)$. We now show how to compute the maximum value $\#_a(v)$ for such  $v$.

Successively for $i=0,\ldots,n$ compute a label $V_i(q,\xi)$ for each vertex $(t+i,(q,\xi))$ in the $(t+i)$-th copy. For $i=0$ all these labels are $\#_a(w)$. For $i>0$ label $(t+i,(q,\xi))$ with the maximum value
$$
\left\{\begin{array}{ll}
V_{i-1}(q,\xi')+1&\text{if }\lambda(q)=a,\\
V_{i-1}(q,\xi')&\text{else}
\end{array}
\right.
$$
taken over $\xi'$ such that there is an edge from $(t+i-1,(q,\xi'))$ to $(t+i,(q,\xi))$. Then the desired maximum value $\#_a(v)$ is the maximum label $V_n(q,\xi)$ such that $q=\mathit{accept}$ and
$(t+n,(q,\xi))$ is reachable.

Additionally we are asked to compute a word $v$ witnessing this value. To do so the labeling algorithm 
computes a set of directed edges, namely for each $(t+i,(q,\xi))$ with $i>0$ to a vertex $(t+i-1,(q,\xi'))$ witnessing the maximum value above. This set of edges defines a partial function that, for each $i>0$, maps vertices in the $(t+i)$-th copy to vertices in the $(t+i-1)$-th copy. To compute $v$ as desired start at a vertex $(t+n,(q_n,\xi_n))$ witnessing the maximal value $\#_a(v)$ and iterate this partial function to get a sequence
of vertices $(t+i,(q_i,\xi_i))$. Then $v:=\lambda(q_1)\cdots\lambda(q_n)$ is as desired.

It is clear that all this can be done in time linear in the size of the graph.
\end{proof}

\subsection{Beyond regularity}\label{sec:beyond}

A straightforward generalization of stopwatch automata allows $\beta$ to take  value~$\infty$. An {\em unbounded stopwatch automaton} is a stopwatch automaton where $\beta$ is the function constantly $\infty$. 
We note that model-checking is undecidable already for {\em simple} such automata (see \cite[Proposition~1]{update} for a similar proof). These simple automata use two stopwatches $x,y$ that are nowhere active (i.e., $\zeta=\emptyset$), all guards check $z=0$ or $z\neq 0$, and all actions are either $z:=z+1$ or $z:=z\dot-1=\max\{z-1,0\}$ for some $z\in\{x,y\}$. 

\begin{proposition} There is no algorithm that given a simple unbounded stopwatch automaton decides whether it accepts the empty word.
\end{proposition}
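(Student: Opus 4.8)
The plan is to reduce the halting problem for two-counter (Minsky) machines to the emptiness-on-the-empty-word problem for simple unbounded stopwatch automata. Minsky machines with two counters are Turing-complete, and their halting problem (started from counters both zero) is undecidable; a simple unbounded stopwatch automaton has exactly the right primitives to simulate one, since its two stopwatches $x,y$ behave as unbounded nonnegative counters that can be incremented by $z:=z+1$, decremented (with truncation at $0$) by $z:=z\dotminus1$, and tested for zero by a guard $z=0$ or $z\neq 0$.

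First I would fix the standard model of a two-counter machine: a finite list of numbered instructions, each of the form ``$c_j := c_j+1$; goto $k$'' or ``if $c_j=0$ then goto $k$ else $c_j := c_j-1$; goto $k'$'', together with a distinguished halt instruction. Given such a machine $M$, I would build a simple unbounded stopwatch automaton $\A_M$ whose states encode the program counter of $M$ (plus the mandatory \textit{start} and \textit{accept} states and a handful of auxiliary intermediate states, since each instruction may need to chain a guard and an action along a transition, or split into the zero/nonzero branches via two guarded transitions). The two counters $c_1,c_2$ of $M$ are represented by the two stopwatches $x,y$, which are nowhere active ($\zeta=\emptyset$), so their values change only through actions on transitions, never by elapsing time. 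An increment instruction becomes a $\stackrel{0}{\to}$-transition with trivial guard and action $z:=z+1$; a test-and-decrement instruction becomes two transitions, one with guard $z=0$ going to the zero-branch target, and one with guard $z\neq0$ and action $z:=z\dotminus1$ going to the other target (here the truncating $\dotminus$ is harmless because that branch is only taken when $z\neq0$). The automaton begins at \textit{start} with both stopwatches $0$, which matches $M$'s initial configuration, and has a single transition from the state encoding $M$'s halt instruction to \textit{accept}. Every transition carries time $0$, so the only computation that can exist reads the empty word, and one exists iff $M$ halts.

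The key steps, in order: (1) make precise the instruction set of two-counter machines and recall the undecidability of their halting problem; (2) describe the construction of $\A_M$ as above, being careful that it is \emph{simple} in the sense of the paper — two nowhere-active stopwatches, guards only of the form $z=0$ or $z\neq0$, actions only $z:=z+1$ or $z:=z\dotminus1$ — which forces the use of intermediate states to decompose any instruction that would otherwise need a compound guard/action or a branch; (3) state and prove the simulation invariant: there is an initial computation of $\A_M$ reaching a state that encodes program counter $p$ with assignment $\xi$ iff the run of $M$ reaches instruction $p$ with $c_1 = \xi(x)$, $c_2 = \xi(y)$, by induction on the number of $M$-steps (one direction forward, using determinism of $M$ to get the converse); (4) conclude $\A_M$ accepts the empty word iff $M$ halts, and observe $\A_M$ is computable from $M$; (5) therefore a decision procedure for ``accepts the empty word'' would decide the halting problem, contradiction. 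Note every computation of $\A_M$ has all $t_i=0$, so it reads $\lambda(q_0)^0\cdots\lambda(q_{\ell-1})^0$, the empty word, regardless of the $\lambda$-labels, so those labels are irrelevant.

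The main obstacle is not conceptual but bookkeeping: ensuring the automaton stays within the narrow ``simple'' format while faithfully simulating branching and compound updates. Concretely, a single Minsky decrement-or-jump instruction must be realized by a small gadget of states and $\stackrel{0}{\to}$-transitions, and one must check that no \emph{spurious} computation arises — e.g., that the truncating $\dotminus$ cannot be exploited to reach \textit{accept} along a path that does not correspond to a real run of $M$, and that the zero-test guards $z=0$ / $z\neq0$ genuinely partition the possibilities so the simulation is deterministic. Since $\A_M$'s transitions all have duration $0$ and the word read is forced to be empty, there is no interaction with timing to worry about; the argument is essentially the classical reduction, and the only real work is presenting the gadgets cleanly. (The reference \cite{update}, Proposition~1, carries out a closely analogous reduction, so one may also simply cite the technique and spell out only the adaptation to the present definition.)
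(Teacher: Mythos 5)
Your proposal is correct and follows essentially the same route as the paper: a reduction from the halting problem for two-counter machines, with states encoding the program counter, the two nowhere-active stopwatches simulating the counters, and all transitions of duration $0$ so that only the empty word can be read. The paper simply uses an instruction set in which increments, decrements, and zero-tests are separate instructions, so no intermediate gadget states are needed, but this is an inessential difference.
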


\begin{proof} Recall, a {\em two counter machine} operates two variables $x,y$ called {\em counters} and is given by a finite non-empty sequence $(\pi_0,\ldots,\pi_{\ell}) $ of {\em instructions} $\pi_i$, namely, either $z:=z+1,z:=z\dot- 1$, ``Halt'' or ``if $z=0$, then goto $j$, else goto $k$'' where $z\in\{x,y\}$ and $j,k\le\ell$; exactly $\pi_{\ell}$ is ``Halt''. The computation (without input) of the machine is straightforwardly explained. It is long known that it is undecidable whether a given two counter machine halts or not. 

Given such a machine $(\pi_0,\ldots,\pi_{\ell})$ it is easy to construct a simple automaton that accepts the empty word if and only if the two counter machine halts. It has states $Q=\{0,1,\ldots,\ell\}$ understanding $\mathit{start}=0$ and $\ell=\mathit{accept}$; $\Sigma$ and $\lambda$
are unimportant, and $\Delta$ is defined as follows. If $\pi_i$ is the instruction $z:=z+1$, then
add the edge $(i,g,\alpha,i+1)$ where $g$ is trivial and $\alpha$ changes $z$ to $z+1$.  If $\pi_i$ is the instruction $z:=z\dot-1$, proceed similarly. If $\pi_i$ is ``if $z=0$, then goto $j$, else goto $k$'' add edges $(i,g,\alpha,j),(i,g',\alpha,k)$
where~$g$ checks $z=0$ and $g'$ checks $z\neq 0$ and $\alpha$ computes the identity. 
\end{proof}

What seems to be a middle ground between unbounded stopwatches and stopwatches with a constant bound is to let the bound grow with the length of the input word.

The definition of a stopwatch automaton $\A=(Q,\Sigma,X,\lambda,\beta,\zeta,\Delta)$ can be generalized letting $\beta:X\times \N\to\N$ be {\em monotone} in the sense that
$\beta(x,n)\le \beta(x,n')$ for all $x\in X,\  n,n'\in\N$ with $n\le n'$. We call this a {\em $\beta$-bounded stopwatch automaton} and call $B_\A:\N\to\N$ defined by
$$
B_\A(n):=\prod_{x\in X}(\beta(x,n) +1)
$$ 
the {\em bound of $\A$}.
For each $n\in\N$ we have a stopwatch automaton $\A(n):=(Q,\Sigma,X,\beta_n,\zeta,\lambda,\Delta)$ where $\beta_n:X\to\N$ maps $x\in X$ to $\beta(x,n)$; note $B_{\A(n)}=B_\A(n)$.

The {\em language $L(\A)$ accepted} by a $\beta$-bounded stopwatch automaton $\A$ contains a word $w$
over $\Sigma$ if and only if $w\in L(\A(|w|))$.

\begin{proposition} A  language is accepted by some stopwatch automaton  if and only if it is accepted by some $\beta$-bounded stopwatch automaton with bounded $\beta$.
\end{proposition}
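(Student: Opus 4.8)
The plan is to prove the two directions separately. The forward implication is immediate: given an ordinary stopwatch automaton $\A$ with bound function $\beta\colon X\to\N$, reinterpret it as a $\beta'$-bounded stopwatch automaton by setting $\beta'(x,n):=\beta(x)$ for all $n\in\N$. Then $\beta'$ is (trivially) monotone and bounded, the automaton $\A(n)$ obtained from $\A$ and $\beta'$ is literally $\A$ for every $n$, and hence the $\beta'$-bounded automaton accepts exactly $\{w : w\in L(\A(|w|))\}=L(\A)$.

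For the converse, suppose $L$ is accepted by a $\beta$-bounded stopwatch automaton $\A$ with $\beta$ bounded, say $\beta(x,n)\le M$ for all $x\in X$ and $n\in\N$. The first step is the observation that each map $n\mapsto\beta(x,n)$ is monotone and takes values in the finite set $\{0,\dots,M\}$, hence is eventually constant: pick $N_x$ with $\beta(x,n)=\beta(x,N_x)$ for all $n\ge N_x$, and let $N:=\max_{x\in X}N_x$. Then for every $n\ge N$ the stopwatch automaton $\A(n)$ is one and the same ordinary stopwatch automaton, call it $\A^\infty:=\A(N)$. Consequently $L$ and $L(\A^\infty)$ contain exactly the same words of length at least $N$.

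The remaining step reduces everything to regularity. Write $L=\big(L(\A^\infty)\cap\Sigma^{\ge N}\big)\cup\big(L\cap\Sigma^{<N}\big)$. The second set is a finite language, hence regular; the first is the intersection of $\Sigma^{\ge N}$ (regular) with $L(\A^\infty)$, which is regular by Theorem~\ref{thm:regular}; and regular languages are closed under intersection and union. Hence $L$ is regular, and Lemma~\ref{lem:regular} produces an ordinary stopwatch automaton accepting $L$, completing the proof. I do not foresee a genuine difficulty here; the only point that needs a moment's attention is that the bound function has not yet stabilised on the finitely many words shorter than $N$, which is precisely why one isolates those words in the decomposition above rather than trying to build a single automaton that works uniformly in the length of the input.
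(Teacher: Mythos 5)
Your proof is correct and follows essentially the same route as the paper: monotonicity plus boundedness forces $\beta(x,\cdot)$ to stabilise at some $N$, so $L(\A)$ agrees with the regular language $L(\A(N))$ on all but finitely many words and is therefore regular, whence Lemma~\ref{lem:regular} applies. Your explicit treatment of the (trivial) forward direction and the set-theoretic decomposition merely spell out steps the paper leaves implicit.
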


\begin{proof} Let $\A$ be a $\beta$-bounded stopwatch automaton for bounded $\beta$. There is $n_0\in\N$ such that $\beta(x,n)=\beta(x,n_0)$ for all $x\in X$ and $n\ge n_0$. Hence $L(\A(n_0))$ and $L(\A)$ contain the same words of length at least $n_0$. Since there are only finitely many shorter words, and $L(\A(n_0))$ is regular by Theorem~\ref{thm:regular}, also $L(\A)$ is regular. 
\end{proof}

Theorem~\ref{thm:modcheck} on feasible model checking generalizes:

\begin{corollary} Let $X$ be a finite set and assume $\beta:X\times\N\to\N$ is such that $\beta(x,n)$ is computable
from $(x,n)\in X\times \N$ in time $O(n)$.  
Then there is an algorithm that given a word $w$  and a $\beta$-bounded stopwatch automaton $\A$ with bound $B_\A:\N\to\N$ decides whether  $w\in L(\A)$
 in time $$
 O\big(\|\A\|^2 \cdot B_\A(|w|) \cdot  |w|\big).$$
\end{corollary}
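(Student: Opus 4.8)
The plan is to reduce directly to Theorem~\ref{thm:modcheck} by instantiating the parametric bound at the input length. Given a word $w$ and a $\beta$-bounded stopwatch automaton $\A=(Q,\Sigma,X,\lambda,\beta,\zeta,\Delta)$, first set $n:=|w|$ and, for each of the (fixed, finitely many) stopwatches $x\in X$, compute the number $\beta(x,n)$; by hypothesis each such evaluation runs in time $O(n)$, so this step costs $O(n)$ in total. This produces the function $\beta_n:X\to\N$ and hence the ordinary stopwatch automaton $\A(n)=(Q,\Sigma,X,\beta_n,\zeta,\lambda,\Delta)$, which differs from $\A$ only in that the parametric bound has been instantiated. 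By the definition of the language of a $\beta$-bounded stopwatch automaton, $w\in L(\A)$ if and only if $w\in L(\A(n))$, so it suffices to run the model-checker of Theorem~\ref{thm:modcheck} on $(w,\A(n))$ and return its answer.

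It remains to account for the running time. Theorem~\ref{thm:modcheck} bounds the cost of this call by $O(\|\A(n)\|^2\cdot B_{\A(n)}\cdot|w|)$. Unfolding definitions, $B_{\A(n)}=\prod_{x\in X}(\beta_n(x)+1)=B_\A(n)=B_\A(|w|)$, which already produces the factor $B_\A(|w|)$ of the claimed bound. For the factor $\|\A(n)\|$ the point is that $\A(n)$ carries the very same states, alphabet, labelling, activity relation and transition set---in particular the same guard and action circuits---as $\A$; the only fresh data are the binary encodings of the constants $\beta_n(x)$. Inspecting the proof of Theorem~\ref{thm:modcheck}, the factor $\|\A\|^2$ there is in essence $|\Delta|$ times the cost of evaluating a guard or action circuit, together with the cross-layer term $|Q|\cdot B_\A$; none of these quantities grows when passing from $\A$ to $\A(n)$, the instantiated bounds entering only through the clipping $\min\{\xi(x)+1,\beta_n(x)\}$ used to build $\stackrel{1}{\to}$. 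Adding the $O(n)$ spent on computing $\beta_n$---dominated because $|X|$ and $\beta$ are fixed---gives the claimed $O(\|\A\|^2\cdot B_\A(|w|)\cdot|w|)$.

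I expect the only genuinely delicate point to be this last piece of bookkeeping: a priori the instantiated constants $\beta_n(x)$ can have bit-length $\Theta(n)$, so one must make sure that ``forming $\A(n)$'' does not smuggle an extra factor of $n$ into the size that Theorem~\ref{thm:modcheck} then squares. The cleanest way to make this watertight is not to build $\A(n)$ as an explicit string at all, but to run the layered-graph construction from the proof of Theorem~\ref{thm:modcheck} verbatim, representing an assignment as a tuple of $|X|$ bounded integers and looking up $\beta(x,n)$ whenever the bound of a stopwatch $x$ is needed; this keeps the per-step and circuit-evaluation costs tied to $\|\A\|$ rather than to the bit-size of the instantiated bounds. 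Note that monotonicity of $\beta$---part of the definition of a $\beta$-bounded automaton---plays no role here.
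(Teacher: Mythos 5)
Your proposal is correct and matches the paper's (implicit) argument: the paper states this corollary without proof as an immediate generalization of Theorem~\ref{thm:modcheck}, obtained exactly as you describe by instantiating $\beta$ at $n=|w|$, using $w\in L(\A)$ iff $w\in L(\A(|w|))$, and running the layered-graph model-checker. Your extra bookkeeping about not materializing the instantiated constants into the encoding of $\A(n)$ is a sensible precaution that the paper does not spell out, but it does not change the route of the argument.
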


If $\beta(x,n)$ grows slowly in $n$ this can be considered tractable. Any growth, no matter how slow, leads to non-regularity:

\begin{proposition} Let $f:\N\to\N$ be unbounded and non-decreasing. Then there is a $\beta$-bounded stopwatch automaton $\A=(Q,\Sigma,X,\lambda, \beta,\zeta,\Delta)$ with $\beta(x,n)= f(n)$ for all $x\in X$ and all $n\in \N$  such that $L(\A)$ is not regular.
\end{proposition}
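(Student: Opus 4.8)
The plan is to build, for any given unbounded non-decreasing $f$, a $\beta$-bounded stopwatch automaton with a single stopwatch whose bound at input length $n$ equals $f(n)$, and which uses that stopwatch to "count" and thereby force a non-regular constraint on accepted words. The natural target is a language like $\{d^kr^k\mid k\in\N\}$ (restricted as needed), or rather an $f$-dependent thinning thereof, since the bound $f(n)$ controls how high the stopwatch can count on an input of length $n$. Concretely, I would take $\Sigma=\{d,r\}$, one stopwatch $x$ with $\beta(x,n)=f(n)$, active only in the state labelled $d$, and design transitions so that an accepting run spends some time $t$ in the $d$-state, then must spend exactly a matching amount in the $r$-state by decrementing/comparing $x$ — but a single stopwatch cannot both accumulate in $d$ and then be read down in $r$ in the required way, so I would instead use the guard mechanism: after leaving the $d$-block the stopwatch holds $\min\{t,f(n)\}$, and the $r$-block is governed by a second *counter* (nowhere active, but that is not allowed to be $\beta$-bounded in an unbounded way... ) — here one must be careful, so let me reorganize.

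Cleaner approach: use the state labelled $d$ with active stopwatch $x$ of bound $f(n)$; have the automaton go $\mathit{start}\to d$-state, accumulate $t$ units, then take a guarded transition requiring $x=\beta(x,n)$ (i.e. $x$ has saturated) into the $r$-state, then a transition to $\mathit{accept}$ with a guard forcing the total $r$-time to be some fixed relation to $\beta(x,n)$; but the $r$-time cannot be compared against $x$ since $x$ is frozen there. The key realization is that comparing the $r$-block length against $\beta(x,n)$ requires a second active stopwatch of bound $f(n)$ that starts from $0$ in the $r$-state — which is fine, since we may use two stopwatches both with bound $f(n)$. So: stopwatch $x$ active in the $d$-state, stopwatch $y$ active in the $r$-state, both bound $f(n)$; the transition out of the $d$-block has guard $x=f(n)$ (forcing at least $f(n)$ minutes of $d$), and the transition from the $r$-block to $\mathit{accept}$ has guard $y=f(n)$ (forcing at least $f(n)$ minutes of $r$). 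Then $L(\A)$ contains, among words of length $n$, exactly those of the form $d^pr^q$ with $p,q\geq f(n)$ and $p+q=n$.

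To finish, I would show $L(\A)$ is not regular by a pumping argument. For each $n$ there is at least one accepted word of length $n$ once $n\geq 2f(n)$ is achievable, i.e. for infinitely many $n$ (here one uses that $f$ is unbounded but its values are $\leq f(n)\leq n$ eventually, or simply picks $n$ of the form forcing $2f(n)\leq n$; since $f$ is non-decreasing and can grow arbitrarily slowly this holds for infinitely many $n$ — indeed for a cofinite set if $f(n)\leq n/2$ eventually, and otherwise one adjusts the construction to force only $p\geq f(n)$, $q\geq 1$). The cleanest version: let $\A$ accept $\{d^pr^q : p\geq f(p+q),\ q\geq 1\}$, built with one stopwatch $x$ of bound $f(n)$ active in the $d$-state and guard $x=f(n)$ on exit. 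Suppose this language were regular, recognized by a DFA with $N$ states. Choose $n$ with $f(n)>N$ (possible since $f$ is unbounded), and consider $w=d^{f(n)}r^{\,n-f(n)}$ (choosing $n$ large enough that $n-f(n)\geq 1$; this holds for infinitely many $n$ since $f$ grows without bound but each step of $f$ forces many $n$ with the same small value — formally, as $f$ is non-decreasing and unbounded, for each value $v$ the set $\{n:f(n)=v\}$ is an interval, and infinitely many $n$ satisfy $f(n)<n$). Pumping within the initial $d^{f(n)}$ block (a block of length $>N$) yields a word $d^{f(n)-j}r^{n-f(n)}$, still of total length $n-j$; but $f(n-j)=f(n)>f(n)-j$ for $j\geq 1$ (as $f$ is non-decreasing, $f(n-j)\leq f(n)$, and we need $f(n-j)>f(n)-j$, which holds when $f$ is not too steep — guaranteed by choosing the pumped block inside a maximal interval where $f$ is constant), contradicting membership.

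The main obstacle is the interplay between "input length $n$" appearing both as the length of the word and as the argument of $\beta$: the automaton's bound on a length-$n$ word is $f(n)$, so the comparison it can enforce is self-referential, and the pumping argument must exploit the *slow* growth of $f$ (the existence of long intervals where $f$ is constant, which follows from $f$ being unbounded and non-decreasing) to find a pumpable word whose pumped-down versions have strictly smaller length but *no smaller* required threshold $f(\cdot)$. I expect the bulk of the write-up to be (i) the careful description of the one- or two-stopwatch gadget and its accepted language, and (ii) the elementary but slightly delicate number-theoretic bookkeeping showing the relevant lengths $n$ and pump positions exist for every unbounded non-decreasing $f$.
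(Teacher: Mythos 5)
There is a genuine gap: your construction does not work for every unbounded non-decreasing $f$. Your ``cleanest version'' accepts $\{d^p r^q : p\ge f(p+q),\ q\ge 1\}$, and your two-stopwatch version accepts, among words of length $n$, exactly the words $d^pr^q$ with $p,q\ge f(n)$ and $p+q=n$. Both languages are \emph{empty} --- hence regular --- whenever $f(n)>n$ for all $n$ (e.g.\ $f(n)=2^n$), and the first is also empty for $f(n)=n$. The claim you lean on, that infinitely many $n$ satisfy $f(n)<n$ for every unbounded non-decreasing $f$, is false; slow growth of $f$ is not the hard case here, fast growth is. More generally, any design in which non-regularity is supposed to come from comparing a block length against the self-referential threshold $f(|w|)$ is bound to fail for fast-growing $f$, and your pumping argument additionally requires long intervals on which $f$ is constant, which again exist only for slow $f$.

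The paper's proof sidesteps the self-reference entirely. It takes $L$ to consist of the length-$t$ words of the form $a^sb^sc^*$ with $s<f(t)$, over a three-letter alphabet: the stopwatch bound is used only as an \emph{upper} cap on $s$ (enforced by a pair of stopwatches $x_a,y_a$ with $y_a$ kept one ahead of $x_a$, so that the guard $x_a<y_a$ certifies that $x_a$ has not saturated), the exact equality of the $a$- and $b$-blocks is enforced by a guard $x_a=x_b$ comparing two stopwatches, and the trailing $c^*$ decouples $s$ from the total length, so that unboundedness of $f$ alone guarantees that every $s$ occurs in $L$. Non-regularity then follows from the equality $a^sb^s$ by the ordinary Pumping Lemma, with no case analysis on the growth rate of $f$. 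To repair your proof you would need some analogue of this padding/decoupling idea; note also that a guard of the form ``$x=f(n)$'' is not a single fixed circuit in the paper's formalism, whereas the paper's construction only uses guards comparing stopwatches with each other or with fixed constants.
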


\begin{proof} Let $\Sigma$ be the three letter alphabet $\{a,b,c\}$, and let $L$ contain a length $t$ word over~$\Sigma$ if it has the form $a^{s}b^sc^*$ for some $s< f(t)$. 
Since $f$ is unbounded, $L$ contains such words for arbitrarily large $s$. It thus follows from the Pumping Lemma, that $L$ is not regular.

It suffices to define a $\beta$-bounded stopwatch automaton $\A$ such that 
that accepts a word of sufficiently large length $t$ if and only if it belong to $L$. 
The  states are $\mathit{start},\mathit{accept}, q_a,q_b,q_c$ with $\lambda$-labels $a,a,a,b,c$, respectively.
We use stopwatches $x_a,y_a,x_b$ all with bound $f(t)$ and 
declare $x_a,y_a$ active in $q_a$ and {\em start}, and $x_b$ active in $q_b$.
There are transitions from $\mathit{start}$ to~$q_a$, from $q_a$ to $q_b$, from $q_b$ to $q_c$, and from $q_c$ to {\em accept} -- described next.

The transition from $\mathit{start}$ to $q_a$ has guard $x_a=0$ and action $y_a:=1$. 
For sufficiently large~$t$, the bound $f(t)$ of $x_a$ is positive. Then
any initial accepting computation (of $\A$ on a word of length $t$)  spends 0 time in $\mathit{start}$, 
and thus starts $(\mathit{start},[0,0,0])\stackrel{0}{\to}(q_a,[0,1,0])$; we use a notation like $[1,2,3]$ to denote the assignment that maps $x_a$ to $1$, $y_a$ to $2$, and $x_b$ to 3. 

The transition from $q_a$ to $q_b$ has guard $x<y$ and trivial action. An initial accepting computation on a word of length $t$ 
can stay in $q_a$ for some time $r$ reaching $(q_a,[r,r+1,0])$ 
for $r<f(t)$, or reaching  $[f(t),f(t),0]$ for $r\ge f(t)$ due to the bound of $x,y$. In the latter case  the transition to $q_b$ is disabled and {\em accept} cannot be reached. Staying  in $q_a$ for any time $s< f(t)$ allows the transition to $q_b$.

The transition from $q_b$ to $q_c$ has guard $x_a=x_b$ and trivial action. The transition from $q_c$ to {\em accept} has trivial guard and action.
\end{proof}

We can now prove that stopwatch automata are exponentially more succinct than finite automata as was expressed in Proposition~\ref{prop:succinct}.

\begin{proof}[Proof of Proposition~\ref{prop:succinct}.] Consider the previous proof for the function $f$ constantly $k$. Clearly, then $L$ is regular. By the Pumping Lemma, a finite automaton accepting $L$ has at least $k$ states. The stopwatch automaton $\A$ accepts $L$ and has size $O(\log k)$. Indeed, the size of a binary encoding of $\A$ is dominated by the bits required to write down the bound $k$ of the stopwatches.
\end{proof}

\section{Discussion and a lower bound}\label{sec:concl}

We suggest the model-checking problem for stopwatch automata  and finite words (over some finite alphabet) as an answer to our central question in Section~\ref{sec:formalmodel}, the quest for a model for algorithmic laws concerning activity sequences. 
This section discusses to what extent this model meets the three desiderata listed in Section~\ref{sec:formalmodel}, and mentions some open ends for future work.

\subsection{Summary}

\paragraph{Expressivity} Stopwatch automata are highly expressive, namely, by Theorems~\ref{thm:regular} and~\ref{thm:buechi}, equally expressive as $\mathsf{MSO}$ (over finite words). In particular, \cite{drive} argued that Regulation 561 is expressible in $\mathsf{MSO}$, so it is also expressible by stopwatch automata. In Section~\ref{sec:beyond} we showed that a straightforward generalization of stopwatch automata can go even beyond $\mathsf{MSO}$. Future research might show whether this is useful for modeling actual laws. 

\begin{example} Imagine an employee who can freely schedule his work and choose among 
various activities $\Sigma$ to execute at any given time point. The employer favors an activity $a\in\Sigma$ and checks at random time-points that the employee used at least a third of his work-time on activity $a$ since the previous check. 
The set of $w\in\Sigma^*$ with 
$\#_a(w)\ge |w|/3$ is not regular but is accepted by a simple $\beta$-bounded stopwatch automaton with one stopwatch $x$ and bound $\beta(x,t)=\lceil t/3\rceil$.
\end{example}

\paragraph{Naturality} We stressed that expressivity alone is not sufficient, {\em natural} expressivity is required. This is an informal requirement, roughly, it means that the specification of a law should be readable, and in particular, not too large. In particular, as emphasized in Section~\ref{sec:buechi}, constants appearing in laws bounding durations of certain activities should not blow up the size of the formalization (like it is the case for $\LTL$). We suggest that our expression of Regulation 561 by a stopwatch automaton is natural. 

There is a possibility to use stopwatch automata as a {\em law maker}:
an interface that allows to specify laws in a formally rigorous way without assuming much mathematical education. It is envisionable to use graphical interfaces akin to the one provided by UPPAAL\footnote{\url{https://uppaal.org/}} to draw stopwatch automata. A discussion of this possibility as well as  the concept of ``readability'' is outside the scope of this paper.

\paragraph{Tractability} The main constraint of a model-checking problem as a formal model for algorithmic law is its computational tractability.
In particular, the complexity of this problem should scale well with the constants appearing in the law.  This asks for a fine-grained complexity analysis taking into account various aspects
of a typical input, and, technically, calls for a complexity analysis  in the framework of parameterized complexity theory.
Theorem~\ref{thm:modcheck} gives a model-checker for stopwatch automata. Its worst case time complexity upper bound scales transparently with the involved constants, and, most importantly, the runtime is not exponential in these constants. This overcomes a bottleneck of many model-checkers designed in the context of system verification (see Section~\ref{sec:timedmodal}). Theorems~\ref{thm:conscheck} and \ref{thm:scheduling} give similar algorithms for consistency-checking and scheduling.

\subsection{Parameterized model-checking}\label{sec:pswa}

We have an upper bound $O(\|\A\|^2\cdot B_\A\cdot |w|)$ to the worst case runtime of our model-checker. The troubling factor is $B_\A$: the runtime grows fast with the stopwatch bounds of the automaton.
Intuitively, these bounds stem from the constants mentioned by the law as duration constraints on activities. 
At least, this is the case for our formalization  of Regulation~561: we explicitly mentioned 17 constants $\bar t=(t_0,\ldots, t_{16})$ 
which determine our automaton, specifically its  bounds, guards and actions.
To wit,  $\bar t$ determines bounds on stopwatches as follows:
$$
 \begin{array}{|c|c|c|c|c|c|c|c|c|c|c|c|}
  x_{\textit{break}}&x_{cd} &x_{\mathit{day}} & x_{dr} & x_{dd} &x_{\mathit{week}}&x_{\textit{ww}}&x_{dw} ,x'_{dw}&x_{wr}&x_{\textit{pw}}&x_{c1},x_{c2}\\\hline
t_{16}&t_0+1&t_3+1&t_4                           &t_8+1    &t_{10}+1             &t_{12}+1&t_{11}+1     &t_{14} &t_{16}+1&t_{14}-t_{15}
\end{array}
$$
The other stopwatches have bounds independent of $\bar t\in\N^{17}$. For any choice of $\bar t$ we 
get an automaton $\A(\bar t)$ that accepts exactly the words that represent  activity sequences 
that are legal according  to the variant of Regulations 561 obtained by changing these constants to~$\bar t$. It is a matter of no concern to us that not all choices for~$\bar t$ lead to meaningful laws. We are interested in how the runtime 
of our model-checker for Regulation 561 depends on these constants. 
By Theorem~\ref{thm:modcheck} we obtain:
 
\begin{corollary} There is an algorithm that given $\bar t\in\N^{17}$ and a word $w$  decides whether  $w\in L(\A(\bar t))$ in time 
$$
O\big(t^2_{16}\cdot t_0\cdot t_3\cdot t_4\cdot t_8\cdot t_{10}\cdot t_{12}\cdot t_{11}^2\cdot t_{14}\cdot(t_{14}-t_{15})^2\ \cdot \ |w|\big).
$$
\end{corollary}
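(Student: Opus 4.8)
The plan is to obtain the statement as a direct instantiation of Theorem~\ref{thm:modcheck}. The algorithm first constructs $\A(\bar t)$ from the input $\bar t\in\N^{17}$ and then runs the model-checker of Theorem~\ref{thm:modcheck} on the pair $(\A(\bar t),w)$. Constructing $\A(\bar t)$ is cheap: its state set, stopwatch set and transition set have cardinalities bounded by an absolute constant independent of $\bar t$ (only the numerical constants occurring in the bounds $\beta$, in the guards and in the actions change with $\bar t$), and each guard and action is a Boolean circuit assembled from a constant number of comparisons and re-assignments on the $b_{\A(\bar t)}=\sum_{x}\lceil\log(\beta(x)+1)\rceil$ input bits, hence has size $O(b_{\A(\bar t)})=O(\log B_{\A(\bar t)})$. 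Thus $\|\A(\bar t)\|$ is bounded by a polylogarithmic function of $B_{\A(\bar t)}$, so both the construction time and the factor $\|\A(\bar t)\|^2$ in the bound of Theorem~\ref{thm:modcheck} contribute only a polylogarithmic overhead in the constants, which is what the $O(\cdot)$ in the statement is meant to suppress.

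The one substantive step is to read off the bound $B_{\A(\bar t)}=\prod_{x\in X}(\beta(x)+1)$ of $\A(\bar t)$ from the construction in Section~\ref{sec:automaton}. Here I would simply walk through the stopwatches using the table displayed just before the statement: the stopwatches whose bound depends on $\bar t$ are exactly $x_{\textit{break}},x_{cd},x_{\mathit{day}},x_{dr},x_{dd},x_{\mathit{week}},x_{\textit{ww}},x_{dw},x'_{dw},x_{wr},x_{\textit{pw}},x_{c1},x_{c2}$ with bounds (up to an additive constant) $t_{16},t_0,t_3,t_4,t_8,t_{10},t_{12},t_{11},t_{11},t_{14},t_{16},t_{14}-t_{15},t_{14}-t_{15}$, respectively, so their contribution to $\prod_x(\beta(x)+1)$ is, up to the ``$+1$'' offsets absorbed by the $O(\cdot)$, the product
$$
t_{16}^2\cdot t_0\cdot t_3\cdot t_4\cdot t_8\cdot t_{10}\cdot t_{12}\cdot t_{11}^2\cdot t_{14}\cdot(t_{14}-t_{15})^2,
$$
the three squared factors coming from the three pairs $x_{\textit{break}}/x_{\textit{pw}}$, $x_{dw}/x'_{dw}$ and $x_{c1}/x_{c2}$ that share (up to $\pm 1$) the same $\bar t$-dependent bound; every remaining stopwatch introduced along the way is one of the ``counters'', ``registers'' or ``bits'' whose bound is independent of $\bar t$ and hence an absolute constant, so the full product $B_{\A(\bar t)}$ equals the above up to a constant factor.

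Finally, substituting $B_{\A(\bar t)}=O\big(t_{16}^2\cdot t_0\cdot t_3\cdot t_4\cdot t_8\cdot t_{10}\cdot t_{12}\cdot t_{11}^2\cdot t_{14}\cdot(t_{14}-t_{15})^2\big)$ and the polylogarithmic bound on $\|\A(\bar t)\|$ into the running time $O\big(\|\A\|^2\cdot B_\A\cdot|w|\big)$ of Theorem~\ref{thm:modcheck} applied to $\A=\A(\bar t)$ yields exactly the claimed bound. I do not expect a genuine obstacle: the argument is pure bookkeeping on top of Theorem~\ref{thm:modcheck}. The point that needs a little care is the inventory in the previous paragraph --- one must make sure every $\bar t$-dependent stopwatch is counted with the correct multiplicity and that none of the numerous auxiliary bits, counters and registers secretly carries a $\bar t$-dependent bound --- and, to a lesser extent, the observation that the $\|\A(\bar t)\|^2$ factor of Theorem~\ref{thm:modcheck} is only polylogarithmic in the displayed product.
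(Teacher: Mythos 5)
Your proposal is correct and matches the paper's (essentially unstated) argument: the paper derives the corollary simply by citing Theorem~\ref{thm:modcheck} together with the displayed table of $\bar t$-dependent stopwatch bounds, which is exactly the bookkeeping you carry out, including the three squared factors from the pairs $x_{\textit{break}}/x_{\textit{pw}}$, $x_{dw}/x'_{dw}$ and $x_{c1}/x_{c2}$. You are in fact slightly more careful than the paper in accounting for the $\|\A(\bar t)\|^2$ factor (polylogarithmic in the product), which the paper's stated bound silently absorbs.
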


For the actual values of $\bar t$ in Regulation 561 the above product of the $t_i$'s evaluates to the number
$$
6006978697267786744332288000000000000000.
$$
This casts doubts whether the factor $B_\A$ in our worst-case runtime $O(\|\A\|^2\cdot B_\A\cdot |w|)$ should be regarded tractable. 
Can we somehow improve the runtime dependence from the constants? 

For the sake of discussion, note that $B_\A$ is trivially bounded by $t_\A^{c_\A}$ where $c_\A$ is the number of stopwatches of $\A$ and $t_\A$ is the largest bound of some stopwatch of~$\A$ (as in Section~\ref{sec:techsum}). Intuitively, $c_\A$ is ``small'' but $t_\A$ is not. In the spirit of parameterized complexity theory it is natural to ask  whether the factor ($B_\A$ or) $t_\A^{c_\A}$ can be replaced by  $f(c_\A)\cdot t_\A^{O(1)}$ for some computable function $f:\N\to\N$. We now formulate this question precisely
in the framework of parameterized complexity theory.

The canonical parameterized version of our model-checking problem is

\medskip
 
\fbox{\begin{tabular}{ll}
{\em Input:}& a stopwatch automaton $\A=(Q,\Sigma, X, \lambda,\beta, \zeta,\Delta)$ and  $w\in\Sigma^*$.\\
{\em Parameter:} &$\|\A\|$.\\
{\em Problem:}& $w\in L(\A)$ ?
\end{tabular}}
\medskip

\noindent Our model-checker of Theorem~\ref{thm:modcheck} 
witnesses that this problem is fixed-parameter tracta\-ble. Indeed, $\|\A\|^2\cdot B_\A\le f(\|\A\|)$
for some computable $f:\N\to\N$ because the circuits in $\A$ have size $\ge \log B_\A$.
 Intuitively, that $B_\A$ is bounded in terms of the parameter $\|\A\|$ means that the parameterized problem above models instances where $B_\A$ is ``small'', in particular $\beta$ takes ``small'' values. But there are cases of interest where this is not true: the constant $t_{10}:=10080$ in Regulation 561 is not ``small''.
In the situation of such an algorithmic law, the above parameterized problem is the wrong model. 

A better model parameterizes a model-checking instance $(\A,w)$ by the size of~$\A$ but discounts the stopwatch bounds. More precisely, consider the following parameterized problem:

\medskip
 
\fbox{\begin{tabular}{ll}
$p$-\textsc{SWA}&\\
{\em Input:}& a stopwatch automaton $\A=(Q,\Sigma, X, \lambda,\beta, \zeta,\Delta)$ and  $w\in\Sigma^*$.\\
{\em Parameter:} &$|Q|+|\Sigma|+|X|+|\Delta|$.\\
{\em Problem:}& $w\in L(\A)$ ?
\end{tabular}}
\medskip

 Note that the algorithm of
Theorem~\ref{thm:modcheck} does not witness that this problem would be fixed-parameter tractable.
We arrive at the precise question: 
\begin{center}\em
Is $p$-\textsc{SWA} fixed-parameter tractable?  
\end{center}

\subsection{A lower bound}\label{sec:lower}

In this section we prove that the answer to the above question is likely negative:

\begin{theorem}\label{thm:pswa} $p$-\textsc{SWA} is not fixed-parameter tractable unless every problem in the  W-hierarchy is fixed-parameter tractable. 
\end{theorem}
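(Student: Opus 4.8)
The plan is to reduce a known W[t]-complete (or W[t]-hard for each $t$) parameterized problem to $p$-\textsc{SWA}, exhibiting an fpt-reduction. The natural candidate is the parameterized model-checking problem for first-order logic, or more precisely $p$-\textsc{MC}($\FO$) over words, which is known to be complete for the union of the W-hierarchy (equivalently, $\mathsf{AW}[*]$); alternatively one can target $p$-\textsc{WSat} for bounded-depth circuits (weighted satisfiability), which captures the individual levels $\W[t]$. Since the statement only claims hardness for ``every problem in the W-hierarchy,'' it suffices to reduce from a problem that is hard for every level simultaneously, so $p$-\textsc{MC}($\FO$) over the class of words (which is $\mathsf{AW}[*]$-complete, hence $\W[t]$-hard for every $t$) is the cleanest source. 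First I would fix such a source problem $P$ with its standard parameter.

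The key technical step is to show: given an instance $(w,\varphi)$ of $P$ with $|\varphi|=k$, one can compute in fpt time a stopwatch automaton $\A_{w,\varphi}$ and a word $w'$ such that $w'\in L(\A_{w,\varphi})$ iff $w\models\varphi$, and crucially such that the parameter $|Q|+|\Sigma|+|X|+|\Delta|$ of $\A_{w,\varphi}$ is bounded by a computable function of $k$ alone — while all the dependence on $|w|$ is pushed into the stopwatch \emph{bounds} $\beta$. The intuition for why this is possible is exactly the succinctness phenomenon already established in Proposition~\ref{prop:succinct} and Section~\ref{sec:beyond}: a single stopwatch with a large bound can count up to $|w|$ using only $O(\log|w|)$ bits, so a bounded number of stopwatches with bounds polynomial in $|w|$ can encode positional/counting information about $w$ that would otherwise require $|w|^{O(1)}$ many states. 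Concretely, I would encode the existential/universal quantifier alternation of $\varphi$ by having the automaton guess or verify positions in $w$, where a ``position'' is stored as the value of a stopwatch (bound $|w|$), and the $k$ variables of $\varphi$ correspond to $O(k)$ stopwatches; the Boolean structure of $\varphi$ and the atomic formulas (comparisons of positions, letter predicates) are checked by guards, which are circuits of size polynomial in $k$ and $\log|w|$; the number of states needed to orchestrate this is a function of $k$ only. This mirrors the reductions showing finite-variable $\FO$ is hard for the W-hierarchy, with stopwatches playing the role of variables.

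The hard part will be arranging the \emph{alternation} of quantifiers faithfully within a deterministic-acceptance (``there exists an accepting computation'') model: a stopwatch automaton is existential/nondeterministic, so universal quantifiers cannot be simulated directly by branching. The standard workaround is to reduce not from $\mathsf{AW}[*]$-complete $\FO$ but from the $\W[t]$-complete fragment $\Sigma_t$-$\FO$ (bounded quantifier alternation) for each fixed $t$, and to handle the fixed number of alternations by an explicit product/complementation construction on automata — possible since, by Theorem~\ref{thm:regular}, languages of stopwatch automata are regular and hence closed under complement, though one must check (as in Lemma~\ref{lem:prod}) that the relevant closure constructions keep the parameter $|Q|+|\Sigma|+|X|+|\Delta|$ bounded in $t$ and do not blow up the stopwatch bounds beyond polynomial in $|w|$. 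An even cleaner route, which I would try first, is to reduce directly from the parameterized weighted satisfiability problem $p$-\textsc{WSat}($\Gamma_{t,d}$) for the class of depth-$d$, weft-$t$ circuits: here the automaton reads a word encoding a candidate weight-$k$ assignment, uses $O(k)$ stopwatches to record which $k$ input gates are set to $1$ (each as a number $\le$ number of variables, hence bound polynomial in the instance size), and a single guard circuit — of size polynomial in the circuit's size and $\log$(instance size), but with state-count and stopwatch-count bounded by a function of $k$ and the circuit depth/weft only — evaluates the circuit on that assignment. Verifying that this is an fpt-reduction (running time $f(\text{parameter})\cdot n^{O(1)}$, parameter of the output bounded by a function of the parameter of the input) is then the remaining bookkeeping, and the hardness for every level of the W-hierarchy follows because $p$-\textsc{WSat}($\Gamma_{t,d}$) is $\W[t]$-complete for suitable $d$.
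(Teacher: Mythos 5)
Your preferred route (the reduction from $p$-\textsc{WSat}$(\Gamma_{t,d})$ for each level $t$) is sound and rests on exactly the right insight, but it is genuinely different from the paper's proof. The paper instead gives a single fpt-reduction from $p$-\textsc{LCS} (parameterized by the number of words plus the alphabet size), which Bodlaender et al.\ showed is hard for every level of the W-hierarchy: the constructed automaton reads $w^m$ and, in each of $m$ rounds, guesses one position inside each input word by the amount of time it idles in a ``guess'' state, storing that position in a nowhere-active stopwatch whose bound is the word length, and using small guards to force the positions to increase and to carry a common letter. Your \textsc{WSat} reduction uses the same core trick --- the parameter counts $|Q|+|\Sigma|+|X|+|\Delta|$ but not the stopwatch bounds or the guard-circuit sizes, so positions/indices of size $n$ can be guessed via elapsed time and checked by a polynomial-size guard circuit that re-evaluates the weft-$t$ circuit on the guessed weight-$k$ assignment --- and it does establish $\W[t]$-hardness for every $t$, which is all the theorem needs. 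What the paper's choice of source buys is twofold: the reduction is especially natural because \textsc{LCS} is already a problem about words, and it preserves the alphabet, which the paper then exploits (via Pietrzak's $\W[1]$-hardness of binary-alphabet \textsc{LCS}) to get hardness even for the fixed alphabet $\{0,1\}$ --- something your circuit-based reduction would not give without extra work. One warning: your first proposed route via $\mathsf{AW}[*]$-complete $\FO$ model-checking should be abandoned rather than patched, since handling universal quantifiers by complementing stopwatch automata does not obviously preserve the parameter --- the only known complementation goes through determinizing $\B(\A)$, whose state count involves the factor $B_\A$, which is precisely what the parameterization discounts; since you fall back to the \textsc{WSat} route anyway, this does not undermine your proposal.
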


We refer to any of the monographs \cite{df,fg,df2} for a definition of the {\em W-hierarchy} $\W[1]\subseteq \W[2]\subseteq \cdots$. As mentioned in Section \ref{sec:formalmodel}, the central hardness hypothesis of parameterized complexity theory is that already the first level
$\W[1]$ contains problems that are not fixed-parameter tractable.
We thus consider Theorem~\ref{thm:pswa} as strong evidence that the answer to our question is negative.

We prove Theorem~\ref{thm:pswa} by a reduction from a parameterized version of the
 {\em Longest Common Subsequence Problem (LCS)}. This classical problem takes as inputs an alphabet $\Sigma$, finitely many words $w_0,\ldots,w_{k-1}$ over $\Sigma$ and a natural number $m$. The problem is to decide whether the given words have a common subsequence of length $m$: such a subsequence is a length $m$ word $a_0\cdots a_{m-1}$ over $\Sigma$ (the $a_i$ are letters from $\Sigma$) that can be obtained from every $w_i,i<k,$ by deleting some letters. In other words,
 for every $i<k$ there are $ j^i_{0}<\cdots<j^i_{m-1}<|w_i|$ such that for all $\ell<m$ 
the word $w_i$ has letter $a_\ell$ at position $j^i_{\ell}$. For example, both $bbaccb$ and $bbaacb$ are common subsequences of $abbaaccb$ and $bbacccacbb$.

This problem received considerable attention in the literature and has several natural  parameterized versions \cite{lcs1conf,lcs1,lcs2,lcs3}. 
We consider the following one:\footnote{In \cite{fg} the notation $p$-\textsc{LCS} refers to a different parameterization of LCS.}

\medskip
 
\fbox{\begin{tabular}{ll}
$p$-\textsc{LCS}&\\
{\em Input:}& an alphabet $\Sigma$, words $w_0,\ldots,w_{k-1}\in\Sigma^*$ for some $k\in\N$, and $m\in\N$. \\
{\em Parameter:} &$k+|\Sigma|$.\\
{\em Problem:}& do $w_0,\ldots,w_{k-1}$ have a common subsequence of length $m$ ?
\end{tabular}}

\medskip

The statement that $p$-\textsc{LCS} is fixed-parameter tractable means that it can be decided by an algorithm that on an instance $(\Sigma,w_0,\ldots,w_{k-1},m)$ runs in time
$$f(k+|\Sigma|)\cdot (|w_0|+\cdots +|w_{k-1}|)^{O(1)} $$ for some computable function $f:\N\to\N$.
The existence of such an algorithm is unlikely due to
the following result:

\begin{theorem}[\cite{lcs2}] \label{thm:lcs} $p$-\textsc{LCS} is not fixed-parameter tractable unless every problem in the  W-hierarchy is fixed-parameter tractable. 
\end{theorem}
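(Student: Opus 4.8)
The conclusion to establish is hardness for the \emph{entire} W-hierarchy, so it suffices to exhibit, for every fixed $t\ge 1$, an fpt-reduction from some $\W[t]$-complete problem to $p$-\textsc{LCS}: then $p$-\textsc{LCS} being fixed-parameter tractable would place every level $\W[t]$, and hence every problem in the W-hierarchy, in $\FPT$. The plan is to reduce from weighted satisfiability of \emph{$t$-normalized} propositional formulas, i.e.\ formulas built as $t$ alternating layers of conjunctions and disjunctions over literals. By the Normalization Theorem of Downey and Fellows (see \cite{df,fg,df2}), the problem of deciding, given such a formula $\varphi$ and $\ell\in\N$, whether $\varphi$ has a satisfying assignment of Hamming weight $\ell$ -- parameterized by $\ell$ -- is $\W[t]$-complete. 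Fixing $t$ throughout, I would map an instance $(\varphi,\ell)$ to an $p$-\textsc{LCS} instance whose alphabet size and number of strings are bounded by a function of $t+\ell$ alone.

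First I would introduce a bounded alphabet $\Sigma$ carrying only \emph{structural} symbols: markers delimiting the $\ell$ ``selection slots'', one layer-separator per alternation level $1,\ldots,t$, and a handful of bookkeeping letters (true/false flags and padding). Crucially, the individual variables of $\varphi$ -- of which there may be unboundedly many -- are \emph{not} assigned private letters; instead a selected variable is identified by its \emph{position} inside a uniform template block that lists all variables in a fixed order. The common subsequence sought, of a prescribed target length $m$, would be forced by a first group of \emph{selection strings} to spell out a canonical description of a weight-$\ell$ assignment: exactly $\ell$ variables, in increasing index order, in the positional encoding. A second group of \emph{verification strings}, a constant number per alternation layer, then encodes the layered Boolean structure of $\varphi$. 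Here the two logical polarities are captured by two different features of LCS: since a candidate subsequence must embed into \emph{all} strings simultaneously, the conjunctive layers of $\varphi$ are modelled directly by the ``for all strings'' quantification inherent in the problem, whereas the disjunctive layers are modelled within a single string by offering several mutually exclusive matching routes, of which the subsequence may traverse any one. Tuning $m$ so that the witness blocks certifying each layer are mandatory, one arranges that a length-$m$ common subsequence exists if and only if the encoded weight-$\ell$ assignment satisfies $\varphi$.

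The main obstacle, and the technical heart of the argument, is the interaction of the \emph{bounded-alphabet} restriction with the \emph{alternation depth}~$t$. With a bounded alphabet one cannot separate variable occurrences by distinct letters, so global consistency -- that the same variable is read with the same truth value across all verification strings and all layers -- must be enforced purely through the relative \emph{order and position} of matches, synchronised across the $k=f(t,\ell)$ strings. Pushing this from the shallow cases (which already recover the known hardness at the low levels of the hierarchy) up to genuine depth-$t$ alternation requires nesting the selection and verification gadgets $t$ deep, and then proving, in both directions, that a simultaneous embedding of the target length exists exactly when $\varphi$ is satisfiable at weight $\ell$. The soundness direction -- that no ``cheating'' subsequence can splice together matches coming from incompatible partial assignments at different layers -- is where the careful placement of separators and the precise length accounting do the real work; the completeness direction is routine once the canonical encoding of a satisfying assignment is fixed. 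Finally one verifies that $|\Sigma|$ and $k$ depend only on $t+\ell$ and that the construction runs in time polynomial in $|\varphi|$, so the map is an fpt-reduction. Establishing this for each $t$ yields $\W[t]$-hardness at every level and thus the claimed dichotomy.
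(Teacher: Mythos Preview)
The paper does not give a proof of this theorem at all: it is stated with attribution to \cite{lcs2} and used as a black box in the subsequent proof of Theorem~\ref{thm:pswa}. There is therefore nothing in the paper to compare your argument against.

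That said, your sketch is in the spirit of the original arguments of Bodlaender, Downey, Fellows, Hallett and Wareham: for each level $t$ one reduces from weighted satisfiability of $t$-normalized formulas, using a small alphabet and encoding the choice of variables positionally rather than by dedicated letters, with a family of strings whose cardinality is bounded in the source parameter. The high-level strategy you describe --- selection strings that pin down a weight-$\ell$ assignment and verification strings that mirror the alternation structure, with conjunctions handled by the universal quantification over strings and disjunctions by alternative matching routes within a string --- is the right one. What you have written, however, is an outline rather than a proof: the actual construction of the strings, the exact value of the target length $m$, and the soundness argument (that no length-$m$ common subsequence can arise from an inconsistent or non-satisfying ``pseudo-assignment'') are precisely the delicate parts, and you have only named them without carrying them out. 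If you intend this as a pointer to the literature, cite \cite{lcs1,lcs2} (and the treatments in \cite{df,fg}) and move on; if you intend it as a self-contained proof, substantial work remains.
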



\begin{proof}[Proof of Theorem~\ref{thm:pswa}:]
Let $(\Sigma, w_0,\ldots,w_{k-1},m)$ be an instance of $p$-\textsc{LCS}, so $\Sigma$ is an alphabet, $w_0,\ldots,w_{k-1}\in\Sigma^*$ and $m\in\N$. 
Let $w:=w_0\cdots w_{k-1}$ be the concatenation of the given words, and consider $w^m$, the concatenation of $m$ copies of $w$.
We construct a $P(\Sigma)$-labeled stopwatch automaton $\A=(Q,\Sigma, X, \lambda,\beta, \zeta,\Delta)$ that accepts~$w^m$ if and only if $w_0,\ldots,w_{k-1}$ have a common subsequence of length $m$. 

An initial accepting computation of $\A$ on $w^m$ proceeds in $m$ {\em rounds}, each round reads a copy of $w$.  In  round $\ell<m$ the computation guesses a position within each of the words $w_0,\ldots , w_{k-1}$ copied within $w$, and ensures they all carry the same letter. These positions are stored in registers (i.e., nowhere active stopwatches) $x_0,\ldots,x_{k-1}$ with bounds $|w_0|+1,\ldots,|w_{k-1}|+1$, respectively. Our intention is that the value of $x_i$ after round $\ell<m$ equals the position  $j^i_{\ell}$ in the definition of a common subsequence.

Our intention is that an initial accepting computation in round $\ell<m$ cycles though~$k$ many {\em guess parts} of the automaton. Within guess part 0, the computation reads $w_0$ (within copy $\ell$ of $w$ in $w^m$), within guess part 1 the computation reads~$w_1$ and so on.
The states of $\A$ are the states of the guess parts plus an an additional state {\em accept}.
Each guess part consists of a copy of the states {\em start}, {\em end}, and  {\em guess$(a)$} for $a\in\Sigma$. The $\lambda$-labels of {\em start} and {\em end} are $\Sigma$, the $\lambda$-label of {\em guess$(a)$} is $\{a\}$. The start state of $\A$ is {\em start} in guess part 0.

We intend  that the computation in guess part $i<k$ spends some time $t<|w_i|$ in {\em start}, then spends  exactly one time unit in some state {\em guess$(a)$}, and then spends time $|w_i|-t$ in {\em end} before switching to the next guess part. The position guessed is $t$ and stored as the value of~$x_i$. Writing momentarily $w_i= a_0 a_1\cdots a_{|w_i|-1}$ the computation reads the (possibly empty) word $a_0\cdots a_{t-1}$ in state {\em start}, then reads $a_t$ in state {\em guess$(a_t)$}, and then reads the (possibly empty) word $a_{t+1}\cdots a_{|w_i|-1}$ in state {\em end}.

We enforce this behavior as follows. 
There are transitions from {\em start} (in guess part~$i$) to {\em guess$(a)$} for every $a\in\Sigma$, and for every $a\in\Sigma$ from {\em guess$(a)$} to {\em end}.  We use a stopwatch~$y_i$ with bound $|w_i|+1$ active in all states of guess part $i$ and a stopwatch~$z$ with bound $2$ active in the states  {\em guess$(a)$}, $a\in\Sigma$, of any  guess part.  It will be clear that  initial accepting computations enter guess part $i$ with both $y_i$ and $z$ having value~0.
The transitions from {\em start} to {\em guess$(a)$}, $a\in\Sigma$, have guard checking 
$x_i<y_i< |w_i|$ and action setting $x_i:=y_i$. The transitions from {\em guess$(a)$}, $a\in\Sigma$, to {\em end} have  guard checking $z=1$ and  action setting $z:=0$. The state {\em end} in guess part $i<k-1$ has a transition to {\em start} in guess part $i+1$; for $i=k-1$ this transition is to {\em start} in guess part~0. These transitions have guard checking $y_i=|w_i|$ and action setting $y_i:=0$.

Observe that the computation spends time $|w_i|$ in guess part $i<k$ and increases the value of $x_i$. Hence the values of $x_i$ after each round form an increasing sequence of positions $<|w_i|$. We have to ensure that the values of $x_0,\ldots, x_{k-1}$ after a round are positions in the words $w_0,\ldots, w_{k-1}$, respectively, that carry the same letter. Write $\Sigma=\{a_0,\ldots,a_{|\Sigma|-1}\}$. We use a register~$\tilde x$ with bound $|\Sigma|-1$. In guess part 0, the action of the
transition from {\em guess$(a_j)$} to {\em end} also sets $\tilde x:=j$. In the guess parts $i<k$ for $i\neq 0$, the guards of the transitions from {\em start} to {\em guess$(a_j)$} check that $\tilde x=j$.

We count rounds using a register $\tilde y$ with  bound $m$. We let the action of the transition from {\em end} in guess part $k-1$ to {\em start} in guess part 0  set $\tilde y:=\tilde y+1$.
From copy 0 of {\em start} there is a transition to  {\em accept} with guard 
$\tilde y= m$. 
This completes the construction of $\A$.

\medskip

To prove the theorem, assume $p$-SWA is fixed-parameter tractable, i.e., there is an algorithm deciding $p$-SWA that on an instance $(\A,w)$ runs in time $f(k')\cdot |w|^{O(1)}$ where~$k'$ is the parameter of the instance, and $f:\N\to\N$ is a nondecreasing computable function. By Theorem~\ref{thm:lcs} is suffices to show that $p$-LCS is fixed-parameter tractable.

Given an instance $(\Sigma, w_0,\ldots,w_{k-1},m)$ of $p$-LCS answer ``no'' if $m>|w_0|$. Otherwise compute the automaton $\A$ as above and then compute an equivalent  stopwatch automaton~$\A'$ as in the construction behind Proposition~\ref{prop:plambda}. It is clear that $(\A',w^m)$ is computable from $(\Sigma, w_0,\ldots,w_{k-1},m)$ in polynomial time (since $m\le|w_0|$).
Then $(\A',w^m)$ is a ``yes''-instance of $p$-SWA if and only if $(\Sigma, w_0,\ldots,w_{k-1},m)$ is a ``yes''-instance of $p$-LCS.  Hence to decide  $p$-LCS it suffices to run the algorithm for $p$-SWA on $(\A',w^m)$. This takes time $f(k')\cdot |w^m|^{O(1)}$ where $k'$ is the parameter of $(\A',w^m)$. By construction, it is clear that $k'\le g(k+|\Sigma|)$ for some computable $g:\N\to\N$ (in fact, $k'\le (k+|\Sigma|)^{O(1)}$). Since $m\le |w_0|$, the  time $f(k')\cdot |w^m|^{O(1)}$ is bounded by $f(g(k+|\Sigma|))\cdot (|w_0|+\cdots +|w_{k-1}|)^{O(1)}$. Thus, $p$-LCS is fixed-parameter tractable.
\end{proof}

Recall, $p$-SWA is meant to formalize the computational problem to be solved by general purpose model-checkers in algorithmic law. Being general purpose, the set of activities $\Sigma$ should be part of the input, it varies with the laws to be modeled. Nevertheless one might ask whether the hardness result in Theorem~\ref{thm:pswa} might be side-stepped by restricting attention to some fixed alphabet $\Sigma$. 

This is unlikely to be the case. Let $p$-SWA($\{0,1\}$) denote the restriction of $p$-SWA to
 instances with $\Sigma=\{0,1\}$. We have the following variant of Theorem~\ref{thm:pswa}:

\begin{theorem} $p$-\textsc{SWA}$(\{0,1\})$ is not fixed-parameter tractable unless $\FPT=\W[1]$.\end{theorem}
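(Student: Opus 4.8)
The plan is to re-use the reduction from the proof of Theorem~\ref{thm:pswa} essentially verbatim, but to feed it only binary-alphabet instances of the longest common subsequence problem and to invoke a sharper hardness result available in that restricted setting. First I would record the obvious restriction $p$-\textsc{LCS}$(\{0,1\})$ of $p$-\textsc{LCS} to instances with $\Sigma=\{0,1\}$, so that the parameter $k+|\Sigma|$ is effectively the number $k$ of input strings. Over an unbounded alphabet $p$-\textsc{LCS} is hard for \emph{every} level of the W-hierarchy (this is what Theorem~\ref{thm:lcs} uses), but over a fixed alphabet only $\W[1]$-hardness is known; this, however, is exactly what the present statement needs. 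Concretely, it is known \cite{lcs3} that $p$-\textsc{LCS}$(\{0,1\})$ is $\W[1]$-hard.

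Next I would verify that the construction in the proof of Theorem~\ref{thm:pswa}, started from an instance $(\{0,1\},w_0,\dots,w_{k-1},m)$, never enlarges the alphabet: it produces a $P(\Sigma)$-labeled stopwatch automaton $\A$ over $\Sigma=\{0,1\}$, hence, after the conversion of Proposition~\ref{prop:plambda}, an ordinary stopwatch automaton $\A'$ over $\{0,1\}$, so that $(\A',w^m)$ is a bona fide instance of $p$-\textsc{SWA}$(\{0,1\})$. The remaining point is the parameter bookkeeping. In that construction the automaton has $O(k\cdot|\Sigma|)$ states, $2k+3$ stopwatches and $O(k\cdot|\Sigma|)$ transitions, quantities that do not depend on $m$ or on the lengths $|w_i|$; the $P(\Sigma)$-to-$\Sigma$ passage of Proposition~\ref{prop:plambda} multiplies the number of states only by $|\Sigma|$. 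With $|\Sigma|=2$ this bounds the parameter $k'$ of $(\A',w^m)$ by $k^{O(1)}$. Since moreover $(\A',w^m)$ is computable from the input in polynomial time (using $m\le|w_0|$, as in Theorem~\ref{thm:pswa}) and preserves yes/no-instances for the very same reason as there, the map $(\{0,1\},w_0,\dots,w_{k-1},m)\mapsto(\A',w^m)$ is an fpt-reduction from $p$-\textsc{LCS}$(\{0,1\})$ to $p$-\textsc{SWA}$(\{0,1\})$.

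Putting the two together, $p$-\textsc{SWA}$(\{0,1\})$ is $\W[1]$-hard, hence it is fixed-parameter tractable only if $\W[1]\subseteq\FPT$, that is, $\FPT=\W[1]$. The only genuinely external ingredient is the $\W[1]$-hardness of fixed-alphabet LCS; everything else is a careful re-reading of the reduction already carried out for Theorem~\ref{thm:pswa}. The one thing that has to be watched is precisely the invariant exploited above --- that this reduction neither touches the alphabet nor lets the parameter of the produced instance depend on $m$ --- which is why the same construction that gave the weaker ``whole W-hierarchy'' conclusion there yields the $\FPT=\W[1]$ conclusion here; I expect this bookkeeping through Proposition~\ref{prop:plambda} to be the only delicate point.
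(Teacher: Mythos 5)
Your proposal is correct and follows essentially the same route as the paper: reuse the reduction from Theorem~\ref{thm:pswa}, observe that it preserves the alphabet (and that the parameter of the produced instance is bounded in terms of $k+|\Sigma|$ alone), and invoke the $\W[1]$-hardness of fixed-alphabet LCS from \cite{lcs3}. The extra parameter bookkeeping you carry out is already implicit in the proof of Theorem~\ref{thm:pswa} and is consistent with the construction there.
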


\begin{proof}
 Note that the reduction  $(\Sigma, w_0,\ldots,w_{k-1},m)\mapsto (\A',w^m)$ (for $m\le |w_0|$) in the proof above constructs an automaton $\A'$ over the same alphabet $\Sigma$. It is thus a reduction from the restriction of $p$-LCS to instances with $\Sigma=\{0,1\}$ to $p$-\textsc{SWA}$(\{0,1\})$.
  Now, \cite{lcs3} showed that this restriction is $\W[1]$-hard. 
\end{proof}


\section*{Acknowledgements}
We thank Ra\"ul Espejo Boix for a critical reading of Section~\ref{sec:automaton}.
Part of this work has been done while the first author has been employed by Formal Vindications S.L. 
The second author received funding under the following schemes: ICREA Acad\`{e}mia, projects PID2020-115774RB-I00 and PID2019-107667GB-I00 of the Spanish Ministry of Science and Innovation, 2022 DI 051, Generalitat de Catalunya, Departament d’Empresa i Coneixement and 2017 SGR 270 of the AGAUR.
The second author leads the covenant between the University of Barcelona and Formal Vindications S.L.


\providecommand\noopsort[1]{}

\end{document}